\documentclass{cas-sc}

\usepackage{amsthm}
\usepackage{tabularx}
\usepackage{bm}
\usepackage[section]{placeins}  
\usepackage{algorithm}
\usepackage{algpseudocode}
\usepackage{mathtools}

\usepackage{tikz}
\usetikzlibrary{arrows.meta, positioning, decorations.pathreplacing,
  calc, shapes.geometric, fit, backgrounds}
\usepackage{pgfplots}
\pgfplotsset{compat=1.18}

\usepackage{natbib}

\newtheorem{theorem}{Theorem}
\newtheorem{proposition}[theorem]{Proposition}

\newtheorem{corollary}[theorem]{Corollary}

\newtheorem{remark}[theorem]{Remark}
\newtheorem{observation}[theorem]{Empirical Observation}

\newcommand{\calA}{\mathcal{A}}
\newcommand{\calI}{\mathcal{I}}

\newcommand{\reals}{\mathbb{R}}
\newcommand{\ZZ}{Z}                    
\newcommand{\bb}{\mathbf{b}}           
\newcommand{\rr}{\mathbf{r}}           
\newcommand{\pp}{\mathbf{p}}           
\newcommand{\zz}{\mathbf{z}}           
\newcommand{\uu}{\mathbf{u}}           
\newcommand{\vv}{\mathbf{v}}           
\newcommand{\yy}{\mathbf{y}}           
\newcommand{\bx}{\mathbf{x}}           
\newcommand{\bphi}{\bm{\phi}}          
\newcommand{\bvarphi}{\bm{\varphi}}    
\newcommand{\bpsi}{\bm{\psi}}          
\newcommand{\blambda}{\bm{\lambda}}    

\newcommand{\Ra}{\mathrm{Ra}}
\newcommand{\Rey}{\mathrm{Re}}
\DeclareMathOperator{\dive}{div}
\newcommand{\diag}{\operatorname{diag}}

\definecolor{cbBlue}{HTML}{0072B2}
\definecolor{cbOrange}{HTML}{E69F00}
\definecolor{cbTeal}{HTML}{009E73}
\definecolor{cbPurple}{HTML}{CC79A7}
\definecolor{cbSky}{HTML}{56B4E9}
\definecolor{cbVermillion}{HTML}{D55E00}

\tikzset{
  process/.style = {rectangle, draw, rounded corners, minimum width=2.4cm,
    minimum height=0.7cm, align=center, font=\small},
  arrow/.style = {-{Stealth[length=5pt]}, line width=0.95pt, draw=black!70},
  dasharrow/.style = {arrow, dashed},
  figpanel/.style = {font=\footnotesize\bfseries},
  note/.style = {font=\scriptsize, text=black!75, align=center},
  subtleframe/.style = {draw=black!45, rounded corners, line width=0.7pt},
  boxneutral/.style = {draw=black!60, rounded corners, fill=gray!8, line width=0.8pt},
  boxblue/.style = {draw=cbBlue!85!black, rounded corners, fill=cbBlue!10, line width=0.9pt},
  boxorange/.style = {draw=cbOrange!90!black, rounded corners, fill=cbOrange!14, line width=0.9pt},
  boxteal/.style = {draw=cbTeal!80!black, rounded corners, fill=cbTeal!12, line width=0.9pt},
  boxpurple/.style = {draw=cbPurple!80!black, rounded corners, fill=cbPurple!12, line width=0.9pt},
  labelblue/.style = {text=cbBlue!80!black, font=\scriptsize\bfseries},
  labelorange/.style = {text=cbOrange!90!black, font=\scriptsize\bfseries},
  labelteal/.style = {text=cbTeal!80!black, font=\scriptsize\bfseries},
  labelpurple/.style = {text=cbPurple!80!black, font=\scriptsize\bfseries},
}

\makeatletter
\let\cas@hypertarget@orig\hypertarget
\renewcommand{\hypertarget}[2]{%
  \def\cas@tmp{#1}%
  \ifx\cas@tmp\empty
    #2%
  \else
    \cas@hypertarget@orig{#1}{#2}%
  \fi
}
\makeatother

\shorttitle{Online Spectral Deflation for State Constrained Optimal Control Problems}
\shortauthors{Kadeethum et al.}

\title[mode = title]{Online Spectral Deflation for State Constrained Optimal Control Problems}

\author[siemens]{Teeratorn Kadeethum}
\ead{meen.kadeethum@siemens-energy.com}

\author[ucsc1,ucsc2]{Francesco Ballarin}[orcid=0000-0001-6460-3538]
\ead{francesco.ballarin@unicatt.it}

\author[llnl]{Youngsoo Choi}
\ead{choi15@llnl.gov}

\author[fsu]{Sanghyun Lee}
\ead{slee17@fsu.edu}

\affiliation[siemens]{organization={Siemens Energy, AI Lab},
  city={Orlando},
  country={USA}}

\affiliation[ucsc1]{
  organization={Department of Mathematics and Physics, Università Cattolica del Sacro Cuore},
  addressline={via Garzetta 48},
  city={Brescia},
  postcode={25133},
  country={Italy}
}

\affiliation[ucsc2]{
  organization={Department of Mathematics for Economic, Financial and Actuarial Sciences, Università Cattolica del Sacro Cuore},
  addressline={via Necchi 9},
  city={Milano},
  postcode={20123},
  country={Italy}
}

\affiliation[llnl]{organization={Lawrence Livermore National Laboratory},
  country={USA}}

\affiliation[fsu]{organization={Florida State University},
  city={Tallahassee},
  country={USA}}

\begin{document}

\begin{abstract}
Parametric PDE-constrained optimal control with pointwise state constraints requires solving a sequence of restricted Schur-complement systems over parameter-dependent inactive sets. Under a primal active-set strategy, each inactive-set solve is symmetric positive definite, but the active set may change non-smoothly with the parameter, causing the restricted operator to vary in dimension, sparsity pattern, and spectrum. This volatility limits the direct reuse of sparse factorizations, multigrid hierarchies, and instance-to-instance Krylov recycling. We propose a reusable spectral-deflation strategy that anchors the coarse space to a single full-domain reference Schur complement. Low reference eigenmodes are computed once, restricted online to each inactive set, and used as an A-DEF2 deflation basis for Jacobi-preconditioned conjugate gradient, with optional POD enrichment, Rayleigh–Ritz reselection, coarse-grid or analytical reference modes, and conditioning safeguards. Given the active set, the method preserves the high-fidelity inactive-set system and solves it to the prescribed CG tolerance; it accelerates the linear algebra rather than replacing the optimal-control solve with a surrogate. We interpret its effectiveness through a spectral-coherence perspective, motivated by interlacing and perturbation arguments and assessed empirically with principal-angle diagnostics. Across diffusion, convection–diffusion, nonlinear thermal, and conjugate-heat-transfer benchmarks, the proposed deflation reduces CG iterations by roughly 55--98$\%$, providing a hardware-independent measure of acceleration. In GPU deployments, the reusable basis also yields substantial wall-time gains over CPU sparse-direct and algebraic-multigrid baselines, reflecting both accelerator throughput and the fact that the reference basis is built once while competing solver structures are rebuilt per instance. With coarse-grid or analytical reference modes, the offline cost amortizes within a single parameter sweep; with a fine-grid eigensolve, the per-instance gains are partially precompute-limited. All reported timings isolate the inactive-set linear-solve kernel; reducing the surrounding active-set outer loop is outside the scope of this work.

\end{abstract}

\begin{keywords}
deflation \sep spectral coherence \sep state constrained optimal control \sep parametric PDE \sep Krylov methods \sep GPU computing
\end{keywords}

\maketitle

\section{Introduction}
\label{sec:intro}
PDE-constrained optimal control
problems~\citep{leugering2012constrained} arise in many areas of
science and engineering in which one seeks to influence a physical
process governed by a partial differential equation (PDE) while
satisfying operational, design, or safety requirements. In many
applications, the state variable is also subject to pointwise bounds.
Such state constraints are essential, for example, in
thermal-management design, where temperature must remain below
prescribed safety limits, but they also substantially increase the
computational complexity of the resulting optimization problem.

A representative example, and the primary application of interest
in this work, is the thermal management of large energy assets such
as power transformers: the device temperature must remain below
material safety limits across many design and operating scenarios.
Concretely, one seeks a scalar distributed heat source
\(u=u(\bx;\theta)\) that steers a scalar temperature field
\(y=y(\bx;\theta)\) toward a desired profile
\(y_d=y_d(\bx;\theta)\), while respecting a pointwise upper bound
\(\psi=\psi(\bx;\theta)\) imposed by safety, reliability, or
performance requirements. Here,
\(\bx\in\Omega\subset\mathbb{R}^d\) (\(d=1,2,3\)) denotes the spatial
variable, \(\Omega\) is a bounded Lipschitz domain, and
\(\theta\in\Theta\subset\mathbb{R}^{n_\theta}\) is a parameter vector
with $n_\theta$ components representing, for example, operating
conditions, material properties, geometries, or boundary data.

Although thermal management motivates this formulation, the
methodology applies to any PDE-constrained optimal control problem
with pointwise state constraints for which the Schur-reduced
inactive-set system is symmetric positive definite
(SPD)~\citep{hinze2009optimization,troeltzsch2010optimal,
borzi2011computational}. The benchmark suite spans symmetric
diffusion, non-symmetric convection--diffusion, nonlinear thermal, and steady
and transient conjugate heat transfer (CHT) operators.

Pointwise state constraints of this kind are typically enforced
through the primal--dual active-set (PDAS) method
\citep{hintermueller2003primal}, equivalently viewed as a
semismooth Newton method for the complementarity system; in this
work we use a primal active-set simplification of it
(Section~\ref{sec:schur}). At each active-set iteration, the degrees
of freedom are partitioned into an active set, where the state
constraint is attained, and an inactive set, where the state
remains free. After eliminating the control and
adjoint variables, one obtains a Schur-complement system restricted
to the current inactive set; this restricted SPD linear system is
the dominant cost of a single optimal-control solve, and accelerating
it is the focus of this work. Accordingly, all reported timings isolate
this repeated inactive-set solve; the surrounding active-set outer loop
is held fixed (Section~\ref{sec:setup}) and is outside our scope. The
formal derivation, the multiplier
associated with the pointwise state constraint, and the SPD argument
are deferred to Section~\ref{sec:schur}.

This cost is amplified in many-query settings: in design studies,
parameter sweeps, digital twins, and uncertainty
quantification~\citep{bendsoe2003topology,kouri2018optimization,
hartmann2018digital,choi2020gradient,amsallem2015design,
mcbane2021component,choi2019accelerating,choi2012simultaneous,
mcbane2022stress,tran2026time}, the same constrained optimal control problem
must be solved repeatedly for many parameter instances, and the
repeated restricted SPD solves dominate the overall computational
budget.

The main obstacle is that these inactive-set systems do not vary
smoothly across parameter instances. Even if the underlying PDE
operator changes only mildly, the active and inactive sets may change
abruptly. Consequently, the restricted operator may change in
dimension, sparsity pattern, and spectrum. This active-set volatility
limits the effectiveness of standard reuse strategies. Sparse direct
factorizations are expensive to rebuild and generally cannot be
reused after the index set changes. Algebraic multigrid (AMG)
hierarchies~\citep{borzi2003multigrid,borzi2011computational} may
need to be reconstructed or substantially modified when the
restricted operator changes. Block preconditioners based on the
Karush--Kuhn--Tucker (KKT) saddle-point
structure~\citep{rees2010optimal,schoberl2008symmetric,pearson2012new}
are likewise usually designed for a specific system instance and do
not directly address rapid reuse across changing inactive sets.
Classical Krylov subspace recycling
methods~\citep{parks2006recycling,wang2007large,saad2000deflated,
soodhalter2014krylov,gaul2014framework} transfer information from
one linear system to the next, but their effectiveness typically
relies on sufficiently gradual or correlated changes between
consecutive systems. In contrast, in the present setting, changes in
the active set can produce abrupt modifications of the restricted
operator through both index restriction and coefficient variation,
so a recycled subspace extracted from one instance need not remain
effective after a discontinuous active-set transition. This
observation motivates a reuse strategy that is robust not only to
parametric variation in the operator but also to the combinatorial
changes induced by active-set updates.

The central idea of this paper is to avoid recycling spectral
information from one restricted inactive-set system to the next.
Instead, we anchor the reusable information to a full-domain
reference Schur complement~\citep{choi2015practical}. We compute low eigenmodes of this
reference operator once, restrict these modes to the current inactive
set for each parameter instance, and use the restricted vectors as
an A-DEF2 (adapted-deflation
form 2; see Section~\ref{sec:adef2}) deflation
basis~\citep{nicolaides1987deflation,frank2001construction,
tang2009comparison}. The basis can be enriched online with
proper-orthogonal-decomposition (POD) snapshots from previously
solved instances, stabilized by Rayleigh--Ritz re-selection, and
protected by conditioning-based safeguards. To reduce offline setup
cost, we also consider coarse-grid eigenmode prolongation and
analytical eigenmodes for tensor-product reference operators; for
tensor-product Laplacian references the latter are available in closed
form as Kronecker products of one-dimensional sine vectors, shrinking
the offline eigensolve to a sub-second precompute and effectively
removing it from the amortization budget. This analytical shortcut is
specific to tensor-product references; the heterogeneous
conjugate-heat-transfer references instead use coarse-grid
prolongation, which breaks even after a handful of instances rather
than at the first. The
resulting algorithm uses only sparse matrix--vector products, a
Jacobi-preconditioned conjugate gradient (CG) solver, and small
dense coarse solves, making it well suited to GPU execution.
Figure~\ref{fig:intro_reusable_spectral_deflation} summarizes the
main idea.

\begin{figure}[!ht]
\centering
\begin{tikzpicture}[
  font=\footnotesize,
  >=Latex,
  node distance=0.6cm and 0.6cm,
  box/.style={
    draw=black!50,
    rounded corners=3pt,
    thick,
    align=center,
    text width=3.6cm,
    minimum height=1.5cm,
    inner sep=5pt
  },
  arrow/.style={->, thick, shorten >=2pt, shorten <=2pt, draw=black!60},
  chip/.style={
    draw=black!40,
    rounded corners=2pt,
    inner sep=2pt,
    font=\scriptsize\bfseries,
    align=center
  },
  bandlabel/.style={
    font=\small\bfseries\sffamily,
    text=black!65
  }
]

\node[box, fill=cbBlue!10] (many)
  {\textbf{Many-query control}\\[1pt]
   {\scriptsize same PDE-constrained problem,\\ many parameters $\theta$}};

\node[box, fill=cbOrange!14, right=of many] (active)
  {\textbf{Constraint shifts}\\[1pt]
   {\scriptsize active/inactive set\\ changes with each $\theta$}};

\draw[arrow] (many) -- (active);

\draw[decorate, decoration={brace, amplitude=6pt, raise=14pt},
      line width=0.9pt, draw=black!55]
  (many.north west) -- (active.north east);
\node[bandlabel] at
  ($(many.north)!0.5!(active.north) + (0, 1.15)$)
  {Challenges};

\node[box, fill=cbTeal!12,   below=2.8cm of many] (reuse)
  {\textbf{Reference eigenmodes}\\[1pt]
   {\scriptsize compute spectral basis once;\\ restrict per $\theta$}};

\node[box, fill=cbPurple!12, right=of reuse] (fast)
  {\textbf{Fast online solve}\\[1pt]
   {\scriptsize deflated CG with\\ restricted modes}};

\draw[arrow] (reuse) -- (fast);

\node[chip, fill=cbTeal!22,   above=0.22cm of reuse] (chipC) {offline once};
\node[chip, fill=cbPurple!22, above=0.22cm of fast]  (chipD) {online many times};

\draw[arrow, line width=1.2pt, draw=black!55]
  ($(many.south)!0.5!(active.south)$) --
  ($(chipC.north)!0.5!(chipD.north) + (0, 0.05)$);

\draw[decorate, decoration={brace, amplitude=6pt, mirror, raise=14pt},
      line width=0.9pt, draw=black!55]
  (reuse.south west) -- (fast.south east);
\node[bandlabel] at
  ($(reuse.south)!0.5!(fast.south) + (0, -1.15)$)
  {Our approach};

\node[
  draw=black!40,
  rounded corners=3pt,
  thick,
  fill=black!4,
  align=center,
  font=\scriptsize,
  inner sep=4pt,
  text width=9.0cm,
] (summary) at
  ($(reuse.south)!0.5!(fast.south) + (0, -2.20)$)
  {Instead of rebuilding a solver for each new active set, we reuse
   stable low-frequency spectral information from a reference
   problem to accelerate each online solve.};

\end{tikzpicture}
\caption{Overview of the proposed reusable spectral deflation
  strategy: compute a reference spectral basis once, restrict it
  to each per-instance inactive set, and use it as a deflation
  basis for the online Krylov solve. Full Schur-complement,
  active-set, and A-DEF2 details are in Section~\ref{sec:methodology};
  empirical support is in Section~\ref{sec:results}.}
\label{fig:intro_reusable_spectral_deflation}
\end{figure}

The reason this strategy can be effective is a spectral-coherence
phenomenon. Although the inactive-set operator changes from one
parameter instance to another, it is still obtained through
restriction of a full-domain operator, or more generally through
restriction of a nearby reference operator. Low-frequency eigenmodes
of the reference operator can remain informative after restriction,
whereas higher or clustered modes are typically more sensitive to
active-set changes. This viewpoint is motivated by principal-submatrix
interlacing~\citep{hwang2004cauchy} and subspace perturbation
theory~\citep{davis1970rotation}, and is supported in this work by
spectral diagnostics based on principal angles and solver-performance
tests. The classical results do not by themselves prove directional
coherence after restriction; they motivate the use of principal-angle
diagnostics and conditioning monitors to determine the effective
deflation rank.

This paper makes four main contributions. First, we introduce a
reusable spectral-deflation strategy for the repeated inactive-set
Schur-complement solves arising in parametric state-constrained
optimal control, without replacing the high-fidelity discrete
optimality system by a reduced or learned surrogate. Second, we
propose a reference-operator viewpoint in which low eigenmodes of a
full-domain Schur complement are restricted online to
parameter-dependent inactive sets. Third, we interpret the
effectiveness and limitations of this reuse through a
spectral-coherence perspective based on principal-submatrix structure
and subspace perturbation ideas. Fourth, we develop practical
basis-construction variants---including coarse-grid and analytical
reference modes, optional POD enrichment, Ritz stabilization, and
conditioning safeguards---and demonstrate substantial many-query
acceleration on steady benchmark families and extension tests
involving nonlinear thermal and all-at-once space--time settings.

On the optimization side, pointwise state constraints have been
widely studied through regularization, augmented Lagrangian,
semismooth Newton, and PDAS
formulations~\citep{casas1993boundary,bergounioux1999augmented,
hintermueller2003primal,hintermueller2006moreau,meyer2006optimal,
ito2003semi}. Related analyses and discretizations
also exist for semilinear and parabolic
problems~\citep{neitzel2009finite,gong2025finite,langer2021spacetime}.
On the linear algebra side, our work is closest to Krylov recycling
and subspace
deflation~\citep{parks2006recycling,wang2007large,saad2000deflated,
soodhalter2014krylov,gaul2014framework,cancrini2026scalabledeflatedconjugategradient},
but differs in how
spectral information is transferred: instead of recycling subspaces
extracted from previous restricted systems, we reuse a basis anchored
to a full-domain reference operator.

The method also differs from reduced basis and reduced order modeling
approaches~\citep{quarteroni2016reduced,hesthaven2016certified,
negri2013reduced}, which approximate the parameter-to-solution map
itself, and from reduced Krylov basis methods that build a
low-dimensional approximation space from Krylov vectors at a
high-fidelity parameter instance and solve subsequent instances
inside that reduced space~\citep{li2025reducedkrylov}.
A recent learning-augmented variant uses a DeepONet to predict
POD subspaces for Krylov
acceleration~\citep{levreroflorencio2026nspodacceleratingkrylovsolvers};
we differ in that our basis is built from classical eigenmodes of a
fixed reference Schur complement, requiring no training data and no
generalization control on a learned mapping.
More recently,
machine-learning surrogates---including physics-informed neural
networks~\citep{raissi2019physics} and neural-operator
architectures such as DeepONet~\citep{lu2021learning} and Fourier
neural operators~\citep{li2020fourier}---have been developed to
approximate PDE solution maps or parametric input--output operators.
These methods can provide rapid online prediction after training,
but they replace the high-fidelity solve by a learned approximation,
struggle to guarantee exact constraint satisfaction, and require
additional training, validation, and generalization control. In
contrast, the present method does not learn a surrogate for the
state, control, or solution operator. The inactive-set
Schur-complement system is still solved to the prescribed Krylov
tolerance, and the state constraint is enforced through the original
primal active-set structure. Our contribution is therefore
complementary to surrogate modeling: it accelerates the repeated
high-fidelity linear algebra while preserving the discrete
constrained optimality system.

Section~\ref{sec:methodology} develops the methodology in three
parts: problem formulation and scope (\S\ref{sec:problem}),
the proposed reusable spectral-deflation method (\S\ref{sec:method}),
and a spectral-coherence perspective interpreting when and why the
method is effective (\S\ref{sec:spectral}). Section~\ref{sec:setup} describes the
experimental methodology and computational setup, including the
benchmark suite, baselines, online protocol, and reported metrics.
Section~\ref{sec:results} presents numerical results,
Section~\ref{sec:limitations} discusses limitations and future work,
and Section~\ref{sec:conclusions} concludes. Method extensions to
non-symmetric and space--time settings are retained in
Appendix~\ref{sec:extensions} so that the main paper first
establishes the base formulation, algorithm, and spectral rationale.

\section{Methodology}
\label{sec:methodology}
\subsection{Problem formulation and scope}
\label{sec:problem}

We consider a family of parameterized distributed optimal control
problems with pointwise state constraints, as treated in the classical
PDAS literature
\citep{meyer2006optimal,hintermueller2003primal,hinze2009optimization,troeltzsch2010optimal}.
The main derivation is presented for steady-state problems whose
Schur-reduced inactive set systems are symmetric positive definite,
including symmetric diffusion, non-symmetric convection--diffusion,
and steady CHT operators. The non-symmetric
CHT operator and the all-at-once space--time extension are described in
Appendix~\ref{sec:extensions}; the nonlinear thermal case is in
Appendix~\ref{app:nonlinear}.
The parameter may affect the desired state, the state bound, the PDE
operator, or any combination of these. For clarity, we distinguish the
continuous optimization problem from the discrete finite-dimensional
systems that are actually solved by the proposed algorithm.

\subsubsection{Continuous problem statement}

For the baseline steady formulation, let
$\Omega \subset \reals^d$ be a fixed spatial domain,
$Y = H_0^1(\Omega)$ the state space, and $U = L^2(\Omega)$ the
control space. For each parameter instance
$\theta \in \Theta \subset \reals^{n_\theta}$, we consider
\begin{equation}\label{eq:ocp}
  \min_{y \in Y,\, u \in U}
  \; J(y,u;\theta)
  \,=\,
  \frac{1}{2}\|y - y_d(\theta)\|_{L^2(\Omega)}^2
  + \frac{\alpha}{2}\|u\|_{L^2(\Omega)}^2,
\end{equation}
subject to a steady linear state equation and the pointwise state constraint:
\begin{equation}\label{eq:pde}
  A(\theta) y = u \quad \text{in } \Omega,
  \qquad y = 0 \quad \text{on } \partial\Omega,
  \qquad y(\mathbf{x}) \le \psi(\mathbf{x};\theta) \quad \text{for a.e. } \mathbf{x} \in \Omega.
\end{equation}
The state equation is stated in strong operator form for notational
conciseness; the rigorous functional setting is the standard weak
formulation 
as in~\cite{troeltzsch2010optimal}.  Since the contribution of this paper
is entirely at the discrete level, the continuous statement serves only
as motivation.
Here $y$ is the state, $u$ is the distributed control,
$y_d(\theta)$ is the desired state, $\psi(\cdot;\theta)$ is the
state upper bound, and $\alpha > 0$ is the Tikhonov regularization
parameter. Equation~\eqref{eq:pde} is written in linear steady-state
operator form because that is the baseline setting used to derive the
reduced inactive set system solved in the main text; the nonlinear
thermal problem (Appendix~\ref{app:nonlinear}) and the parabolic
all-at-once space--time system (Appendix~\ref{sec:extensions}) are
handled separately.
The inequality constraint partitions the domain into an
active set $\calA = \{\mathbf{x} \in \Omega : y(\mathbf{x}) = \psi(\mathbf{x};\theta)\}$ and an
inactive set $\calI = \Omega \setminus \calA$.

\subsubsection{Reduced inactive set system solved in this paper}
\label{sec:schur}
\label{sec:pdas}

After spatial discretization, each parameter instance yields matrices
and vectors
\[
  A(\theta) \in \reals^{N \times N},
  \qquad \yy(\theta),\ \yy_d(\theta),\ \bpsi(\theta) \in \reals^{N},
\]
where $N$ is the number of degrees of freedom. Note that we reuse the
symbols $A(\theta)$ as in the continuous formulation for brevity, while
the discrete coefficient vectors $\yy, \yy_d, \bpsi$ are now written in
bold to distinguish them from their continuous counterparts.
We also reuse $\calA$ and $\calI$ for the discrete active and inactive
index sets, respectively, which replace their corresponding continuous
counterparts.
For a vector $\vv \in \reals^N$ and the discrete inactive index set
$\calI \subset \{1,\ldots,N\}$, we write
$\vv_{\calI} \equiv \vv|_{\calI} \in \reals^{n_{\calI}}$ for the
subvector restricted to the inactive degrees of freedom, where
$n_{\calI} \leq N$ denotes the cardinality of $\calI$; a similar
notation applies to rows and columns of matrices, and to restrictions
to the active index set $\calA$.
After eliminating the control and adjoint variables from the discrete
KKT system, one obtains the constrained
Schur-complement form
\citep{hinze2009optimization,troeltzsch2010optimal,rees2010optimal}
\begin{equation}\label{eq:schur}
  M(\theta)\,\yy(\theta) + \blambda(\theta)
  \,=\, \yy_d(\theta),
  \qquad
  M(\theta) = \alpha\,A(\theta)^\top A(\theta) + I,
\end{equation}
where $\blambda(\theta) \in \reals^N$ is the multiplier associated
with the pointwise state constraint $\yy(\theta) \le \bpsi(\theta)$.
The matrix $M(\theta) \in \reals^{N \times N}$ is SPD for any
$\alpha > 0$ and any $A(\theta)$, since
$A(\theta)^\top A(\theta)$ is symmetric positive semidefinite for
any real $A(\theta)$ (and positive definite when $A(\theta)$ has
full column rank, as in our discretizations); the identity shift
gives $\lambda_{\min}(M(\theta)) \ge 1$, so the smallest eigenvalue
of $M(\theta)$ is bounded below by~$1$ regardless of whether
$A(\theta)$ is symmetric or even invertible.

The continuous tracking and regularization terms use $L^2(\Omega)$
norms; because all experiments use uniform finite differences with
Euclidean-scaled discrete $L^2$ inner products, the mass matrix is
proportional to the identity and is absorbed into the scaling, so no
explicit mass matrix appears in~\eqref{eq:schur}. A general
finite-element discretization would instead introduce mass matrices in
the state-tracking term, the control penalty, and the reduced Schur
complement; the deflation framework developed below applies unchanged
in that setting, with $M(\theta)$ replaced by its mass-weighted
analogue.

The Schur reduction~\eqref{eq:schur} replaces the indefinite KKT
saddle-point system with the SPD operator $M = \alpha A^\top A + I$, at
the cost of squaring the state operator. For a uniform $n^d$ grid with
$N = n^d$ total degrees of freedom, the Laplacian has
$\lambda_{\max}(-\Delta) = O(h^{-2}) = O(n^2)$, so when the squared term
dominates
$\kappa(M) = O(1 + \alpha h^{-4}) = O(1 + \alpha n^4) = O(1 + \alpha
N^{4/d})$, which gives $O(\alpha N^2)$ in 2D and $O(\alpha N^{4/3})$ in
3D. This biharmonic-like conditioning makes a cold Jacobi-CG solve
expensive and leaves correspondingly more for deflation to recover. We adopt this SPD route deliberately. A
well-preconditioned block solve on the saddle-point system can avoid the
squaring, but such block preconditioners are tuned to a specific system
instance and inactive set and must be rebuilt as the active set changes;
the SPD Schur form, by contrast, exposes a single full-domain reference
operator whose low modes transfer across instances --- precisely the
reuse this paper exploits --- and follows the Schur-complement
factorization route of~\citet{choi2015practical}.

The active set is identified by a primal active-set iteration --- a
simplification of the full PDAS method
\citep{hintermueller2003primal,ito2003semi} in which the
active set is updated from the primal state violation alone, without
an explicit dual multiplier update (Appendix~\ref{app:pdas_algorithm}).
We use this primal identification throughout; the deflation framework
developed below is agnostic to whether the active set is identified by
the primal or the full primal--dual rule. The primal rule may converge
to a different active set than full PDAS on a given instance, but this
does not affect the solver comparisons reported here: every solver
(cold CG, deflated CG, sparse-direct, and AMG) is handed the identical
restricted SPD system~\eqref{eq:reduced} for the active set in force,
so the deflation results are unchanged by how that active set was
selected.
At each active-set iterate, the active components are fixed to the
bound, $\yy_{\calA}(\theta) = \bpsi_{\calA}(\theta)$, and the
multiplier vanishes on the inactive set,
$\blambda_{\calI}(\theta) = 0$. The remaining inactive components
therefore satisfy the restricted SPD system
\begin{equation}\label{eq:reduced}
  M_{\calI\calI}(\theta)\,\yy_{\calI}(\theta)
  \,=\,
  \yy_{d,\calI}(\theta) - M_{\calI\calA}(\theta)\,\bpsi_{\calA}(\theta),
\end{equation}
where $M_{\calI\calA}(\theta)$ denotes the submatrix of $M(\theta)$
extracted by selecting row indices from $\calI$ and column indices
from $\calA$, with similar notation for the remaining restricted
quantities. This restricted SPD system is the linear algebraic object
accelerated in the present work.

Methodologically, the contribution of this paper is an inner-solver
method for the repeated SPD systems~\eqref{eq:reduced} that arise
inside the active-set iteration. The experiments therefore benchmark
the restricted inactive set solve after active set identification,
which isolates the effect of the linear solver itself. This should be
read as a kernel study for the active-set inner loop rather than as a
full end-to-end timing study of every outer-loop component.

Across a many-query parameter sweep, the inactive set changes
non-smoothly with~$\theta$, so the matrix $M_{\calI\calI}(\theta)$ also
changes non-smoothly in size, sparsity pattern, and spectrum, in a way
that limits the direct reuse of multigrid hierarchies or recycled
Krylov subspaces.
The central question we address is whether one can nevertheless reuse
a \emph{single} spectral basis computed from a reference operator.
Figure~\ref{fig:schur_pdas} summarizes this reduction pathway.
The next section explains the proposed reusable deflation strategy, and
Section~\ref{sec:spectral} then explains why that reuse is often
justified.

\begin{figure}
\centering
\begin{tikzpicture}[scale=1.0]
  \node[figpanel] at (-3.6,3.15) {(a) Problem reduction};

  \node[boxneutral, minimum width=2.95cm, minimum height=2.15cm,
    align=center] (ocp) at (-3.6,1.45)
    {$\displaystyle\min_{y,u}\; J(y,u)$\\[3pt]
     s.t.\ $Ay = u$\\[2pt]
     $y \le \psi$};
  \node[note] at (-3.6,0.10) {state, control, adjoint};

  \draw[arrow] (-1.85,1.45) -- (-0.55,1.45)
    node[midway, above, font=\scriptsize] {eliminate $u,p$};

  \node[boxteal, minimum width=3.0cm, minimum height=2.15cm,
    align=center] (schur) at (1.55,1.45)
    {$M y = y_d$\\[3pt]
     $M = \alpha A^\top\!A + I$\\[2pt]
     $y \le \psi$};
  \node[labelteal] at (1.55,0.10) {single SPD operator};

  \node[figpanel] at (6.0,3.15) {(b) Active-set partition};

  \begin{scope}[shift={(6.0,1.45)}]
    \def\n{8}
    \def\h{0.3}
    \draw[gray!35, very thin] ({-\n/2*\h},{-\n/2*\h}) grid[step=\h]
      ({\n/2*\h},{\n/2*\h});
    \draw[black!60, line width=0.8pt]
      ({-\n/2*\h},{-\n/2*\h}) rectangle ({\n/2*\h},{\n/2*\h});

    \foreach \i/\j in {0/0,0/1,0/2,0/3,
                       1/0,1/1,1/2,1/3,
                       2/0,2/1,2/2,2/3,
                       3/1,3/2,3/3,
                       -1/0,-1/1} {
      \fill[cbOrange!38] ({\i*\h-\h/2},{\j*\h-\h/2}) rectangle
        ({\i*\h+\h/2},{\j*\h+\h/2});
      \draw[cbOrange!85!black, line width=0.45pt]
        ({\i*\h-\h/2},{\j*\h-\h/2}) rectangle ({\i*\h+\h/2},{\j*\h+\h/2});
    }

    \node[labelorange] at (0.35,0.30) {$\calA$};
    \node[labelblue] at (-0.65,-0.65) {$\calI$};

    \fill[cbOrange!38] (1.75,0.78) rectangle (2.08,1.08);
    \draw[cbOrange!85!black, line width=0.45pt] (1.75,0.78) rectangle (2.08,1.08);
    \node[font=\scriptsize, anchor=west] at (2.18,0.93) {$y_i = \psi_i$};

    \fill[white] (1.75,0.40) rectangle (2.08,0.70);
    \draw[cbBlue!85!black, line width=0.45pt] (1.75,0.40) rectangle (2.08,0.70);
    \node[font=\scriptsize, anchor=west] at (2.18,0.55) {$y_i < \psi_i$};
  \end{scope}

  \node[figpanel] at (1.55,-1.25) {(c) Solver target};

  \begin{scope}[shift={(1.55,-3.10)}]
    \fill[gray!8] (-2.45,-1.0) rectangle (-0.35,1.0);
    \draw[black!60, line width=0.8pt] (-2.45,-1.0) rectangle (-0.35,1.0);
    \fill[cbBlue!12] (-2.45,0.20) rectangle (-1.72,1.0);

    \draw[cbOrange!85!black, dashed, line width=0.9pt] (-1.72,-1.0) -- (-1.72,1.0);
    \draw[cbOrange!85!black, dashed, line width=0.9pt] (-2.45,0.20) -- (-0.35,0.20);

    \node[font=\small] at (-1.40,0.0) {$M$};
    \node[labelblue] at (-2.08,0.63) {$M_{\calI\calI}$};
    \node[labelorange] at (-1.03,0.63) {$M_{\calI\calA}$};
    \node[labelorange] at (-2.08,-0.42) {$M_{\calA\calI}$};
    \node[labelorange] at (-1.03,-0.42) {$M_{\calA\calA}$};

    \draw[arrow] (0.15,0.0) -- (1.20,0.0)
      node[pos=0.95, above=16pt, font=\scriptsize] {restrict to inactive DOFs};

    \node[boxteal, minimum width=3.0cm, minimum height=1.15cm,
      align=center] at (3.10,0.0)
      {$M_{\calI\calI}\,\yy_{\calI} = \mathrm{rhs}$\\[2pt]
       \scriptsize SPD system solved online};
  \end{scope}
\end{tikzpicture}
\caption{Reduction pathway used throughout the paper. The continuous
  state constrained optimal control problem (OCP) is discretized, reduced to the SPD Schur
  complement~$M = \alpha A^\top A + I$, and then restricted by the
  primal active-set identification to
  the inactive set system~\eqref{eq:reduced}. The proposed method
  accelerates this last solve across many parameter instances. Panel~(c)
  is drawn after reordering the degrees of freedom so that inactive
  indices come first; this is a presentation choice for the figure
  only, the algorithm itself selects rows and columns of $M$ regardless
  of the underlying ordering.}
\label{fig:schur_pdas}
\end{figure}

All numerical experiments use finite differences on uniform
Cartesian grids; the finite-difference stencils
(Appendix~\ref{app:discretization}), the active-set pseudocode
(Appendix~\ref{app:pdas_algorithm}, Algorithm~\ref{alg:pdas}), and
the remaining implementation choices
(Appendix~\ref{app:implementation}) are deferred to the appendices
so that the main methodology can focus on the linear algebraic
structure.

\subsection{Proposed method: reusable spectral deflation}
\label{sec:method}

This section describes the solver used for the repeated inactive set
systems~\eqref{eq:reduced}. The construction has three layers:
\begin{enumerate}
  \item a fixed reference eigenspace computed once from
  $M_{\mathrm{ref}}$,
  \item optional online enrichment from previously solved instances,
  \item a stability policy that accepts, trims, or rejects a candidate
  basis based on the conditioning of the projected coarse problem.
\end{enumerate}
The goal is not to approximate the full parameter-to-solution map as in
reduced order modeling~\citep{quarteroni2016reduced,hesthaven2016certified,benner2017model,negri2013reduced,zahr2015progressive,choi2020sns,fries2022lasdi,choi2019space,hoang2021domain},
but to make each reduced linear solve faster while preserving the target
CG tolerance. At a high level, the method combines classical spectral
deflation ideas~\citep{nicolaides1987deflation,saad2000deflated,gutknecht2012spectral,gaul2014framework}
with a reference eigenspace computed once, restriction to the current
inactive set, optional online POD information, and explicit conditioning
safeguards.

A useful way to think about the overall cost over $N_{\mathrm{inst}}$
parameter instances is
\[
  C_{\mathrm{setup}}
  +
  \sum_{m=1}^{N_{\mathrm{inst}}}
  \bigl(C_{\mathrm{defl\text{-}CG}}^{(m)} + C_{\mathrm{online}}^{(m)}\bigr),
\]
where $C_{\mathrm{setup}}$ is the one-time reference eigensolve and
$C_{\mathrm{online}}^{(m)}$ includes only inexpensive per-instance work
such as restriction, QR orthogonalization, optional POD updates, and
small dense coarse solves. This many-query viewpoint is essential:
reusability matters more than achieving the single best basis for one
instance in isolation.  Concrete per-instance and amortized
wall-time measurements are reported in Section~\ref{sec:results}.

\subsubsection{Online deflated solve (A-DEF2)}
\label{sec:adef2}

Let $\ZZ \in \reals^{n_\calI \times r}$ be a matrix with
full column rank $r$ which collects the deflation vectors.
Throughout the methodology section we adopt the convention that
matrices are written in italic uppercase ($M, A, \ZZ, E, P, Q$) and
vectors in bold lowercase ($\yy, \uu, \bb, \bx$).
The construction of $\ZZ$ is described in the next subsections.
For notational brevity, we drop the dependence on $\theta$ and write $\bx$ and~$\bb$ for the unknown and
right-hand side of the reduced system~\eqref{eq:reduced} throughout this
subsection.
The A-DEF2 deflated CG algorithm~\citep{gaul2014framework}
modifies the standard CG iteration as follows (see Figure~\ref{fig:adef2}).
Define the coarse Gram matrix
\[
  E \;=\; \ZZ^\top M_{\calI\calI}\,\ZZ \in \reals^{r\times r}
\]
and the coarse correction operator
\begin{equation}\label{eq:coarse}
  Q \;=\; \ZZ E^{-1} \ZZ^\top \in \reals^{n_{\calI} \times n_{\calI}}.
\end{equation}
Assume to be given an initial guess $\bx_0 \in \reals^{n_{\calI}}$ for the current parameter instance. The initial guess may be a previous restricted solution (i.e., the previous solution restricted to $n_{\calI}$) when warm start is enabled, otherwise simply the zero vector. A-DEF2 starts by computing the deflated initial guess as
\[
  \bx_0^{\mathrm{def}}
  = \bx_0 + Q\,(\bb - M_{\calI\calI}\,\bx_0).
\]
Next, define the $M_{\calI\calI}$-orthogonal projector (or deflation projection) as
\begin{equation}\label{eq:proj}
  P \;=\; I - \ZZ E^{-1} \ZZ^\top M_{\calI\calI} \in \reals^{n_{\calI} \times n_{\calI}}.
\end{equation}
The Krylov stage then applies preconditioned CG to the projected system
$P^\top M_{\calI\calI}\,\bx = P^\top \bb$. The only dense linear algebra
is the factorization of the small matrix $E$.
The full A-DEF2 procedure is summarized in
Algorithm~\ref{alg:adef2}: a coarse-correction predictor (line~3)
followed by a preconditioned-CG corrector on the projected system.
The corrector is CG on the symmetric positive-semidefinite projected
operator $P^\top M_{\calI\calI} = M_{\calI\calI} -
M_{\calI\calI}\ZZ E^{-1}\ZZ^\top M_{\calI\calI}$ (symmetric because
$E = \ZZ^\top M_{\calI\calI}\ZZ$ is) with the SPD Jacobi preconditioner
$K = \mathrm{diag}(M_{\calI\calI})^{-1}$; the deflation acts through the
$P^\top M_{\calI\calI}\,\pp$ products applied to the search direction
$\pp$ in the loop, rather than by separately projecting the search
directions. We use the A-DEF2 variant because, among deflated-CG forms,
it remains robust when the coarse system $E$ is solved
inexactly~\citep{tang2009comparison,gaul2014framework}.
The A-DEF2 recurrence is the standard one
of~\citet{gaul2014framework}; we restate it here to make the loop
structure explicit and to track the unprojected residual
$\rr_{\mathrm{orig}} = \bb - M_{\calI\calI}\,\bx$ (lines~7 and~11),
on which convergence is declared in the original, undeflated system
(Appendix~\ref{app:cg_stopping}). This is the residual underlying the
accuracy claim that the deflated solves match the direct solution to
solver tolerance. The conditioning of
$E$ also serves as a runtime safeguard: two thresholds,
$\tau_{\mathrm{safe}}$ and $\tau_{\mathrm{cond}}$, govern the basis
construction (Algorithm~\ref{alg:safe},
Remark~\ref{rem:tau_safe}) and the solve-time fallback
(Algorithm~\ref{alg:online_deflation}, Remark~\ref{rem:tau_cond})
respectively; their roles are detailed where each algorithm is
introduced.

\begin{algorithm}[t]
\caption{A-DEF2 deflated CG~\citep{gaul2014framework}}
\label{alg:adef2}
\begin{algorithmic}[1]
\Require Reduced operator $M_{\calI\calI}$, right-hand side $\bb$,
  deflation matrix $\ZZ$ (with $E = \ZZ^\top M_{\calI\calI}\,\ZZ$),
  preconditioner $K$ (Jacobi in our experiments),
  initial guess $\bx_0$, tolerance $\varepsilon$.
\Ensure Approximate solution $\bx \approx M_{\calI\calI}^{-1}\bb$.
\State $Q \gets \ZZ E^{-1}\ZZ^\top$ \Comment{coarse correction operator}
\State $P \gets I - \ZZ E^{-1}\ZZ^\top M_{\calI\calI}$
  \Comment{$M_{\calI\calI}$-orthogonal projector}
\State $\bx \gets \bx_0 + Q\,(\bb - M_{\calI\calI}\,\bx_0)$
  \Comment{deflated initial guess (coarse correction step)}
\State $\rr \gets P^\top(\bb - M_{\calI\calI}\,\bx)$
  \Comment{projected residual (drives the CG recurrence)}
\State $\zz \gets K\,\rr$
\State $\pp \gets \zz$
\State $\rr_{\mathrm{orig}} \gets \bb - M_{\calI\calI}\,\bx$
  \Comment{unprojected residual of the original system}
\While{$\|\rr_{\mathrm{orig}}\|_2 > \varepsilon\,\|\bb\|_2$}
  \Comment{stop on the unprojected residual}
  \State $\alpha \gets (\rr^\top \zz) / (\pp^\top P^\top M_{\calI\calI}\,\pp)$
  \State $\bx \gets \bx + \alpha\,\pp$
  \State $\rr_{\mathrm{orig}} \gets \rr_{\mathrm{orig}} - \alpha\,M_{\calI\calI}\,\pp$
    \Comment{$=\bb - M_{\calI\calI}\bx$; reuses the matvec $M_{\calI\calI}\pp$}
  \State $\rr_{\mathrm{new}} \gets \rr - \alpha\,P^\top M_{\calI\calI}\,\pp$
  \State $\zz_{\mathrm{new}} \gets K\,\rr_{\mathrm{new}}$
  \State $\beta \gets (\rr_{\mathrm{new}}^\top \zz_{\mathrm{new}})
    / (\rr^\top \zz)$
  \State $\pp \gets \zz_{\mathrm{new}} + \beta\,\pp$
  \State $\rr \gets \rr_{\mathrm{new}}$,\ \ $\zz \gets \zz_{\mathrm{new}}$
\EndWhile
\State \Return $\bx$
\end{algorithmic}
\end{algorithm}

\begin{figure}
\centering
\begin{tikzpicture}[scale=0.95]
  \draw[subtleframe, fill=gray!4] (-4.6,-0.35) rectangle (4.6,3.55);
  \node[note, anchor=north] at (0,3.42)
    {current inactive space $\reals^{n_{\calI}}$};

  \node[boxblue, minimum width=3.8cm, minimum height=1.95cm,
    align=center] (coarse) at (-2.25,1.05)
    {Coarse subspace\\[2pt]
     $\mathrm{range}(\ZZ)$\\[2pt]
     \scriptsize small dense coarse solve $E^{-1}$};

  \node[boxorange, minimum width=3.8cm, minimum height=1.95cm,
    align=center] (iter) at (2.25,1.05)
    {Iterative complement\\[2pt]
     $\mathrm{range}(P)$\\[2pt]
     \scriptsize preconditioned CG on $P^\top M$};

  \draw[arrow, cbBlue!85!black] (-2.25,2.30) -- (-2.25,2.05)
    node[above=0.18cm, align=center, font=\scriptsize]
    {$Q = \ZZ E^{-1}\ZZ^\top$\\[-1pt](coarse correction)};

  \draw[arrow, cbOrange!90!black] (2.25,2.30) -- (2.25,2.05)
    node[above=0.18cm, align=center, font=\scriptsize]
    {$P = I - \ZZ E^{-1}\ZZ^\top\!M$\\[-1pt](deflation projector)};

  \node[boxneutral, minimum width=3.25cm, minimum height=0.85cm,
    align=center, font=\small] (x0) at (0,-1.15)
    {$\bx_0^{\mathrm{def}} = \bx_0 + Q\,(\bb - M\,\bx_0)$};
  \node[note] at (0,-1.80) {deflated initial guess};

  \draw[dasharrow, cbBlue!85!black]
    (-2.25,0.05) -- (-2.25,-0.68) -| (-1.55,-1.15);
  \draw[dasharrow, cbOrange!90!black]
    (2.25,0.05) -- (2.25,-0.68) -| (1.55,-1.15);
\end{tikzpicture}
\caption{Schematic of A-DEF2 deflated CG.  The deflation vectors $\ZZ$
  split the current inactive space into a small coarse subspace
  ($\mathrm{range}(\ZZ)$, blue), handled in closed form by the Gram
  matrix solve via $Q$, and a complementary subspace
  ($\mathrm{range}(P)$, orange), handled by preconditioned CG on the
  projected operator $P^\top M_{\calI\calI}$.  The bottom box depicts
  only the deflated initial guess (the coarse-correction predictor in
  Algorithm~\ref{alg:adef2}, line~3); the iteration loop on
  $\mathrm{range}(P)$ is then executed in lines~5--16 of
  Algorithm~\ref{alg:adef2}.}
\label{fig:adef2}
\end{figure}

\subsubsection{Core idea: reference eigenmode basis}
\label{sec:eig}

Let $\theta_{\mathrm{ref}} \in \reals^{n_{\theta}}$ be a fixed reference parameter.
We first build the \emph{reference} Schur complement matrix $M_{\mathrm{ref}}$ from the fixed reference parameter instance $\theta_{\mathrm{ref}}$, and on the
full domain, as
\[
M_{\mathrm{ref}} = \alpha A(\theta_{\mathrm{ref}})^\top
A(\theta_{\mathrm{ref}}) + I.
\]
Next, we compute the $r_{\mathrm{pool}}$ smallest eigenpairs $(\lambda_j, \bphi_j)$ of $M_{\mathrm{ref}}$, namely
\begin{equation}\label{eq:eig}
  M_{\mathrm{ref}}\,\bphi_j = \lambda_j\,\bphi_j, \qquad
  j = 1, \ldots, r_{\mathrm{pool}},
\end{equation}
via shift-invert Lanczos (ARPACK~\citep{lehoucq1998arpack}).
Using low eigenmodes as coarse vectors is standard in deflation and two-level
preconditioning~\citep{nicolaides1987deflation,frank2001construction,tang2009comparison}.
We distinguish the candidate-pool size $r_{\mathrm{pool}}$ (the number of
reference eigenpairs computed here) from the retained deflation rank
$r_{\mathrm{defl}} \le r_{\mathrm{pool}}$ (the number of vectors placed in
the deflation basis~$\ZZ$): simple restriction uses all of them
($r_{\mathrm{defl}} = r_{\mathrm{pool}}$), while the optional Ritz
reselection below retains the best $r_{\mathrm{defl}} < r_{\mathrm{pool}}$.

Note that eigenmodes are computed once, on a fixed reference parameter and
neglecting restriction to the inactive set. Upon receiving an actual parameter
instance $\theta$ and its inactive set $\calI$, the simplest adaptation is a
\emph{simple restriction (eig)} of the precomputed eigenmodes,
\[
\ZZ_{\mathrm{eig}}(\theta) = \mathrm{QR}([\bphi_1|_{\calI}, \ldots, \bphi_{r_{\mathrm{pool}}}|_{\calI}]),
\]
i.e., restricting the computed eigenmodes to the current inactive set $\calI$
and QR-orthonormalizing the resulting columns to restore orthonormality
after restriction.
Section~\ref{sec:spectral} explains why this approach is expected to
yield a basis that retains alignment with the smallest-eigenvalue
subspace of $M_{\calI\calI}$: the leading reference eigenmodes of
$M_{\mathrm{ref}}$ remain useful approximations of that subspace as
long as the chosen deflation rank $r_{\mathrm{defl}}$ stays within the
spectral coherence regime described in Section~\ref{sec:spectral}.

\subsubsection{Ritz re-selection}
\label{sec:ritz}

When simple restriction is insufficient (typically in the 2D high-rank
regime where individual restricted eigenmodes can rotate into incoherent
directions), an optional Rayleigh--Ritz post-processing step can rescue the
basis at modest cost.
After restriction to~$\calI$ and QR re-orthonormalization, the
$r_{\mathrm{pool}}$ restricted reference modes form an overcomplete
candidate pool
$\widetilde{\ZZ} \in \reals^{n_\calI \times r_{\mathrm{pool}}}$ with
orthonormal columns ($r_{\mathrm{pool}} > r_{\mathrm{defl}}$).  A
standard Rayleigh--Ritz
projection~\citep{saad2003iterative,stewart2002krylov} re-selects the
best $r_{\mathrm{defl}}$ vectors within that pool by solving the projected eigenproblem
\[
  \widetilde{\ZZ}^\top M_{\calI\calI}\,\widetilde{\ZZ}\,c_j
  = \mu_j\,c_j,
  \qquad j = 1,\ldots,r_{\mathrm{pool}},
\]
and retaining the $r_{\mathrm{defl}}$ eigenvectors associated with the smallest Ritz
values:
$\ZZ_{\mathrm{eig+Ritz}}(\theta) = [\widetilde{\ZZ}c_1,\ldots,\widetilde{\ZZ}c_{r_{\mathrm{defl}}}]$.
Because $r_{\mathrm{pool}} > r_{\mathrm{defl}}$, this is a genuine subspace selection, not merely a
rotation of the same span.  It adapts the deflation subspace to the
current $M_{\calI\calI}$ at modest cost ($O(r_{\mathrm{pool}}^3)$ dense eigenproblem).
The same Ritz step can also be applied to the merged eigenmode + POD
candidate basis produced by Algorithm~\ref{alg:safe}; see
Remark~\ref{rem:ritz_scope}.

\begin{remark}[Regime-dependent Ritz behavior]
\label{rem:ritz}
Ritz cleanup is treated in this paper as an optional, regime-dependent
post-processing step rather than as the default method. In the more
fragile 2D high-rank regime it can act as a useful rescue by discarding
restricted vectors that have rotated into the incoherent part of the
spectrum. In several 3D settings, however, the extra dense eigenproblem
adds cost without guaranteeing a better deflation space for CG. The
methodological point here is simply that Ritz cleanup is a selective
stabilization device, not a universal enhancement; quantitative
comparisons are reported with the numerical results
(Section~\ref{sec:results}).
\end{remark}

\subsubsection{Online POD enrichment}
\label{sec:pod}

POD enrichment is an optional, secondary component of the basis: in our
experiments the reference eigenmodes dominate the deflation budget and
POD contributes only ${\sim}5$--$10\%$ further iteration reduction
(Appendix~\ref{app:budget}). Its clearest value is as one rescue
mechanism, alongside Ritz reselection and QR-combination, in the fragile
2D high-rank regime.

While the Ritz variant of the reference eigenmode basis construction allows
some extent of adaptivity, such choice alone is typically insufficient for parametric
cases, unless $r_{\mathrm{pool}}$ is very large (which, however, increases the computational cost).
The main limitation of the Ritz variant is that it uses information coming
only from the current parameter instance.
Instead, in the context of parametric problems, the customary approach is to build the basis
iteratively employing information not only from the current parameter instance, but
also from the ones that were explored previously.

Let $\{\theta^{(1)}, \theta^{(2)}, \hdots, \theta^{(m)}\} \subset \reals^{n_\theta}$ be a sequence of parameter instances.
Assume that the full state solutions $\yy^{(1)} \equiv \yy(\theta^{(1)}),\ldots,\yy^{(m-1)} \equiv \yy(\theta^{(m-1)})$ are available, while we are interested in computing the state for the current parameter instance $\theta^{(m)}$.
Following the standard POD snapshot
philosophy used in PDE model reduction and optimal
control communities~\citep{kunisch2001galerkin,negri2013reduced,zahr2015progressive,choi2019accelerating}, we assemble the
snapshot matrix from the current archive,
\[
  S_{m-1}
  = [\yy^{(1)} - \bar{\yy}_{m-1}, \ldots, \yy^{(m-1)} - \bar{\yy}_{m-1}],
  \qquad
  \bar{\yy}_{m-1}
  = \frac{1}{m-1}\sum_{j=1}^{m-1} \yy^{(j)},
\]
and compute its SVD to obtain POD modes
$\bvarphi_1, \ldots, \bvarphi_{r_p}$ that capture
dominant parametric variation.
To obtain a candidate basis for $\theta^{(m)}$,
the modes are then restricted to $\calI_m \equiv \calI(\theta^{(m)})$ and orthonormalized:
$\ZZ_{\mathrm{pod}}(\theta^{(m)}) = \mathrm{QR}([\bvarphi_1|_{\calI_m}, \ldots,
\bvarphi_{r_p}|_{\calI_m}])$.

After the restricted solve for instance~$m$ has converged, the corresponding
full state $\yy^{(m)}$ is reconstructed by combining the inactive
components with the active set bound and appended to the snapshot archive
for future POD updates. In other words, the POD basis available at
instance~$m$ contains information from the first $m-1$ solves only.

\subsubsection{Conditioning-based safe merge}
\label{sec:combined}

Let $\{\theta^{(1)}, \theta^{(2)}, \hdots, \theta^{(m)}\} \subset \reals^{n_\theta}$ be a sequence of parameter instances. At parameter instance $\theta^{(m)}$, we now have two sets of basis functions.
The first set, denoted by $\Phi_{\mathrm{eig}} \in \reals^{n_{\calI(\theta^{(m)})} \times r_{\mathrm{defl}}}$ in the following, comes from the reference eigenmode construction, namely the set defined by either $\ZZ_{\mathrm{eig}}(\theta^{(m)})$ or $\ZZ_{\mathrm{eig+Ritz}}(\theta^{(m)})$.
The second set, denoted by $\Phi_{\mathrm{pod}} \in \reals^{n_{\calI(\theta^{(m)})} \times r_p}$ in the following, comes from the POD snapshots, and coincides with $\ZZ_{\mathrm{pod}}(\theta^{(m)})$. For the sake of a simpler notation, in the rest of this subsection we will drop the explicit dependence on $\theta^{(m)}$, although everything should be understood as dependent on the current parameter instance and the previous parameters history.

While the sets $\Phi_{\mathrm{eig}}$ and $\Phi_{\mathrm{pod}}$ are separately orthonormal, their naive juxtaposition does not guarantee linear independence, let alone orthogonality.
To safely combine them, while still guaranteeing both linear independence and orthonormality of the output (each accepted update reorthogonalizes and renormalizes the candidate, see Algorithm~\ref{alg:safe} below), we rely on Algorithm~\ref{alg:safe} (see also Figure~\ref{fig:bases} for a high-level graphical overview).
The candidate basis $\ZZ$ is initialized with $\Phi_{\mathrm{eig}}$ on
line~1; eigenmodes serve as the primary basis for two reasons.  First,
$\Phi_{\mathrm{eig}}$ is computed offline once and is therefore
available at every instance, including the first ($m = 1$), when no
snapshot history exists.  Second, $\Phi_{\mathrm{eig}}$ targets the
smallest-eigenvalue subspace of $M_{\mathrm{ref}}$, which is the
slow-convergence direction of CG that deflation is designed to remove.
The corresponding coarse Gram matrix is computed on line~2.  The loop
of lines~3--16 then safely enriches the primary basis with POD modes
from $\Phi_{\mathrm{pod}}$ in the order they are returned by the SVD
(by decreasing singular value).  The $j$-th POD mode is extracted
(line~4), orthonormalized against the current $\ZZ$ (lines~5 and~9),
appended into a trial basis $\ZZ_{\mathrm{trial}}$ (line~10), and
accepted as the new $\ZZ$ if the trial Gram matrix
$E_{\mathrm{trial}} = \ZZ_{\mathrm{trial}}^\top M_{\calI\calI}\,
\ZZ_{\mathrm{trial}}$ remains well conditioned (line~15).
The trial may be rejected for two reasons:
\begin{enumerate}
\item \emph{Ill-conditioning}, controlled by $\tau_{\mathrm{safe}}$.
If $\mathrm{cond}(E_{\mathrm{trial}}) > \tau_{\mathrm{safe}}$, the
enrichment is \emph{stopped}, not merely skipped (lines~11--14).
The rationale is that POD modes are ordered by decreasing singular
value: once mode~$j$ has been discarded for conditioning reasons, we
conservatively stop, because later modes carry less energy and, in our
enrichment policy, are not used to replace a rejected mode.
\item \emph{Linear dependence}, controlled by $\tau_{\mathrm{dep}}$.
If the orthogonalized candidate has norm below $\tau_{\mathrm{dep}}$
(lines~5--8), the mode is \emph{skipped} but enrichment continues:
linear dependence does not invalidate later POD modes, it merely says
that the current mode adds no new information beyond~$\ZZ$.
\end{enumerate}

By construction, every accepted update keeps $\ZZ$ orthonormal: line~5
removes the component of the candidate inside $\mathrm{range}(\ZZ)$
and line~9 normalizes the residual.  The output basis is therefore
orthonormal whenever the input $\Phi_{\mathrm{eig}}$ is.

Algorithm~\ref{alg:safe} assumes that the input $\Phi_{\mathrm{eig}}$
already satisfies $\mathrm{cond}(\Phi_{\mathrm{eig}}^\top
M_{\calI\calI}\,\Phi_{\mathrm{eig}}) \le \tau_{\mathrm{safe}}$. In our
experiments this holds: $\Phi_{\mathrm{eig}}$ comes from an iterative
eigensolver (ARPACK~\citep{lehoucq1998arpack}) at a moderate
rank $r_{\mathrm{defl}}$ chosen well below the conditioning wall described in
Appendix~\ref{sec:conditioning_wall}. Should this assumption fail at
some instance, the solve-time fallback governed by
$\tau_{\mathrm{cond}}$ (Remark~\ref{rem:tau_cond}) catches the
ill-conditioned case and reverts to standard CG. For the default
safe-policy runs reported as the recommended configuration, this
fallback was not triggered. The raw-eigenmode stress tests in
Table~\ref{tab:2d_iter} deliberately push the deflation rank past this
safe high-rank regime and do trigger the fallback (one instance of
\texttt{2d\_asym} at $r = 100$, all instances at $r = 200$); they are
reported precisely to delineate where the construction guard ceases to
hold.

\begin{figure}
\centering
\begin{tikzpicture}[scale=1.02]
  \node[figpanel] at (-5.2,4.15) {(a) Eigenmodes};
  \node[labelblue] at (-5.2,3.65) {offline once};
  \begin{scope}[shift={(-5.2,0)}]
    \node[boxneutral, minimum width=3.25cm, minimum height=0.85cm,
      font=\normalsize] (mref) at (0,3.1) {$M_{\mathrm{ref}}$};
    \node[note] at (0,2.45) {full-domain reference};
    \draw[arrow] (0,2.10) -- (0,1.52);
    \node[boxblue, minimum width=3.25cm, minimum height=0.82cm,
      font=\normalsize] (eigv) at (0,1.00) {$\bphi_1,\ldots,\bphi_{r_{\mathrm{pool}}}$};
    \draw[arrow] (0,0.42) -- (0,-0.16);
    \node[note] at (1.20,0.12) {restrict to $\calI_m$};
    \node[boxblue, minimum width=3.25cm, minimum height=0.82cm,
      font=\normalsize] (zeig) at (0,-0.70) {$\ZZ_{\mathrm{eig}}$};

    \node[note, align=left, text width=3.2cm] at (0,-1.85)
      {fixed reference basis\\captures slow modes};
  \end{scope}

  \node[figpanel] at (0,4.15) {(b) POD snapshots};
  \node[labelpurple] at (0,3.65) {online growth};
  \begin{scope}[shift={(0,0)}]
    \node[boxneutral, minimum width=3.25cm, minimum height=0.85cm,
      font=\normalsize, align=center] (snaps) at (0,3.1)
      {$\yy^{(1)},\ldots,\yy^{(m-1)}$};
    \node[note] at (0,2.45) {past converged solves};
    \draw[arrow] (0,2.10) -- (0,1.52);
    \node[boxpurple, minimum width=3.25cm, minimum height=0.82cm,
      font=\normalsize] (podv) at (0,1.00) {SVD $\rightarrow \bvarphi_1,\ldots$};
    \draw[arrow] (0,0.42) -- (0,-0.16);
    \node[note] at (1.20,0.12) {restrict to $\calI_m$};
    \node[boxpurple, minimum width=3.25cm, minimum height=0.82cm,
      font=\normalsize] (zpod) at (0,-0.70) {$\ZZ_{\mathrm{pod}}$};

    \node[note, align=left, text width=3.2cm] at (0,-1.85)
      {uses only past data\\tracks parametric drift};
  \end{scope}

  \node[figpanel] at (5.2,4.15) {(c) Safe merge};
  \node[labelteal] at (5.2,3.65) {hybrid basis};
  \begin{scope}[shift={(5.2,0)}]
    \node[boxblue, minimum width=1.55cm, minimum height=0.70cm,
      font=\normalsize] (in1) at (-0.95,3.1) {$\Phi_{\mathrm{eig}}$};
    \node[boxpurple, minimum width=1.55cm, minimum height=0.70cm,
      font=\normalsize] (in2) at (0.95,3.1) {$\Phi_{\mathrm{pod}}$};

    \draw[arrow] (in1.south) -- (-0.15,2.0);
    \draw[arrow] (in2.south) -- (0.15,2.0);

    \node[boxteal, minimum width=3.25cm, minimum height=0.90cm,
      font=\normalsize, align=center] (merge) at (0,1.45)
      {append greedily\\while basis stays safe};

    \draw[arrow] (0,0.82) -- (0,-0.16);
    \node[note, align=center] at (1.90,0.28)
      {stop if\\$\kappa(E) > \tau_{\mathrm{safe}}$};
    \node[boxteal, minimum width=3.25cm, minimum height=0.82cm,
      font=\normalsize] (zcomb) at (0,-0.70) {$\ZZ_{\mathrm{safe}}$};

    \node[note, align=left, text width=3.2cm] at (0,-1.85)
      {eigenmodes first\\POD added only when stable};
  \end{scope}

  \draw[decorate, decoration={brace, amplitude=6pt, mirror},
    thick, black!60] (-7.2,-2.70) -- (7.2,-2.70);
  \node[draw=black!50, fill=gray!8, rounded corners=2pt,
    font=\small, inner sep=4pt] at (0,-3.35)
    {All candidate bases are restricted to $\calI_m$ and
     QR-orthonormalized before use};
\end{tikzpicture}
\caption{Three deflation basis sources. Eigenmodes provide a fixed
  reusable reference basis, POD modes accumulate online from previous
  solves, and the safe combined basis appends POD information only while
  the restricted Gram matrix remains well conditioned.}
\label{fig:bases}
\end{figure}

\begin{remark}[Basis-construction guard $\tau_{\mathrm{safe}}$]
\label{rem:tau_safe}
The threshold $\tau_{\mathrm{safe}}$ used in
Algorithm~\ref{alg:safe} (default $10^{4}$) bounds the conditioning of
the trial Gram matrix $E_{\mathrm{trial}} =
\ZZ_{\mathrm{trial}}^\top M_{\calI\calI}\,\ZZ_{\mathrm{trial}}$ during
basis construction. It stops the enrichment loop before the projected
coarse problem becomes numerically unstable, preventing near-redundant
or noise-contaminated POD modes from entering the deflation basis.
This is a preventive threshold, set conservatively so that the more
permissive solve-time threshold $\tau_{\mathrm{cond}}$
(Remark~\ref{rem:tau_cond}) is rarely triggered.
\end{remark}

\begin{algorithm}[t]
\caption{Safe Combined Basis Construction}
\label{alg:safe}
\begin{algorithmic}[1]
\Require Eigenmode candidates $\Phi_{\mathrm{eig}} \in \reals^{n_\calI \times r_{\mathrm{defl}}}$,
  POD candidates $\Phi_{\mathrm{pod}} \in \reals^{n_\calI \times r_p}$,
  matrix $M_{\calI\calI}$,
  ill-conditioning threshold $\tau_{\mathrm{safe}}$ (default $10^4$),
  linear-dependence tolerance $\tau_{\mathrm{dep}}$ (default $10^{-10}$)
\State $\ZZ \leftarrow \Phi_{\mathrm{eig}}$
  \Comment{Start with eigenmodes (already orthonormal)}
\State $E \leftarrow \ZZ^\top M_{\calI\calI}\,\ZZ$
  \Comment{Current Gram matrix}
\For{$j = 1, \ldots, r_p$}
  \State $\zz_{\mathrm{cand}} \leftarrow (\Phi_{\mathrm{pod}})_{:,j}$
  \State Orthogonalize $\zz_{\mathrm{cand}}$ against the columns of $\ZZ$
  \If{$\|\zz_{\mathrm{cand}}\|_2 < \tau_{\mathrm{dep}}$}
    \State \textbf{continue} \Comment{Numerically dependent}
  \EndIf
  \State $\zz_{\mathrm{cand}} \leftarrow \zz_{\mathrm{cand}} / \|\zz_{\mathrm{cand}}\|_2$
  \State $\ZZ_{\mathrm{trial}} \leftarrow [\ZZ,\; \zz_{\mathrm{cand}}]$
  \State $E_{\mathrm{trial}} \leftarrow
    (\ZZ_{\mathrm{trial}})^\top M_{\calI\calI}\,\ZZ_{\mathrm{trial}}$
  \If{$\mathrm{cond}(E_{\mathrm{trial}}) > \tau_{\mathrm{safe}}$}
    \State \textbf{break} \Comment{Conditioning violated}
  \EndIf
  \State $\ZZ \leftarrow \ZZ_{\mathrm{trial}},\quad
    E \leftarrow E_{\mathrm{trial}}$
\EndFor
\State \Return $\ZZ$
\end{algorithmic}
\end{algorithm}

\subsubsection{Online workflow}
\label{sec:online_workflow}

The default basis policy used throughout the paper is conservative:
start from restricted reference eigenmodes, optionally append a small
number of POD modes if the projected Gram matrix remains well
conditioned, and reserve per-instance Ritz cleanup for the regimes
where it acts as a rescue rather than as a universal enhancement.  In
practice, Ritz cleanup is restricted to the fragile 2D high-rank
regime (especially Laplacian cases at $r_{\mathrm{defl}} \ge 100$, with occasional
thermal exceptions) and is disabled for all 3D and
convection-dominated configurations.  This keeps the online cost
small and makes the method easy to deploy on GPUs.

Algorithm~\ref{alg:online_deflation} summarizes the deployed online
workflow.  Step~4 receives the active set partition from the outer
active-set loop; the algorithm itself is an inner-solver method for the
resulting reduced system~\eqref{eq:reduced}.

\begin{algorithm}[t]
\caption{Online workflow for reusable spectral deflation}
\label{alg:online_deflation}
\begin{algorithmic}[1]
\Require Parameter instances $\{\theta_m\}_{m=1}^{N_{\mathrm{inst}}}$;
  reference operator $M_{\mathrm{ref}}$;
  candidate-pool size $r_{\mathrm{pool}}$, retained deflation rank
  $r_{\mathrm{defl}} \le r_{\mathrm{pool}}$, and POD rank $r_p$;
  thresholds $\tau_{\mathrm{safe}}, \tau_{\mathrm{cond}},
  \tau_{\mathrm{dep}}$;
  toggles to enable POD enrichment and Ritz reselection.
\Ensure Approximate full-state solutions $\{\yy^{(m)}\}_{m=1}^{N_{\mathrm{inst}}}$
\State Compute the $r_{\mathrm{pool}}$ smallest eigenpairs of $M_{\mathrm{ref}}$ once
\State Initialize snapshot archive $\mathcal{S} \gets \emptyset$
\For{$m = 1,\dots,N_{\mathrm{inst}}$}
  \State Import the active/inactive partition
    $\calA_m, \calI_m$ from the active-set stage
    (Section~\ref{sec:pdas})
  \State Restrict the reference eigenmodes to $\calI_m$ and
    QR-orthonormalize to form $\Phi_{\mathrm{eig}}^{(m)}$
  \State $\ZZ^{(m)} \gets \Phi_{\mathrm{eig}}^{(m)}$
  \If{POD enrichment enabled \textbf{and} $m > 1$}
    \State Build $\Phi_{\mathrm{pod}}^{(m)}$ from $\mathcal{S}$ and update
      $\ZZ^{(m)}$ by merging $\Phi_{\mathrm{eig}}^{(m)}$ with
      $\Phi_{\mathrm{pod}}^{(m)}$ using
      Algorithm~\ref{alg:safe} (with $\tau_{\mathrm{safe}},
      \tau_{\mathrm{dep}}$)
  \EndIf
  \If{Ritz reselection enabled}
    \State Ritz-reselect the best $r_{\mathrm{defl}}$ vectors of the
      candidate basis $\ZZ^{(m)}$ (Section~\ref{sec:ritz})
  \EndIf
  \State Form $E^{(m)} = {\ZZ^{(m)}}^\top M_{\calI\calI}\,\ZZ^{(m)}$
  \If{$\mathrm{cond}(E^{(m)}) > \tau_{\mathrm{cond}}$}
    \State Fall back to standard Jacobi-preconditioned CG for this
      instance
  \Else
    \State Solve~\eqref{eq:reduced} by A-DEF2 deflated CG using
      $\ZZ^{(m)}$
  \EndIf
  \State Reconstruct the full state $\yy^{(m)}$ by setting
    $\yy^{(m)}_{\calA_m} = \bpsi_{\calA_m}$ and
    $\yy^{(m)}_{\calI_m} = \yy_{\calI_m}$
  \State Append $\yy^{(m)}$ to $\mathcal{S}$ for future POD updates
\EndFor
\end{algorithmic}
\end{algorithm}

\begin{remark}[Solve-time fallback $\tau_{\mathrm{cond}}$]
\label{rem:tau_cond}
The threshold $\tau_{\mathrm{cond}}$ used in
Algorithm~\ref{alg:online_deflation} (default $10^{10}$) is the
solve-time fallback guard: if $\mathrm{cond}(E^{(m)}) >
\tau_{\mathrm{cond}}$ at solve time, the deflation subspace is
considered ill-conditioned and the instance is solved with standard
Jacobi-preconditioned CG instead of A-DEF2. This guards against
noise-contaminated basis vectors that survive the more conservative
construction-time bound $\tau_{\mathrm{safe}}$
(Remark~\ref{rem:tau_safe}).
\end{remark}

\begin{remark}[Ritz cleanup applies to the candidate basis]
\label{rem:ritz_scope}
In Algorithm~\ref{alg:online_deflation}, Ritz re-selection is applied
to the candidate basis $\ZZ^{(m)}$ \emph{after} any POD merge.  Thus,
when only eigenmodes are used (POD enrichment disabled), Ritz
cleanup acts on the restricted eigenmodes alone, recovering the eig+Ritz
construction of Section~\ref{sec:eig}; when POD is enabled, Ritz acts
on the merged eigenmode + POD pool.
\end{remark}

\begin{remark}[Warm start under active set volatility]
\label{rem:warmstart}
\label{sec:warmstart}
Warm start ($\bx_0 = \yy^{(m-1)}|_{\calI_m}$) is kept as a low-cost
baseline and supplies the initial guess in the deflated solver, but it
is not the main source of acceleration in this problem class. When the
inactive set changes abruptly between instances, solution continuation
alone transfers very little useful information; the reusable spectral
basis is the mechanism that matters.
\end{remark}

\subsubsection{Coarse-grid eigenmode prolongation}
\label{sec:coarse}

Computing the smallest eigenpairs of $M_{\mathrm{ref}}$ via
shift-invert Lanczos costs $O(r \cdot \mathrm{solve}(M_{\mathrm{ref}}))$,
which
becomes prohibitive for large 3D systems. We therefore use a
multigrid-style coarse-to-fine construction~\citep{trottenberg2001multigrid,briggs2000multigrid}
(Figure~\ref{fig:coarse}):
approximate the fine-grid eigenmodes by computing them on a coarser grid and
prolongating.

\label{sec:prolongation}

Given a fine grid of $n$ interior nodes per dimension and a coarsening
factor $c \ge 2$, define the coarse grid with
$n_c = \lfloor n/c \rfloor$ interior nodes per dimension. The 1D
prolongation operator $P_{1\mathrm{D}} \in \reals^{n \times n_c}$
maps coarse-grid values to fine-grid values via linear interpolation, as
in standard multigrid transfer operators~\citep{trottenberg2001multigrid,briggs2000multigrid}.
For $d$-dimensional problems, the prolongation is the Kronecker product:
\begin{equation}\label{eq:prolong}
  P = \underbrace{P_{1\mathrm{D}} \otimes \cdots \otimes P_{1\mathrm{D}}}_{d \text{ times}}.
\end{equation}

Algorithm~\ref{alg:coarse} details the offline construction.
It returns a candidate pool $\Phi_{\mathrm{coarse}}$ of orthonormal
full-grid vectors, not the final per-instance deflation basis; the
online restriction and optional Ritz selection steps
(Section~\ref{sec:eig}) produce the actual deflation vectors at each
instance.

\begin{algorithm}[t]
\caption{Offline coarse-grid reference basis construction}
\label{alg:coarse}
\begin{algorithmic}[1]
\Require Coarse-grid operator $A_c$, regularization parameter $\alpha$,
         prolongation operator $P$, pool size $r_{\mathrm{pool}} \ge r$,
         singular-value tolerance $\tau_{\mathrm{sv}}$
\Ensure Full-grid reference candidate basis $\Phi_{\mathrm{coarse}}$

\State $M_c \gets \alpha A_c^\top A_c + I_c$
\State Compute the $r_{\mathrm{pool}}$ smallest eigenpairs
$\{(\lambda_j^{(c)},\bphi_j^{(c)})\}_{j=1}^{r_{\mathrm{pool}}}$ of $M_c$
via ARPACK~\citep{lehoucq1998arpack}
\For{$j = 1,\dots,r_{\mathrm{pool}}$}
    \State $\tilde{\phi}_j \gets P\,\bphi_j^{(c)}$
\EndFor
\State $\widetilde{\Phi} \gets [\tilde{\phi}_1,\dots,\tilde{\phi}_{r_{\mathrm{pool}}}]$
\State Compute economy QR: $\widetilde{\Phi} = QR$
\State Discard columns with $|R_{jj}| < \tau_{\mathrm{sv}}$
\State $\Phi_{\mathrm{coarse}} \gets Q$
\State \Return $\Phi_{\mathrm{coarse}}$
\end{algorithmic}
\end{algorithm}

At instance~$m$, the full-grid candidate basis $\Phi_{\mathrm{coarse}}$
($r_{\mathrm{pool}}$ columns) is restricted to $\calI_m$ and
QR-orthonormalized.  When Ritz cleanup is enabled, the best $r <
r_{\mathrm{pool}}$ Ritz vectors are selected from this overcomplete
restricted pool as described in Section~\ref{sec:eig}; otherwise the
first $r$ columns are used directly.  This separation keeps the
expensive eigenproblem offline while leaving only inexpensive
restriction and selection steps online.

\textbf{Expected speedup.} The offline eigensolve cost scales
with the number of coarse-grid degrees of freedom as $O(n_c^d)$ rather
than $O(n^d)$. In 3D with $c=2,3,4$, the idealized reduction in the
dominant eigensolve cost is therefore $c^3 = 8, 27, 64$. Accounting for
prolongation ($O(r\,n^d)$ sparse matvec) and QR
($O(r^2\,n^d)$), the net offline speedup remains substantial for
$r \ll n^d$.

\begin{remark}[Coarse-grid regularization: empirical observation]
\label{rem:regularization}
Across the benchmark families tested in this paper, coarse-grid
prolongation often produces \emph{better conditioned} deflation spaces
than exact fine-grid eigenmodes. This is an empirical finding rather
than a theorem. A plausible mechanism is spectral filtering: bilinear
(2D) or trilinear (3D) interpolation attenuates high-frequency content
in the prolongated vectors, while the more delicate trailing modes are
precisely the ones that most often destabilize the Gram matrix
$E = \ZZ^\top M_{\calI\calI}\,\ZZ$ after active set restriction.

Two caveats matter. First, the effect is observed for the operator
families tested here, but it is not established for operators with very
different spectral structure. Second, we do not claim that prolongation
is an optimal regularizer---only that it is a practically effective one
whose cost is already amortized by the cheaper offline eigensolve. The
quantitative evidence is reported with the coarse-grid results and in
Appendix~\ref{app:spectral_details}.
\end{remark}

\begin{figure}
\centering
\begin{tikzpicture}[scale=0.92]
  \node[figpanel] at (-3.9,3.75) {(a) Coarse eigensolve};
  \begin{scope}[shift={(-3.9,0)}]
    \def\nc{4}
    \def\hc{0.55}
    \draw[gray!40, thin] ({-\nc/2*\hc},{-\nc/2*\hc}) grid[step=\hc]
      ({\nc/2*\hc},{\nc/2*\hc});
    \draw[black!60, line width=0.8pt]
      ({-\nc/2*\hc},{-\nc/2*\hc}) rectangle ({\nc/2*\hc},{\nc/2*\hc});
    \foreach \i in {0,...,\nc} {
      \foreach \j in {0,...,\nc} {
        \fill[cbBlue!65] ({\i*\hc-\nc/2*\hc},{\j*\hc-\nc/2*\hc}) circle (2pt);
      }
    }
    \node[labelblue] at (0,1.58) {$\bphi_j^{(c)}$};
    \node[note] at (0,-1.55) {smaller eigensolve};
    \node[note] at (0,-1.92) {$\mathrm{eig}(M_c,r_{\mathrm{pool}})$};
  \end{scope}

  \draw[arrow, cbBlue!85!black] (-1.55,1.0) -- (0.25,1.0)
    node[midway, above, font=\scriptsize] {Kronecker interpolation}
    node[midway, below, font=\scriptsize] {$P = P_{1\mathrm{D}}^{\otimes d}$};

  \node[figpanel] at (2.55,3.75) {(b) Fine-grid lift};
  \begin{scope}[shift={(2.55,0)}]
    \def\nf{8}
    \def\hf{0.275}
    \draw[gray!22, very thin] ({-\nf/2*\hf},{-\nf/2*\hf}) grid[step=\hf]
      ({\nf/2*\hf},{\nf/2*\hf});
    \draw[black!60, line width=0.8pt]
      ({-\nf/2*\hf},{-\nf/2*\hf}) rectangle ({\nf/2*\hf},{\nf/2*\hf});

    \foreach \i in {0,2,4,6,8} {
      \foreach \j in {0,2,4,6,8} {
        \fill[cbBlue!65] ({\i*\hf-\nf/2*\hf},{\j*\hf-\nf/2*\hf}) circle (1.45pt);
      }
    }
    \foreach \i in {1,3,5,7} {
      \foreach \j in {0,...,8} {
        \fill[cbOrange!70] ({\i*\hf-\nf/2*\hf},{\j*\hf-\nf/2*\hf}) circle (0.95pt);
      }
    }
    \foreach \i in {0,2,4,6,8} {
      \foreach \j in {1,3,5,7} {
        \fill[cbOrange!70] ({\i*\hf-\nf/2*\hf},{\j*\hf-\nf/2*\hf}) circle (0.95pt);
      }
    }
    \node[note] at (0,-1.55) {prolongated mode};
    \node[note] at (0,-1.92) {$\tilde{\phi}_j = P\bphi_j^{(c)}$};
  \end{scope}

  \draw[arrow, cbTeal!80!black] (4.30,1.0) -- (6.15,1.0)
    node[midway, above, font=\scriptsize] {QR}
    node[midway, below, font=\scriptsize] {optional Ritz cleanup};

  \node[figpanel] at (7.85,3.75) {(c) Final basis};
  \begin{scope}[shift={(7.85,0)}]
    \foreach \j/\clr in {-0.8/cbTeal!22, -0.4/cbTeal!35, 0/cbTeal!48,
                         0.4/cbTeal!35, 0.8/cbTeal!22} {
      \draw[draw=cbTeal!80!black, line width=0.8pt, fill=\clr,
        rounded corners=1pt]
        ({\j-0.15},-1.1) rectangle ({\j+0.15},1.1);
    }
    \node[font=\small] at (0,0) {$\ZZ$};
    \node[note] at (0,-1.55) {$r$ orthonormal columns};
    \node[note] at (0,-1.92) {ready for deflation};
  \end{scope}

  \node[draw=black!50, fill=gray!8, rounded corners=2pt,
    font=\small, inner sep=4pt] at (2.0,-3.05)
    {Eigensolve cost reduced by $c^d$ (e.g., $8\times$ for $c{=}2$
     in 3D); prolongation + QR on fine grid};
\end{tikzpicture}
\caption{Coarse-grid eigenmode prolongation. Eigenmodes are computed on a
  reduced grid, prolongated to the fine grid through the tensor-product
  interpolation operator, and then orthonormalized before use.
  Optional Ritz cleanup can be applied after prolongation.}
\label{fig:coarse}
\end{figure}

\subsubsection{Analytical eigenmodes on tensor-product grids}
\label{sec:analytical}

When $M_{\mathrm{ref}} = \alpha L^2 + I$ and the grid is a uniform
Cartesian product, the reference eigenmodes are available in closed
form. The $d$-dimensional Laplacian is a Kronecker sum
$L = L_1 \otimes I \otimes \cdots + \cdots + I \otimes \cdots \otimes L_1$,
so its eigenvectors are tensor products of 1D sine vectors:
\[
  \phi_{i_1\ldots i_d}(x) =
  \prod_{\ell=1}^{d} \sqrt{\tfrac{2}{n+1}}\,
  \sin\!\Bigl(\frac{i_\ell\,\pi\, x_\ell}{n+1}\Bigr),
  \quad
  \lambda_{i_1\ldots i_d} =
  \alpha\Bigl(\sum_{\ell} \mu_{i_\ell}\Bigr)^{\!2} + 1,
\]
with $\mu_q = (4/h^2)\sin^2(q\pi/2(n{+}1))$. Computing the first $r$
modes therefore reduces to enumerating the separable eigenvalues,
partial-sorting them, and evaluating outer products rather than running
an iterative eigensolver.

This option is limited to tensor-product grids with constant-coefficient
reference operators. For unstructured meshes, variable coefficients, or
non-Laplacian $A$, the coarse-grid prolongation of
Section~\ref{sec:coarse} remains the recommended strategy. Quantitative
speedups and validation are reported in Section~\ref{ssec:coarse_results}
and Appendix~\ref{app:analytical}.

\subsection{Interpreting the method: a spectral-coherence perspective}
\label{sec:spectral}

The proposed method rests on a spectral-coherence property: the
leading eigenmodes of a full-domain reference Schur complement
remain aligned with the corresponding low-eigenvalue subspaces of
the parameter-dependent restricted operators
$M_{\calI_m\calI_m}$, even though those operators differ from
$M_{\mathrm{ref}}$ both in dimension (via active-set restriction)
and, when $A(\theta)$ varies, in spectrum (via operator drift).
This property is motivated by two classical results --- Cauchy
interlacing for principal submatrices
\citep{hwang2004cauchy,horn2012matrix} and the Davis--Kahan
$\sin\Theta$ theorem for eigenspace rotation under operator
perturbation~\citep{davis1970rotation} --- and is assessed
empirically in Section~\ref{ssec:spectral_diagnostics}: in the
benchmarks considered here, the principal angle between the
restricted reference subspace and the corresponding subspace of
$M_{\calI_m\calI_m}$ stays below $8.4^\circ$ at the diagnostic
deflation cutoff ($r = 20$) even as worst-case angles approach
$90^\circ$. This diagnostic rank is below the $r = 100$--$500$ ranks
that produce the headline iteration reductions; in the fragile 2D
Laplacian regime the cutoff angle grows with rank (the
conditioning wall of Section~\ref{sec:conditioning_wall}), which is
why Ritz reselection and coarse-grid prolongation are used at those
ranks. The interlacing and perturbation results are thus a
plausibility scaffold for the eigenvalue-location and drift behavior,
not a bound on the directional alignment that ultimately governs
deflation quality.

Full propositions and proofs (Cauchy interlacing, vanishing-drift
case, Davis--Kahan in $\varepsilon_m$ notation), the basis-design
implications for the relative roles of $\Phi_{\mathrm{eig}}$ and
$\Phi_{\mathrm{pod}}$, the empirical design rules used in this
paper, and the dimension-dependent conditioning-wall heuristic
are deferred to Appendix~\ref{app:spectral_theory}.

\section{Experimental methodology and computational setup}
\label{sec:setup}

This section describes how the solver is evaluated. The experimental
protocol is designed to isolate the repeated inactive set linear solve
that appears inside the active-set iteration. As emphasized in
Section~\ref{sec:problem}, the present study is therefore a
kernel-level benchmark of the reduced SPD system~\eqref{eq:reduced}.
That scope is deliberate: it allows the effect of the proposed basis
reuse strategy to be measured without conflating it with differences in
active-set outer iterations, active set initialization, or other components of
the full optimization workflow.

\subsection{Benchmark suite}
\label{sec:benchmarks}

Table~\ref{tab:benchmarks} summarizes the 15 benchmark configurations
--- 14 steady-state plus one space--time family --- organized with
linear diffusion, nonlinear thermal, and CHT subgroups; the
steady-state-results sections accordingly report 14 configurations,
and the space--time family is treated separately in
Section~\ref{ssec:st_results}. All problems use $\alpha = 10^{-3}$.
For each configuration and grid, the state bound $\psi$ is calibrated
from a representative unconstrained midpoint solution to target a
nontrivial active fraction of approximately $20\%$ at the midpoint
parameter; the per-suite calibration variants and the resulting
instance-level active fractions across the parametric sweep of
$y_{\mathrm{unc}}$ are detailed in Appendix~\ref{app:bench_mesh}
(Table~\ref{tab:psi_calibration}).
This produces a controlled test suite in which the difficulty of the
inactive set solve is comparable across configurations.
It also means that the suite is intentionally \emph{not} an exhaustive
sweep over regularization weights or activity levels; rather, it is a
controlled methodology for comparing solver behavior under repeated
active set variation.

\begin{table}[ht]
\centering
\caption{Benchmark problem configurations.
  All problems use $\alpha = 10^{-3}$.
  Full mathematical specifications (equations, source locations,
  velocity fields) are in Appendix~\ref{app:benchmarks}.}
\label{tab:benchmarks}
\small
\begin{tabularx}{\textwidth}{@{} l l c l X @{}}
\toprule
\textbf{Name} & \textbf{PDE} & $d$ & \textbf{Parameter} &
  \textbf{Key feature} \\
\midrule
\multicolumn{5}{@{}l}{\emph{2D Laplacian ($A = -\Delta$)}} \\
2d\_asym & $-\Delta y = u$ & 2 & $\theta\in[0,\pi/2]$ &
  Rotating 4-Gaussian $y_d$ \\
2d\_sym  & $-\Delta y = u$ & 2 & $a\in[0.8,1.2]$ &
  Separable $y_d = a\sin\pi x_1\sin\pi x_2$ \\
2d\_nonsep & $-\Delta y = u$ & 2 & $a\in[0.8,1.2]$ &
  Non-separable $y_d = a\sin(2\pi x_1 x_2)$ \\
\midrule
\multicolumn{5}{@{}l}{\emph{2D Thermal --- convection--diffusion--reaction (CDR),
  $-\Delta y + \Ra\,\vv\cdot\nabla y + \gamma y^3 = u$, Picard}} \\
thermal\_ra10   & CDR & 2 & $\theta$ & $\Ra=10$, $\gamma=100$ \\
thermal\_ra100  & CDR & 2 & $\theta$ & $\Ra=100$, calibrated $\psi$ \\
thermal\_ra500  & CDR & 2 & $\theta$ & $\Ra=500$, calibrated $\psi$ \\
thermal\_ra1000 & CDR & 2 & $\theta$ & $\Ra=1000$, calibrated $\psi$ \\
\midrule
\multicolumn{5}{@{}l}{\emph{3D Laplacian ($A = -\Delta$)}} \\
3d\_thermal  & $-\Delta y = u$ & 3 & $\theta$ &
  8-vertex cube rotation \\
3d\_contam   & $-\Delta y = u$ & 3 & $t\in[0,1]$ &
  Translating source \\
3d\_obstacle & $-\Delta y = u$ & 3 & $\theta\in[0,\pi/2]$ &
  Rotating 4-source $y_d$ \\
\midrule
\multicolumn{5}{@{}l}{\emph{3D CHT ($-\dive(\kappa\nabla y) +
  \vv\cdot\nabla y = u$, heterogeneous $\kappa$)}} \\
cht\_re0\_kr1    & CHT & 3 & $\theta$ &
  Pure diffusion baseline \\
cht\_re10\_kr10  & CHT & 3 & $\theta$ &
  $\Rey=10$, $\kappa_r=10$ \\
cht\_re50\_kr100 & CHT & 3 & $\theta$ &
  $\Rey=50$, $\kappa_r=100$, widest spectrum \\
cht\_re100\_kr10 & CHT & 3 & $\theta$ &
  $\Rey=100$, $\kappa_r=10$, max asymmetry \\
\midrule
\multicolumn{5}{@{}l}{\emph{Parabolic (space--time)}} \\
cht\_space\_time & $\partial_t T - \dive(\kappa\nabla T)
  + \vv\cdot\nabla T = u$ & 3+$t$ & $\theta$ &
  Temporal ramp $y_d \propto (1-e^{-t/\tau})$ \\
\bottomrule
\end{tabularx}
\vspace{0.3em}
\parbox{\textwidth}{\footnotesize
\textit{Notes:} CDR = convection--diffusion--reaction;
CHT = conjugate heat transfer.}
\end{table}

\subsection{Baselines and comparison logic}
\label{sec:baselines}

We compare against the following baselines.

\begin{enumerate}
\item \textbf{Direct solve (CPU sparse direct).}
This is the high-accuracy reference baseline: a sparse direct
factorization of each restricted Schur complement $M_{\calI\calI}$,
computed with PETSc~\citep{balay1997petsc} under MPI (8~ranks).
Because the inactive set changes from one instance to the next, the
factorization is rebuilt for each restricted matrix $M_{\calI\calI}$.

\item \textbf{Jacobi-preconditioned CG (cold Jacobi-CG).}
The conjugate gradient method~\citep{hestenes1952methods,saad2003iterative}
with diagonal (Jacobi) scaling is the simplest iterative baseline and the
natural cold-start solver. It uses only sparse matrix--vector products
and dot products, making it well suited to GPU execution; the CPU
baseline is run with PETSc~\citep{balay1997petsc} under MPI (8~ranks).

\item \textbf{AMG-preconditioned CG.}
We use classical Ruge--St\"uben AMG (AMG-RS)~\citep{ruge1987amg}, via
the BoomerAMG solver~\citep{henson2002boomeramg} of the hypre library
run through PETSc~\citep{balay1997petsc} under MPI (8~ranks), as the
strongest classical iterative baseline considered here.
Its limitation in the present setting is not poor iteration counts but
poor reusability: the hierarchy is tied to the current
$M_{\calI\calI}$ and must be rebuilt as the inactive set changes.

\item \textbf{Instance-to-instance Ritz recycling.}
This is the natural Krylov-recycling baseline
\citep{parks2006recycling,soodhalter2014krylov}. It tests whether the
spectral information generated by the previous solve transfers directly
to the next instance.

\item \textbf{Warm-start CG.}
This isolates the effect of solution continuation without any explicit
subspace reuse.
\end{enumerate}

\paragraph{What the comparisons mean.}
The direct and AMG baselines answer a deployment question: how does the
proposed reusable coarse space compete with strong established solvers
when the inactive set changes every time? The cold Jacobi-CG, warm-start, and
Ritz-recycling baselines answer a mechanism question: where does the
speedup actually come from? Throughout the paper we use both viewpoints,
and we flag hardware-sensitive comparisons accordingly in the results and
appendix material.

\subsection{Online protocol and reported metrics}
\label{sec:protocol}

Each benchmark processes a sequential stream of $N_{\mathrm{inst}}=30$
parameter instances (60 instances for the dedicated spectral diagnostics
in Section~\ref{ssec:spectral_diagnostics}). For each instance~$m$, we:
\begin{enumerate}
  \item sample the parameter $\theta_m$ and assemble the corresponding
  operator and right-hand side,
  \item identify the active and inactive sets once for that instance,
  \item benchmark each linear-solver strategy on the same reduced SPD
  system~\eqref{eq:reduced}, using only information available up to
  instance~$m$ to form its candidate basis,
  \item store the converged full-state solution for optional future POD
  updates.
\end{enumerate}
No strategy is allowed to use future snapshots or future active set
information. This causal protocol matches the online many-query setting
for which the method is intended and makes the kernel-study scope
explicit: the reduced system is shared, while only the inner linear
solver changes across strategies.
Figure~\ref{fig:protocol} illustrates the timeline.

\begin{figure}
\centering
\begin{tikzpicture}[
    x=0.86cm, y=0.92cm,
    arr/.style={-{Stealth[length=4pt]}, thick},
    rowlabel/.style={anchor=east, font=\small},
    solidbox/.style={rounded corners=2pt, thick},
    optbox/.style={rounded corners=2pt, dashed, thick}
]

\draw[arr] (0.8,0) -- (14.9,0)
  node[right, font=\small] {instance index};

\foreach \x/\lab in {1/1,2/2,3/3,7/{m\!-\!1},8/m,14/N} {
  \draw (\x,-0.12) -- (\x,0.12);
  \node[font=\small, below] at (\x,-0.18) {$\lab$};
}
\node[font=\small] at (4.8,-0.18) {$\cdots$};
\node[font=\small] at (11.0,-0.18) {$\cdots$};

\draw[black!45, dashed] (8,-0.55) -- (8,3.95);
\node[font=\scriptsize, fill=white, inner sep=1pt] at (8,3.70) {current query};
\node[font=\scriptsize] at (4.3,3.70) {solved instances};
\node[font=\scriptsize] at (11.7,3.70) {future instances};

\node[rowlabel] at (0.55,2.80) {reference basis};
\node[rowlabel] at (0.55,1.90) {POD enrichment};
\node[rowlabel] at (0.55,1.00) {warm start};

\fill[cbTeal!12] (1.0,2.45) rectangle (14.1,3.15);
\draw[cbTeal!75!black, solidbox] (1.0,2.45) rectangle (14.1,3.15);
\node[font=\small, cbTeal!80!black] at (7.55,2.80)
  {$\ZZ_{\mathrm{eig}}$ computed once offline, available for all queries};

\fill[cbPurple!15] (1.6,1.35) -- (7.9,1.35) -- (7.9,2.45) -- (1.6,1.65) -- cycle;
\draw[cbPurple!80!black, optbox, line width=1.2pt] (1.6,1.35) -- (7.9,1.35) -- (7.9,2.45) -- (1.6,1.65) -- cycle;
\node[font=\small, cbPurple!80!black] at (4.75,1.90)
  {optional POD basis};
\node[font=\small, cbPurple!80!black] at (4.75,1.60)
  {from $\{\yy^{(1)},\dots,\yy^{(m-1)}\}$};

\draw[black!45, rounded corners=1pt] (6.65,0.72) rectangle (7.35,1.28);
\node[font=\tiny] at (7.0,1.00) {$\yy^{(m-1)}$};
\draw[black!45, rounded corners=1pt] (7.65,0.72) rectangle (8.35,1.28);
\node[font=\tiny] at (8.0,1.00) {$\bx_0^{(m)}$};
\draw[cbOrange!85!black, dashed, thick, -{Stealth[length=4pt]}] (7.35,1.00) -- (7.65,1.00);
\node[font=\small, cbOrange!90!black, anchor=west] at (8.55,1.00)
  {use previous solution as initial guess (optional)};

\end{tikzpicture}
\caption{Schematic online protocol. At query $m$, the reusable reference
eigenbasis $\ZZ_{\mathrm{eig}}$ is already available because it is computed
once from a fixed reference parameter. Optional POD enrichment can only use previously
solved states $\{\yy^{(1)},\dots,\yy^{(m-1)}\}$, and optional warm start transfers
the previous solution to the current query. The same protocol is used with
$N=30$ sequential instances for solver benchmarks and with $N=60$ for the
dedicated spectral-diagnostic sweep in Section~\ref{ssec:spectral_diagnostics}.}
\label{fig:protocol}
\end{figure}

\paragraph{Reported metrics.}
We report absolute wall time, CG iteration count, and solution accuracy
relative to the direct solve. Relative iteration reduction with respect
to cold Jacobi-CG is used as a secondary diagnostic,
\[
  \mathrm{red}_m
  = 1 - \frac{\mathrm{iters}_m^{\mathrm{strat}}}
               {\mathrm{iters}_m^{\mathrm{cold}}},
\]
not as a stand-alone measure of solver quality; the denominator varies
with the instance and can itself be affected by active set geometry.
We also monitor
$\mathrm{cond}(\ZZ^\top M_{\calI\calI}\ZZ)$ as an online indicator of
coarse-space stability and report scaling exponents when studying mesh
refinement trends.

\paragraph{Implementation notes.}
The paper uses explicit Schur-complement assembly, finite-difference
operators on Cartesian grids, and GPU-oriented sparse linear algebra for
the proposed solver. These are implementation choices rather than the
essence of the methodology. The central algorithmic claim is the reuse
of a restricted reference eigenspace under changing active sets;
matrix-assembly details (Appendix~\ref{app:matrix_assembly}), CG
stopping criteria (Appendix~\ref{app:cg_stopping}), and hardware
configuration (Appendix~\ref{app:bench_spacetime}, Appendix~\ref{app:cpu_gpu})
needed to reproduce the experiments are documented separately.

\section{Numerical results}
\label{sec:results}

We present numerical evidence in five stages.
Section~\ref{ssec:spectral_diagnostics} tests the spectral-coherence
predictions of Section~\ref{sec:spectral} empirically on three
representative problems; extended diagnostics (eigenmode-weighted distance, stratified
angle correlations, and a push-to-breakdown experiment) are deferred
to Appendix~\ref{ssec:extended_investigation}.
Section~\ref{ssec:iter_benchmarks} benchmarks iteration reduction
across 14~configurations in 2D and 3D.
Section~\ref{ssec:walltime_scaling} reports wall-time scaling and
compares GPU deflated CG with sparse direct solvers and algebraic
multigrid.
Section~\ref{ssec:coarse_results} demonstrates coarse-grid
eigenmode prolongation, which reduces the eigensolve precompute
DOF count by a factor $c^d$ (and the measured wall-time by at
least that much, often more) and provides implicit spectral
regularization.
Finally, Section~\ref{ssec:st_results} demonstrates the space--time
extension on parabolic CHT benchmarks.
Negative results (randomized eigensolves, compressed sensing)
are reported in Appendices~\ref{app:randomized_eig}
and~\ref{app:compressed_sensing}.

To keep claims interpretable, we distinguish three types of
evidence throughout this section.
(i) \emph{Spectral diagnostics} --- principal angles, active-set
distance, operator drift --- are descriptive measurements that
support, but do not constitute, the theoretical statements of
Section~\ref{sec:spectral}. They are diagnostic, not theorem-grade
proof.
(ii) \emph{Iteration reduction} (e.g., ``55--84\% reduction at
$r = 100$ across the tested 3D grids'') is hardware-independent
and reflects algorithmic behavior of the deflated CG.
(iii) \emph{Wall-time speedup} (e.g., ``13--18$\times$ vs CPU
AMG-RS at $50^3$'') is a deployment quantity that depends
on solver, hardware, and software stack; we treat all
$t \sim N^p$ exponents as empirical finite-size fits over
the tested grid range, not asymptotic complexities.
We label statements in this section by category where ambiguity
could otherwise arise.

\subsection{Spectral coherence diagnostics}
\label{ssec:spectral_diagnostics_all}

\subsubsection{Baseline diagnostics}
\label{ssec:spectral_diagnostics}

We test the spectral-coherence predictions empirically on
$200 \times 200$
grids (40{,}000~DOF) across 60~parametric instances per problem,
using an H200~GPU. Three problems span qualitatively different
perturbation regimes: \textbf{2d\_asym} (linear, rotation, active
set distance $\delta \leq 2.1\%$), \textbf{2d\_nonsep} (linear,
amplitude, $\delta$ up to~$22.2\%$), and \textbf{thermal\_ra100}
(nonlinear convection--diffusion--reaction, operator drift
$\varepsilon < 4.5 \times 10^{-8}$).
Both linear problems share the same Schur complement
$\widehat{M} = \alpha L^2 + I$, so any degradation in deflation
quality is attributable solely to active set perturbation.
The headline empirical observation
(Figure~\ref{fig:spectral_coherence}, Table~\ref{tab:spectral_summary})
is a clean two-regime structure: the principal angle at the
deflation cutoff $\theta_{20}$ never exceeds $0.30^\circ$
(2d\_asym), $8.4^\circ$ (2d\_nonsep), or $1.0^\circ$
(thermal\_ra100), even as the worst-case angle $\theta_{\max}$
across all modes reaches $90^\circ$. This two-regime separation
between the leading reference subspace and the trailing modes is
what makes the deflation basis remain effective under active set
restriction. Per-problem eigenvalue trajectories, two-source
decomposition, and extended diagnostics (eigenmode-weighted
distance, stratified angle correlations, push-to-breakdown) are in
Appendix~\ref{app:spectral_details}.

\begin{figure}
\centering
\includegraphics[width=\textwidth]{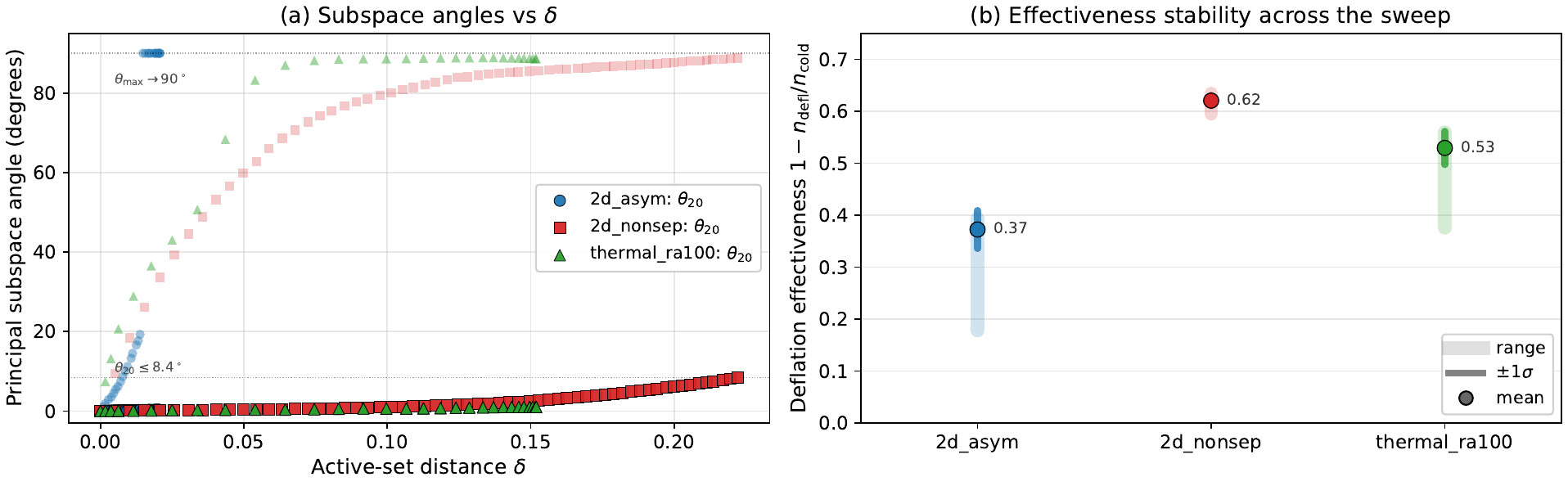}
\caption{Spectral coherence across three problems. (a)~Subspace
  angles vs active set distance~$\delta$. \emph{Each problem
  appears with two marker styles in the same colour: solid markers
  show the principal angle at the deflation cutoff~$\theta_{20}$,
  which stays below $8.4^\circ$ across the sweep; faded markers
  show the worst-case angle~$\theta_{\max}$, which can reach
  $90^\circ$.} (b)~Deflation effectiveness
  ($\mathrm{eff} = 1 - n_{\mathrm{defl}}/n_{\mathrm{cold}}$) is
  stable across the parametric sweep: dot = mean, thick bar =
  $\pm 1\sigma$, shading = full range.}
\label{fig:spectral_coherence}
\end{figure}

\begin{table}
\centering
\caption{Spectral coherence diagnostics: summary across three
  problems ($200 \times 200$ grid, 60~instances, $r = 20$ deflation
  vectors). Cold- and deflated-CG counts here are means over the
  60-instance diagnostic sweep; the 30-instance cold-CG baselines in
  Table~\ref{tab:app_2d_cold} (same grids) differ by ${\lesssim}1.5\%$
  owing to the different instance count (e.g., 2d\_asym cold CG:
  6{,}741 over 60 instances vs 6{,}653 over 30).}
\label{tab:spectral_summary}
\small
\begin{tabular}{@{} l ccc @{}}
\toprule
 & \textbf{2d\_asym} & \textbf{2d\_nonsep}
 & \textbf{thermal\_ra100} \\
 & (linear) & (linear) & (nonlinear) \\
\midrule
Cold CG (mean)       & 6{,}741  & 13{,}271 & 6{,}514  \\
Deflated CG (mean)   & 4{,}213  & 5{,}019  & 3{,}053  \\
Effectiveness (mean)  & 37.2\%  & 62.1\%   & 52.9\%   \\
\midrule
$\delta$ range        & 0--2.1\% & 0--22.2\% & 0--15.2\% \\
$\theta_{\max}$ range & 0--$90^\circ$ & 0--$88.8^\circ$ & 0--$88.9^\circ$ \\
$\theta_{20}$ range   & 0--$0.30^\circ$ & 0--$8.4^\circ$ & 0--$1.0^\circ$ \\
Operator drift $\varepsilon$
                      & $\equiv 0$ & $\equiv 0$ & $0$--$4.5\!\times\!10^{-8}$ \\
$\mathrm{Corr}(\delta, \mathrm{eff})$
                      & $0.295$ & $\mathbf{-0.888}$ & $-0.099$ \\
\bottomrule
\end{tabular}
\end{table}

\subsection{Steady-state solver performance}
\label{ssec:steady_results}

This subsection reports steady-state results: iteration reduction,
wall-time scaling, and coarse-grid prolongation across
14~configurations.  The space--time extension follows in
Section~\ref{ssec:st_results}.

\subsubsection{Iteration reduction}
\label{ssec:iter_benchmarks}
We benchmark relative CG iteration reduction from deflation across
pure diffusion, CDR, and CHT in 2D and 3D, with parameterizations including
rotation, translation, and amplitude scaling.  All reductions are measured relative to cold
Jacobi-preconditioned CG\@.

Table~\ref{tab:3d_iter} reports eigenmode deflation on the
$30^3$~grid (27K~DOF) for seven 3D configurations.

\begin{table}[htbp]
\centering
\caption{CG iteration reduction by eigenmode deflation ($30^3$~grid,
  27K~DOF, 30~instances per configuration).  Percentage reduction relative to
  cold Jacobi-preconditioned CG.}
\label{tab:3d_iter}
\small
\begin{tabular}{@{} l r rrrr @{}}
\toprule
Config & Cold CG & $r\!=\!40$ & $r\!=\!100$ & $r\!=\!200$ & $r\!=\!500$ \\
\midrule
3d\_thermal      & 390     & 46\% & 61\% & 73\% & 84\% \\
3d\_contam       & 739     & 56\% & 71\% & 80\% & 88\% \\
3d\_obstacle     & 457     & 45\% & 60\% & 71\% & 83\% \\
cht\_re0\_kr1    & 606     & 56\% & 69\% & 78\% & 87\% \\
cht\_re10\_kr10  & 1{,}488 & 73\% & 82\% & 87\% & 92\% \\
cht\_re50\_kr100 & 1{,}604 & 70\% & 79\% & 83\% & 87\% \\
cht\_re100\_kr10 & 1{,}407 & 72\% & 83\% & 88\% & 92\% \\
\bottomrule
\end{tabular}
\end{table}

Three observations stand out.
First, \textbf{3d\_obstacle} achieves 60\% reduction at $r = 100$,
comparable to 3d\_thermal (61\%).  At $30^3$, the active set
fraction varies between $17.7\%$ and $18.9\%$ across instances; at
$50^3$ (used elsewhere in this section) it is $20.0$--$21.2\%$,
reflecting per-grid recalibration of $\psi$
(Table~\ref{tab:psi_calibration}, Appendix~\ref{app:bench_mesh}).
Second, the convection-dominated CHT configurations are the
\emph{hardest} problems (1{,}400--1{,}600 cold-CG iterations) yet also
among the \emph{most deflation-friendly}, reaching up to 92\% reduction
at $r = 500$. This may indicate that convection increases the
effective separation of the low modes targeted by deflation,
although a dedicated spectral-gap analysis would be needed to
confirm this mechanism.
Third, no clear saturation is visible by $r = 500$: reduction
continues to rise by a further 4--12 percentage points from
$r = 200$ to $r = 500$ across the seven 3D configurations
(Table~\ref{tab:3d_iter}), suggesting that saturation has not yet
been reached in the tested range. This is consistent with a slowly
growing low-end spectrum (empirical fit $\lambda_r \sim r^{0.96}$
over the tested range, compared with the asymptotic $O(r^{4/3})$
from Weyl's law).

We emphasize that the CHT configurations exercise a stronger
operator-mismatch regime than the three diagnostic problems in
Section~\ref{ssec:spectral_diagnostics}. The reference operator
used to build the deflation basis is the $\Rey = 0$, $\kappa_r = 1$
(pure-diffusion) Schur complement; for $(\Rey, \kappa_r) \neq
(0, 1)$ the current operator differs from the reference by both
convection and a heterogeneous conductivity, so $\varepsilon$ is
not negligible in the sense of
Section~\ref{ssec:spectral_diagnostics}.
The CHT performance results should therefore be read as empirical
evidence of robustness under reference--current operator mismatch,
not as a direct consequence of $\varepsilon \approx 0$.

Using the previous instance's solution as the initial guess produced
no measurable reduction in CG iterations (within $\pm 1$ iteration)
across the 14~configurations and grid sizes tested.
Empirically, the active set changes discontinuously between
successive instances, altering both the inactive set dimension and the
restricted operator.  As a result, restricting the previous solution to the new inactive
set does not provide a useful warm start for CG in this benchmark
suite.
These results indicate that, for the present problem class, recycling
spectral information through deflation is more effective than recycling
solution state, in contrast to settings where standard Krylov recycling
is beneficial~\citep{parks2006recycling}. Per-configuration data confirming the absence of a warm-start benefit is presented in Appendix~\ref{app:cpu_gpu}.

Table~\ref{tab:2d_iter} reveals a qualitatively different regime.

\begin{table}[htbp]
\centering
\caption{CG iteration reduction in 2D ($300^2$~grid, 90K~DOF,
  30~instances).  ``eig'' = raw restricted eigenmodes;
  ``ritz'' = Rayleigh--Ritz reselection from an overcomplete pool.
  Raw eigenmodes suffer conditioning failures at $r \geq 100$ on
  Laplacian configurations (marked~$\dagger$: $\geq 1$ instance
  fell back to undeflated CG; marked~``---'': all instances fell back).
  Combined = QR-merging of 500~eigenmodes and 20~POD modes,
  admitted under the solve-time fallback guard
  $\tau_{\mathrm{cond}} = 10^{10}$ (a \emph{solve-guarded} basis,
  Remark~\ref{rem:tau_cond}); it does not meet the stricter
  construction guard $\tau_{\mathrm{safe}}$ and is not part of the
  default safe policy (discussed below).}
\label{tab:2d_iter}
\small
\begin{tabular}{@{} l r
  >{\columncolor{gray!6}}r >{\columncolor{gray!6}}r >{\columncolor{gray!6}}r
  >{\columncolor{cbTeal!8}}r >{\columncolor{cbTeal!8}}r
  >{\columncolor{cbOrange!8}}r @{}}
\toprule
 & & \multicolumn{3}{c}{\cellcolor{white}Raw eigenmode} &
   \multicolumn{2}{c}{\cellcolor{white}Ritz-stabilized} &
   \cellcolor{white}Combined \\
\cmidrule(lr){3-5} \cmidrule(lr){6-7} \cmidrule(lr){8-8}
Config & Cold CG
  & $r\!=\!40$ & $r\!=\!100$ & $r\!=\!200$
  & $r\!=\!100$ & $r\!=\!200$
  & QR(500,20) \\
\midrule
2d\_asym        & 14{,}675 & 71\% & 64\%$^\dagger$ & ---
               & 86\% & 92\%
               & \textbf{97\%} \\
2d\_sym         & 9{,}310  & 64\% & 80\% & 43\%
               & 75\% & 86\%
               & \textbf{94\%} \\
2d\_nonsep      & 29{,}092 & 83\% & 92\% & 72\%
               & 92\% & 95\%
               & \textbf{98\%} \\
thermal\_ra10   & 23{,}248 & 80\% & 90\% & 81\%$^\dagger$
               & 91\% & 95\%
               & \textbf{98\%} \\
thermal\_ra100  & 14{,}042 & 74\% & 86\% & 91\%
               & 86\% & 91\%
               & \textbf{96\%} \\
thermal\_ra500  & 5{,}208  & 68\% & 81\% & 88\%
               & 82\% & 88\%
               & \textbf{94\%} \\
thermal\_ra1000 & 2{,}995  & 60\% & 76\% & 85\%
               & 77\% & 84\%
               & \textbf{92\%} \\
\bottomrule
\end{tabular}
\end{table}

Raw eigenmode deflation encounters conditioning difficulties in
the 2D Laplacian family, most severely for 2d\_asym: one of
30~instances falls back to undeflated CG at $r = 100$, and all
instances fall back at $r = 200$
($\mathrm{cond}(\ZZ^\top M_{\calI\calI}\ZZ) > 10^{10}$).
The other Laplacian cases (2d\_sym, 2d\_nonsep) and thermal\_ra10
also degrade at high rank, but only 2d\_asym shows the all-instance
fallback. Rayleigh--Ritz reselection from an overcomplete pool
eliminates these failures entirely (zero divergences across all
configurations and ranks), confirming the Ritz variant as a
conservative rescue for the fragile 2D regime.

The mechanism is the rapid growth of the coarse-space condition number
with the deflation rank~$r$: raw eigenmodes reach
$\kappa \approx 10^{7}$ at $r = 100$ on the Laplacian cases, while
Ritz-stabilized bases stay below $\kappa = 10^{3}$
(Appendix~\ref{app:conditioning_wall}).
The thermal configurations tolerate larger~$r$ in raw-eigenmode form
because the convection--reaction terms break the lattice symmetry and
widen effective spectral gaps, delaying the onset of ill-conditioning.

Two rescue mechanisms enable high-$r$ deflation in 2D.
The first merges the eigenmode and online POD bases through QR
orthogonalization, reducing
$\mathrm{cond}(\ZZ^\top M_{\calI\calI} \ZZ)$ by up to 10~orders
of magnitude (e.g., $7 \times 10^{16} \to 4 \times 10^6$ for
2d\_asym at $r = 500$). We distinguish two combined-basis variants by
the guard they must satisfy. A \emph{strict-safe} combined basis is
admitted only if its coarse Gram condition number meets the
conservative construction guard $\tau_{\mathrm{safe}} = 10^4$
(Remark~\ref{rem:tau_safe}); this is the default safe policy. A
\emph{solve-guarded} combined basis is admitted whenever it clears the
looser runtime fallback guard $\tau_{\mathrm{cond}} = 10^{10}$
(Remark~\ref{rem:tau_cond}). The ``Combined QR(500,20)'' entry in
Table~\ref{tab:2d_iter} is a solve-guarded basis: its solve-time
coarse Gram condition number $4 \times 10^6$ clears
$\tau_{\mathrm{cond}}$ but not $\tau_{\mathrm{safe}}$, so it is
\emph{not} part of the default $\tau_{\mathrm{safe}}$-safe policy and
should not be read as such. In this solve-guarded mode the combined
strategy achieves \textbf{92--98\%} iteration reduction across all
configurations (Table~\ref{tab:2d_iter}, rightmost column).
\emph{Rayleigh--Ritz reselection} projects the eigenmode pool onto the
instance-specific operator $M_{\calI\calI}$ and solves a small
eigenvalue problem to select the best $r$-dimensional subspace within
that pool, thereby filtering out ill-conditioned directions. On the 2D Laplacian and thermal problems with an overcomplete
eigenmode pool of size~500, Ritz reselection at $r = 200$ already
reaches 84--95\,\% reduction (Table~\ref{tab:2d_iter}); the larger
Ritz($r = 500$) variant extends this to 93--99\,\% on the same
family. Those Ritz($r = 500$) numbers come from
Appendix~\ref{app:ritz} (per-configuration: 97\% on 2d\_asym, 96\%
on 2d\_sym, 99\% on 2d\_nonsep, 98\% on thermal\_ra10), and are not
reproduced in Table~\ref{tab:2d_iter}, which limits the Ritz
columns to $r \in \{100, 200\}$. By contrast, Ritz is strongly detrimental on
3D convection-dominated problems ($-23$ to $-58$\% at $r \le 100$),
showing that Ritz adaptation is beneficial only when the projected
operator remains spectrally coherent with the reference eigenmode pool.
The regime-dependent behavior of Ritz is detailed in
Appendix~\ref{app:ritz}; full benchmark problem descriptions
(equations, source locations, velocity fields) are in
Appendix~\ref{app:benchmarks}.

\subsubsection{Wall-time and GPU scaling}
\label{ssec:walltime_scaling}

Iteration reduction translates to large wall-time savings in our
deployment, which combines GPU-accelerated Jacobi-preconditioned CG
with the deflation projector.  Following category~(iii) above, all
wall-time numbers below are deployment comparisons (CPU baselines vs
GPU deflated CG) rather than algorithm-matched ones; we state this
once and do not repeat the caveat at each result.  Over the tested 3D grid range, measured
per-instance wall-times follow $t \sim N^{2.0}$ for CPU sparse
direct solves and $t \sim N^{0.50\text{--}0.55}$ for GPU deflated CG
(Table~\ref{tab:scaling}); these are empirical finite-size
exponents, not asymptotic algorithmic complexities.

Table~\ref{tab:walltime} reports per-instance wall-time comparisons
between CPU sparse direct and GPU deflated
CG (NVIDIA H200) across three 3D grid sizes.

\begin{table}[htbp]
\centering
\caption{Per-instance wall-time: CPU sparse direct vs GPU
  deflated CG ($r = 100$, H200~GPU). Speedup excludes the one-time
  eigensolve precompute and compares two different solvers running
  on different hardware; this is a deployment comparison, not an
  algorithm-matched (CPU-vs-CPU or GPU-vs-GPU) comparison. An
  algorithm-matched CPU-vs-GPU comparison of the same deflated CG
  solver is reported in Appendix~\ref{app:cpu_gpu}. Speedup ratios
  are computed from unrounded times and may differ slightly from the
  ratio of the displayed (rounded) values.}
\label{tab:walltime}
\small
\begin{tabular}{@{} ll rrr @{}}
\toprule
Config & Grid & Direct (CPU) & eig(100) GPU & Speedup \\
\midrule
3d\_thermal  & $30^3$ & 3.61\,s   & 0.043\,s  & $84\times$ \\
3d\_thermal  & $40^3$ & 24.6\,s   & 0.076\,s  & $324\times$ \\
3d\_thermal  & $50^3$ & 110.9\,s  & 0.143\,s  & $773\times$ \\
\addlinespace
3d\_contam   & $50^3$ & 161.4\,s  & 0.171\,s  & $944\times$ \\
3d\_obstacle & $50^3$ & 108.8\,s  & 0.184\,s  & $591\times$ \\
\addlinespace
cht\_re0\_kr1    & $50^3$ & 99.5\,s   & 0.153\,s  & $652\times$ \\
cht\_re10\_kr10  & $50^3$ & 212.9\,s  & 0.219\,s  & $973\times$ \\
cht\_re50\_kr100 & $50^3$ & 216.8\,s  & 0.261\,s  & $831\times$ \\
cht\_re100\_kr10 & $50^3$ & 186.3\,s  & 0.197\,s  & $948\times$ \\
\bottomrule
\end{tabular}
\end{table}

At $50^3$ (125K~DOF), per-instance speedups range from
$591\times$ (3d\_obstacle) to $973\times$ (cht\_re10\_kr10). The deployment ratio increases with
grid size because the measured CPU sparse-direct wall-time follows
$t \sim N^{2.0}$ over the tested range (consistent with 3D
sparse-direct fill-in growth), while measured GPU deflated CG
wall-time follows
$t \sim N^{0.50\text{--}0.55}$ over the same range; both are
empirical finite-size fits (Table~\ref{tab:scaling}).
In 2D, speedups are more modest (2--10$\times$ at
$300^2$--$500^2$) because measured sparse-direct wall-time in 2D
already follows $t \sim N^{1.37\text{--}1.42}$ over the tested
range.  Full 2D results and a
CPU vs GPU apple-to-apple comparison (identical algorithm, identical
iteration counts, 65--71$\times$ GPU advantage at $500^2$) are
in Appendix~\ref{app:cpu_gpu}.

Table~\ref{tab:scaling} presents the per-instance wall-time scaling
exponents. In 3D the GPU rows (cold and deflated CG) are fitted over
the 5~grids $15^3$--$40^3$ of the R130\_3D sweep, while the CPU
direct and AMG-RS rows extend to 8~grids ($10^3$--$50^3$,
1K--125K~DOF) because those solvers were also run on the additional
grids; in 2D all rows use 5~grids ($100^2$--$500^2$, 10K--250K~DOF).
Per-row provenance is given in the footnote.

\begin{table}[htbp]
\centering
\caption{Empirical per-instance wall-time fits $t \sim N^p$
  over the tested grid range, where $N$ = interior DOF. Ranges span
  configurations within each dimension. GPU uses
  Jacobi-preconditioned CG on H200. These are measured finite-size
  exponents, not asymptotic complexity claims; in particular the
  GPU exponents reflect the regime where SpMV throughput is not yet
  saturated by the problem size, and they should not be extrapolated
  beyond the largest grid reported here.}
\label{tab:scaling}
\small
\begin{tabular}{@{} l ll @{}}
\toprule
Strategy & 3D ($p$) & 2D ($p$) \\
\midrule
Direct (CPU)       & $2.00$--$2.09$ & $1.37$--$1.42$ \\
Cold CG (GPU)      & $0.38$--$0.63$ & $0.91$--$0.99$ \\
Deflated CG, $r\!=\!100$ (GPU)
                   & $0.50$--$0.55$ & $0.89$--$1.04$ \\
AMG-RS (CPU)       & $1.14$--$1.22$ & $1.21$--$1.88$ \\
\bottomrule
\end{tabular}
\\[2pt]
\footnotesize 3D GPU rows (cold and deflated CG) fitted from
R130\_3D (5~grids, $15^3$--$40^3$); 3D direct and AMG-RS rows from
R132/R140\_3D (8~grids, $10^3$--$50^3$); 2D rows from R130a (2D) and
R132a (2D AMG, $100^2$--$500^2$); all with $n=7$~configs.
\end{table}

GPU deflated CG exhibits a sublinear empirical wall-time exponent
in 3D ($p = 0.50$--$0.55$ over the tested grid range), so the
per-instance cost grows far more slowly with refinement than for
CPU sparse direct solves, which grow ${\sim}30\times$ between $30^3$
and $50^3$ in the same deployment.
Note that Table~\ref{tab:scaling} reports \emph{per-instance}
exponents; the amortized exponent including the one-time eigensolve
precompute is a different quantity. With the fine-grid eigensolve, the
precompute (${\sim}1{,}340$\,s at $50^3$ for $r = 500$) is several
orders of magnitude larger than a single GPU deflated CG solve
($\approx 0.1\text{--}0.3$\,s at $50^3$), so the amortized cost
remains precompute-dominated for moderate batch sizes; the
per-instance GPU advantage in Table~\ref{tab:walltime} only
translates into an amortized many-query advantage once the eigensolve
itself is reduced. Coarse-grid prolongation
(Section~\ref{ssec:coarse_results}) and analytical eigenmodes
(Appendix~\ref{app:analytical}) are therefore essential to make
the amortized exponent informative; we report amortized wall-times
under those constructions in the corresponding sections.
Full per-problem, per-grid iteration counts and wall-times
are in Appendix~\ref{app:scaling_data}.

AMG-RS~\cite{ruge1987amg} achieves 91--98\% iteration reduction in
3D --- substantially more than eigenmode deflation's 55--84\% at
$r = 100$ across the tested 3D grids, so the algorithmic comparison
favors AMG-RS on iteration counts. In the tested deployment, however, GPU deflated CG is
\textbf{13--18$\times$ faster per instance} than the CPU
AMG-RS baseline (Table~\ref{tab:gpu_vs_amg}), even
though AMG-RS uses fewer iterations: the speedup reflects the
combination of a cheaper per-iteration GPU SpMV with the deflation
preconditioner, not a per-iteration algorithmic advantage.

\begin{table}[htbp]
\centering
\caption{GPU deflated CG vs AMG-RS at $50^3$ (125K~DOF). AMG
  uses fewer iterations on every configuration; in the tested
  deployment, the wall-time ratio
  $t_{\mathrm{AMG\text{-}RS\,(CPU)}} / t_{\mathrm{eig(100)\,(GPU)}}$
  is dominated by per-iteration cost differences between a CPU
  AMG V-cycle and a GPU Jacobi-preconditioned CG iteration. Ratio
  column = AMG wall-time / GPU wall-time, not a per-iteration
  ratio.}
\label{tab:gpu_vs_amg}
\small
\begin{tabular}{@{} l rr rr r @{}}
\toprule
 & \multicolumn{2}{c}{Iterations}
 & \multicolumn{2}{c}{Wall-time/inst} \\
\cmidrule(lr){2-3} \cmidrule(lr){4-5}
Config & AMG-RS & eig(100)
       & AMG-RS & eig(100) & Ratio \\
\midrule
3d\_thermal  & 30 (97\%) & 398 (60\%)
             & 1.93\,s & 0.14\,s & $14\times$ \\
3d\_contam   & 36 (98\%) & 498 (71\%)
             & 2.20\,s & 0.17\,s & $13\times$ \\
3d\_obstacle & 96 (91\%) & 476 (57\%)
             & 3.02\,s & 0.18\,s & $16\times$ \\
cht\_re0\_kr1    & 32 (98\%) & 446 (71\%)
                & 2.03\,s & 0.15\,s & $14\times$ \\
cht\_re10\_kr10  & 69 (98\%) & 672 (83\%)
                & 3.52\,s & 0.22\,s & $16\times$ \\
cht\_re50\_kr100 & 68 (98\%) & 849 (80\%)
                & 3.67\,s & 0.26\,s & $14\times$ \\
cht\_re100\_kr10 & 66 (98\%) & 605 (84\%)
                & 3.59\,s & 0.20\,s & $18\times$ \\
\bottomrule
\end{tabular}
\end{table}

This comparison is intentionally wall-time based rather than
algorithmically matched: AMG-RS is evaluated in its standard
CPU-bound form (hypre BoomerAMG~\cite{henson2002boomeramg}), whereas
Jacobi-preconditioned CG maps naturally to the GPU\@. The
practical question is not which method minimizes iterations,
but which delivers lower end-to-end runtime on available hardware.

An initial exploration of three NVIDIA AmgX GPU-AMG configurations at
$500^2$ found none competitive with GPU deflated CG or the CPU direct
baseline --- only one reached the $10^{-10}$ tolerance, and the others
stalled --- so a tuned GPU AMG is left to future work; the
configurations and per-variant outcomes are detailed in
Appendix~\ref{app:gpu_amg}.

AMG reduces iterations by 14--37\% more than eigenmode deflation
on these configurations, yet GPU deflated CG still wins on wall-time:
a CPU AMG V-cycle (multi-level smoothing, restriction, prolongation)
costs far more per iteration than a GPU CG iteration --- essentially
one sparse matvec plus the projector~\citep{bell2012exposing} ---
so the higher CG iteration count is more than offset. We do not
attach a single per-iteration cost ratio, since that breakdown is
itself deployment-dependent.

Crucially, AMG's hierarchy had to be rebuilt for every instance in
our benchmark suite because $M_{\calI\calI}$ changes with the active
set (all 30/30~instances required rebuild). At $40^3$, hierarchy
construction alone costs 0.35--0.39\,s per instance --- already
$2\text{--}5\times$ the \emph{total} GPU deflated CG solve time
(0.07--0.17\,s) in the same deployment. Because the active set
changes between instances, this setup cost is not amortizable
across instances.
In contrast, the eigenmode basis is computed once and reused.
This rebuild-economics argument, however, favors deflation only once
the offline eigensolve itself is inexpensive: with a fine-grid
eigensolve the one-time precompute can exceed the entire per-instance
rebuild budget it replaces (at $40^3$, ${\sim}520$\,s of eigensolve
versus the ${\sim}11$\,s of AMG hierarchy rebuilds summed over all
30~instances), so the net advantage is realized through the
coarse-grid prolongation and analytical reference modes
(Section~\ref{ssec:coarse_results}) that bring the eigensolve down to
seconds or less.

In 2D, in our tested CPU BoomerAMG deployment, AMG is not competitive
on this biharmonic-like Schur-complement form: on the Laplacian
configurations the empirical wall-time
exponent is $N^{1.5\text{--}1.9}$ over the tested grid range, and
AMG is 3--60$\times$ slower than sparse direct. (The wider AMG-RS 2D
range in Table~\ref{tab:scaling}, $N^{1.21\text{--}1.88}$, pools the
thermal configurations as well; their convection--reaction terms
lower the exponent and hence the lower bound.)
A tuned or GPU-resident AMG could behave differently; this statement is
specific to CPU BoomerAMG and to this Schur-complement form.
Only at $\Ra \geq 500$ does AMG-RS achieve parity with direct.
In the same deployment, GPU deflated CG is 10--255$\times$ faster
per instance than the best AMG variant across all 2D configurations
at $500^2$.

For small batch sizes, the one-time eigensolve precompute
(${\sim}1{,}340$\,s at $50^3$ for $r = 500$) dominates the
amortized runtime.
At $50^3$, amortized over 30~instances, GPU deflated CG
(including eigensolve) is $2.2$--$4.8\times$ faster than CPU
direct.  Breakeven occurs at 6--13~instances and
\emph{decreases} with grid size as per-instance savings grow
faster than eigensolve cost (mode-budget allocation details, in
the basis-size sense $r_{\mathrm{tot}} = r_{\mathrm{eig}} + r_{\mathrm{pod}}$,
are in Appendix~\ref{app:budget}). Coarse-grid prolongation
(Section~\ref{ssec:coarse_results}) further reduces the eigensolve
by a factor $c^d$ in DOF count (the wall-time speedup is at least
this large; see Section~\ref{ssec:coarse_results}), shifting
breakeven to as few as 4--8~instances in 3D.

\subsubsection{Coarse-grid prolongation}
\label{ssec:coarse_results}

The eigensolve precompute is the amortization bottleneck:
${\sim}1{,}340$\,s at $50^3$ for $r = 500$. We compute eigenmodes
on a $c\times$ coarser grid and prolongate via trilinear (3D) or
bilinear (2D) interpolation, followed by QR orthonormalization.

Table~\ref{tab:coarse_speedup} reports eigensolve timing at two
representative grid sizes.

\begin{table}[htbp]
\centering
\caption{Eigensolve speedup from coarse-grid prolongation
  ($r = 500$ modes).  3D: trilinear interpolation on $(n/c)^3$
  coarse grid; 2D: bilinear on $(n/c)^2$.  Ranges span all
  configurations.}
\label{tab:coarse_speedup}
\small
\begin{tabular}{@{} ll rrr @{}}
\toprule
Dim & Fine eigensolve
  & $c\!=\!2$ & $c\!=\!3$ & $c\!=\!4$ \\
\midrule
3D, $40^3$
  & 515--542\,s
  & \textbf{21--28$\times$} & 72--76$\times$ & 298--319$\times$ \\
3D, $50^3$
  & 1{,}330--1{,}385\,s
  & \textbf{24--25$\times$} & 194--200$\times$ & 277--287$\times$ \\
2D, $500^2$
  & 353--383\,s
  & \textbf{4.5--4.7$\times$} & 12--13$\times$ & 19--27$\times$ \\
\bottomrule
\end{tabular}
\end{table}

The degree-of-freedom reduction from coarsening is nominally
$c^d$ (it is exactly $c^d$ only when $n$ is divisible by $c$;
otherwise $n_c = \lfloor n/c \rfloor$ introduces an integer
rounding correction), and the measured eigensolve speedups are
at least this large; many exceed $c^d$ substantially (for example,
$c = 3$ in 3D yields
$194$--$200\times$ at $50^3$, well above $3^3 = 27$). The excess
arises because the shift-invert eigensolve and the sparse factorization it
relies on grow superlinearly with grid size, so reducing the DOF
count by $c^d$ reduces wall-time by more than $c^d$. In 3D,
$c = 2$ reduces the $50^3$ eigensolve from 22~minutes to under
1~minute; $c = 3$ to roughly 7~seconds.
The $c = 4$ speedup is the exception to the otherwise monotonic
grid trend (it is \emph{lower} at $50^3$ than at $40^3$). At
$c = 4$ the coarse grids are tiny ---
$\lfloor 40/4\rfloor^3 = 10^3$ and $\lfloor 50/4\rfloor^3 = 12^3$
(${\sim}1$K--2K~DOF) --- so the $r = 500$ eigensolve no longer sits
in the asymptotic sparse, superlinear regime that produces the large
excess speedups at $c = 2$--$3$, and the speedup saturates. The
residual non-monotonicity within this saturated band is a
second-order effect: across these two small coarse grids the coarse
eigensolve cost grows slightly faster than the fine-grid eigensolve
against which it is normalized. It is \emph{not} explained by the
$r = 500$ request becoming a larger fraction of the coarse degrees of
freedom at $50^3$ --- if anything that fraction is larger at $40^3$
($500/10^3 \approx 50\%$) than at $50^3$ ($500/12^3 \approx 29\%$).
The recommended $c = 2$--$3$ settings stay well clear of this regime.

Principal angles between coarse-prolongated and exact fine-grid
eigenmodes indicate that the approximation error is small for the
leading deflation subspaces of interest.  At $50^3$ with $c = 2$:
median angles are $0.2^\circ$ (first 20~modes), $0.4^\circ$
(first 100), and $2.4^\circ$ (first 500).  In 2D
the fidelity is even better: $< 1^\circ$ at $c = 4$, $r = 500$.

Table~\ref{tab:coarse_quality} shows that iteration reduction
is preserved to within measurement noise: at $c = 2$, $r = 100$,
the coarse basis matches exact within $\pm 1$\% across all
seven configurations at $50^3$.

\begin{table}[htbp]
\centering
\caption{Deflation quality: exact vs coarse ($c = 2$) eigenmodes
  at $r = 100$, $50^3$~grid (125K~DOF).  Gap = coarse minus
  exact (in percent).  Zero divergence for coarse at all grids.
  Percentages are rounded to the nearest percent; unrounded
  coarse--exact differences are below $1$\% on every configuration
  listed.}
\label{tab:coarse_quality}
\small
\begin{tabular}{@{} l rr r @{}}
\toprule
Config & exact(100) & c2(100) & Gap \\
\midrule
3d\_thermal      & 60\% & 60\% & 0 \\
3d\_contam       & 71\% & 71\% & 0 \\
3d\_obstacle     & 55\%\textsuperscript{*} & 55\%\textsuperscript{*} & 0 \\
cht\_re0\_kr1    & 71\% & 71\% & 0 \\
cht\_re10\_kr10  & 83\% & 83\% & 0 \\
cht\_re50\_kr100 & 80\% & 80\% & 0 \\
cht\_re100\_kr10 & 84\% & 84\% & 0 \\
\bottomrule
\end{tabular}
\\[2pt]
\footnotesize\textsuperscript{*}The \texttt{3d\_obstacle} $50^3$ entry uses
the grid-sweep active-set calibration (${\approx}17\%$ active); the
deployment rerun (${\approx}21\%$ active) gives 57\%, as in
Table~\ref{tab:gpu_vs_amg}.
\end{table}

Beyond cost savings, coarse-grid prolongation can also improve
conditioning at high~$r$.
Bilinear/trilinear interpolation acts as a low-pass filter,
attenuating the high-frequency eigenmode components that create
ill-conditioning after active set restriction.

On 3d\_thermal at $40^3$, exact eigenmodes at $r = 500$
(\emph{exact}, no Ritz, fine grid) exhibit severe conditioning
breakdown: cond$(\ZZ^\top M_{\calI\calI} \ZZ) \approx 1.1 \times 10^7$
and the deflated solve diverges, costing $4.6\times$ more iterations
than cold CG. We caution that the two interventions of greatest
practical interest --- coarse-grid prolongation and Ritz cleanup ---
both rescue this case, and a clean attribution requires holding one
of them fixed:
\begin{itemize}
\item \emph{Fine grid + Ritz} ($r = 500$):
  achieves 83\% reduction with cond$\approx 1.8 \times 10^3$ and
  zero divergences (R130\_3D, 30~instances). Ritz reselection alone
  is therefore sufficient to rescue this particular case.
\item \emph{Coarse grid + Ritz} ($c = 2$,
  $r = 500$): achieves the same 83\% reduction with cond
  $\approx 1.4 \times 10^3$.
\end{itemize}
Because both fixes succeed individually, the 3d\_thermal $40^3$
example demonstrates that exact(500) breaks down at high rank, but
it does not by itself separate the contributions of coarse-grid
filtering from those of Ritz cleanup. We retain both
interventions in the practical recipe below; a controlled
ablation (exact eigenmodes with Ritz cleanup vs.\ a coarse-grid
basis without Ritz) that would isolate the two effects is left to
future work.
Table~\ref{tab:defl_kappa}
(Appendix~\ref{app:spectral_details}) confirms the pattern on the
2D diagnostics: coarse bases yield smaller coarse Gram condition
numbers $\kappa(\ZZ^\top M_{\calI\calI} \ZZ)$ at every tested rank,
with the gap widening with $r$ --- marginal at $r = 10$
(${\sim}1$--$4\%$) and reaching 20--30\% by $r = 30$.
This regularization effect enables safe deflation at ranks where
exact eigenmodes fail.

With coarse-grid precompute, the breakeven analysis changes
qualitatively:

\begin{center}
\small
\begin{tabular}{@{} ll rr @{}}
\toprule
Comparison & Baseline & $c = 2$ & $c = 3$ \\
\midrule
3D ($40^3$) vs direct  & 10--21 inst & 4--8 inst & 2--4 inst \\
3D ($40^3$) vs AMG-RS  & 260--6{,}000 & 15--32 inst & 10--19 inst \\
2D ($500^2$) vs direct & ${\sim}125$ inst & 16--27 inst & 6--12 inst \\
\bottomrule
\end{tabular}
\end{center}

\noindent
In the tested deployment with $c = 2$, coarse-grid GPU deflation has
lower amortized per-instance wall-time than the CPU AMG-RS
baseline over 30~instances on 6 of the 7~3D configurations,
and on all~7 with $c \geq 3$ (per-configuration breakdowns in
Appendix~\ref{app:coarse_vs_amg}).
In 2D, it beats sparse direct on all~7 configurations at
30~instances --- a regime where fine-grid eigensolve is
$2.2$--$2.8\times$ slower than direct and therefore impractical.

For the special but practically important case of uniform Cartesian
grids with a constant-coefficient Laplacian reference, the eigenmodes
of
$M_{\mathrm{ref}} = \alpha L^2 + I$ are available analytically
as Kronecker products of sine vectors (Section~\ref{sec:analytical}).
This eliminates the iterative eigensolve entirely, reducing the
precompute from $1{,}344$\,s (eigensolve at $50^3$, $r = 500$ pool;
Table~\ref{tab:app_analytical_cost}) to $0.729$\,s --- a
$1{,}844\times$ speedup. Both numbers are for a
\emph{pool} of 500 reference modes generated at $50^3$; the rank-100
experiments in Table~\ref{tab:analytical} use the leading 100 modes
from this pool, and the construction time is independent of the
rank used at solve time because the analytical generator emits the
500-mode pool in a single call. The corresponding $r = 100$
analytical timings and the iterative eigensolve ($r = 100$) baseline are
detailed in Appendix~\ref{app:analytical}.

Table~\ref{tab:analytical} shows the amortized impact: with
sub-second precompute, the per-instance GPU advantage translates
directly into amortized speedups of $650\text{--}824\times$
at $50^3$ over 30~instances, and the amortization barrier
effectively disappears ($N = 1$ breakeven in the tested cases).

\begin{table}[htbp]
\centering
\caption{Amortized wall-time with analytical eigenmodes
  (30~instances, deflation rank $r = 100$, $50^3$). The analytical
  precompute generates a pool of 500 reference modes in 0.73\,s;
  ``Analytical + GPU'' = 0.73\,s precompute +
  $30 \times t_{\mathrm{eig(100)}}$ where the deflation basis is
  the leading 100 modes from that pool. The coarse-grid column uses
  the same protocol with $c = 2$ eigensolve (55\,s for the 500-mode
  pool). The Direct (CPU) totals are measured independently in this
  analytical-eigenmode experiment; small differences from $30\times$
  the per-instance Direct values in Table~\ref{tab:walltime} (e.g.,
  3{,}251\,s here vs $30\times 110.9 = 3{,}327$\,s) reflect
  run-to-run CPU timing variation.}
\label{tab:analytical}
\small
\begin{tabular}{@{} l rrr @{}}
\toprule
Config & Direct (CPU) & Coarse + GPU & Analytical + GPU \\
\midrule
3d\_thermal     & 3{,}251\,s & 59\,s ($55\times$)
  & \textbf{5.0\,s} ($650\times$) \\
3d\_contam      & 4{,}778\,s & 60\,s ($80\times$)
  & \textbf{5.8\,s} ($824\times$) \\
cht\_re50\_kr100 & 6{,}504\,s & 63\,s ($103\times$)
  & \textbf{8.6\,s} ($756\times$) \\
\bottomrule
\end{tabular}
\end{table}

\noindent
This improvement applies \emph{only} to tensor-product
grids with standard Laplacian reference operators.  For
unstructured meshes, variable-coefficient operators, or general
$M_{\mathrm{ref}}$, the coarse-grid approach
(Table~\ref{tab:coarse_speedup}) remains necessary.
Full verification data (eigenvalue accuracy, principal angles,
iteration comparison) are in Appendix~\ref{app:analytical}.

Across the tested regimes (approximately 290{,}000 CG evaluations,
14~configurations), a robust default is to compute
$r = 100$--$200$ reference modes on a $c = 2$--$3$ coarser grid,
prolongate by multilinear interpolation, and QR-orthonormalize
before GPU Jacobi-preconditioned CG\@.
In 3D, raw coarse bases are stable up to at least $r = 200$ and
do not require Ritz reselection ($\mathrm{cond} < 10^5$).
In 2D, for $r > 100$, conditioning should be stabilized by
Rayleigh--Ritz reselection or by QR-merging with 10--20 online
POD modes; this is a conservative rescue for the fragile 2D
high-rank regime (especially Laplacian cases, with occasional
thermal exceptions).
For tensor-product Laplacian references, analytical eigenmodes
(Section~\ref{sec:analytical}, Appendix~\ref{app:analytical})
are preferable whenever available, as they remove the precompute
bottleneck almost entirely.
Breakeven with CPU direct occurs at 6--32~instances depending on
dimension and coarsening factor; for batches below ${\sim}15$
instances, sparse direct (2D) or AMG-RS (3D) may be preferable.

The largest grids tested here are $50^3$ in 3D and $500^2$ in 2D,
limited by explicit assembly of the sparse Schur complement and by
the use of ARPACK for the reference eigensolve.
These are implementation limits rather than fundamental limits of
the method: the per-instance deflated CG solve is already
matrix-free in spirit, requiring only sparse operator application
and a dense $k \times k$ coarse solve.
Scaling beyond $10^6$~DOF therefore hinges primarily on replacing
the assembled eigensolve with a matrix-free Krylov--Schur-type
procedure~\cite{stewart2002krylov}; a randomized eigensolve was
also tested as an alternative precompute but scaled worse than
ARPACK on this problem class
(Appendix~\ref{app:randomized_eig}).

\subsection{Space--time deflation results}
\label{ssec:st_results}

The central question is whether a spatial deflation basis remains
effective when replicated over time, or whether the temporal coupling
in the all-at-once system must be represented explicitly in the
coarse space.
We test three temporal deflation strategies on the parabolic CHT
benchmark: 48~configurations
($4~\Rey \times 3~\kappa_r \times 4$~grids), with DOF ranging
from 33{,}750 to 540{,}000.
GPU runs use an NVIDIA H100 (80\,GB);
CPU runs use Azure Standard\_E64\_v3 nodes
with PETSc under MPI (8~ranks).

Table~\ref{tab:st_iter} summarizes the CG iteration reduction
across all 48~configurations for each deflation strategy.

\begin{table}[htbp]
\centering\small
\caption{CG iteration reduction across 48 space--time
  configurations (4~grids, 4~$\Rey$, 3~$\kappa_r$).
  AMG-RS was evaluated at $15^3{\times}10$,
  $20^3{\times}10$, and $25^3{\times}20$
  ($N \leq 312{,}500$) but not at $30^3{\times}20$;
  its range is over that subset.}
\label{tab:st_iter}
\begin{tabular}{@{} l r rr @{}}
\toprule
Method & Basis size & Range & Median \\
\midrule
Constant-in-time ($r{=}30$)
  & 30     & 1--16\%   & 9\%  \\
Constant-in-time ($r{=}100$)
  & 100    & 2--19\%   & 9\%  \\
Cosine ($W_t \otimes Z_s$)
  & $\le 300$ & 16--82\%  & 48\% \\
Kronecker ($I_{n_t} \otimes Z_s$)
  & $\le 600$ & 56--85\%  & 75\% \\
Lanczos ($r{=}30$)
  & 30     & 18--34\%  & 30\% \\
AMG-RS (CPU)
  & ---    & 92--97\%  & 96\% \\
\bottomrule
\end{tabular}
\end{table}

Both constant-in-time variants achieve only small iteration
reduction: 1--16\% at $r = 30$ and 2--19\% at $r = 100$ (median
9\% in both cases). Increasing the number of spatial modes yields
negligible improvement, indicating that the dominant residual
error is not missing spatial content in the coarse space but
missing temporal degrees of freedom.

The Kronecker basis $I_{n_t} \otimes Z_s$ achieves 56--85\%
reduction (median 75\%), consistent across all Reynolds numbers
and grids.  The reduction is strongest at $\kappa_r = 10$
(median 81\%) and weakest at $\kappa_r = 1$ (median 65\%).
The coarse-solve matrix is at most $600 \times 600$
($k = 30$, $n_t = 20$) and adds only a small dense coarse-solve
overhead in our runs.

The cosine basis matches the Kronecker at $n_t = 10$
(both 56--82\%, median 76\%) because with $k_t = n_t$ the
two bases span the same column space.  At $n_t = 20$ --- pooling the two $n_t = 20$ grids,
$25^3{\times}20$ and $30^3{\times}20$ (Table~\ref{tab:app_st_iter}
lists the latter) --- the cosine basis degrades to 16--41\%
(median 27\%), while Kronecker maintains 61--85\% (median 73\%);
the 61\% lower bound is the $25^3{\times}20$ minimum.
This shows that temporal compression is only safe when the chosen
temporal basis spans the full relevant time subspace; once
$k_t < n_t$, the cosine basis can no longer represent the
instance-dependent temporal structure needed for robust deflation.

Per-$(\Rey, \kappa_r)$ iteration reductions at the smallest and
largest grids, including the Cosine variant alongside Kron and
AMG-RS, are tabulated in Table~\ref{tab:app_st_iter}
(Appendix~\ref{app:st_tables}); the qualitative pattern is the same
as Table~\ref{tab:st_iter}. The three temporal deflation strategies
tested above are defined in Appendix~\ref{sec:extensions}, and the
parabolic CHT benchmark configurations are in
Appendix~\ref{app:bench_spacetime}.

As in the steady-state comparison, this is an end-to-end deployment
wall-time comparison: Kronecker-deflated CG runs on GPU while
AMG-RS runs in its standard CPU implementation.
Table~\ref{tab:st_walltime} summarizes per-instance wall-times by
grid; the per-$(\Rey, \kappa_r)$ breakdown that produced these
ranges is in Table~\ref{tab:app_st_walltime_full}
(Appendix~\ref{app:st_tables}).

\begin{table}[htbp]
\centering\small
\caption{Per-instance wall-time summary by space--time grid:
  median (and full range across the 12 $\Rey \times \kappa_r$
  configurations) for cold CPU CG, CPU AMG-RS, and GPU
  Kronecker-deflated CG, plus the AMG/Kron deployment ratio.
  Per-instance numbers are averaged over 30 parametric instances
  per configuration. The GPU/CPU advantage grows with grid size.
  Per-configuration detail is in
  Table~\ref{tab:app_st_walltime_full}
  (Appendix~\ref{app:st_tables}).}
\label{tab:st_walltime}
\begin{tabular}{@{} l rrr r @{}}
\toprule
Grid & Cold CPU & AMG-RS & Kron GPU & AMG/Kron \\
     & median (range) & median (range) & median (range) & range \\
\midrule
$15^3 \times 10$
  & 2.24\,s (0.7--2.4\,s)
  & 1.27\,s (0.7--1.5\,s)
  & 0.45\,s (0.37--0.59\,s)
  & 2.0--3.0$\times$ \\
$20^3 \times 10$
  & 6.40\,s (2.4--9.5\,s)
  & 3.74\,s (1.95--5.62\,s)
  & 0.62\,s (0.52--0.77\,s)
  & 3.7--7.6$\times$ \\
$25^3 \times 20$
  & 40.7\,s (14.5--58.1\,s)
  & 19.5\,s (7.6--33.7\,s)
  & 2.42\,s (1.92--3.19\,s)
  & 3.9--11.0$\times$ \\
\bottomrule
\end{tabular}
\end{table}

Figure~\ref{fig:st_amg_kron} visualizes the widening GPU advantage:
the median AMG-RS/Kronecker wall-time ratio grows from $2.8\times$
at 34K~DOF to $8.1\times$ at 312K~DOF across all
$\Rey \times \kappa_r$ combinations.

\begin{figure}[htbp]
\centering
\includegraphics[width=0.75\textwidth]{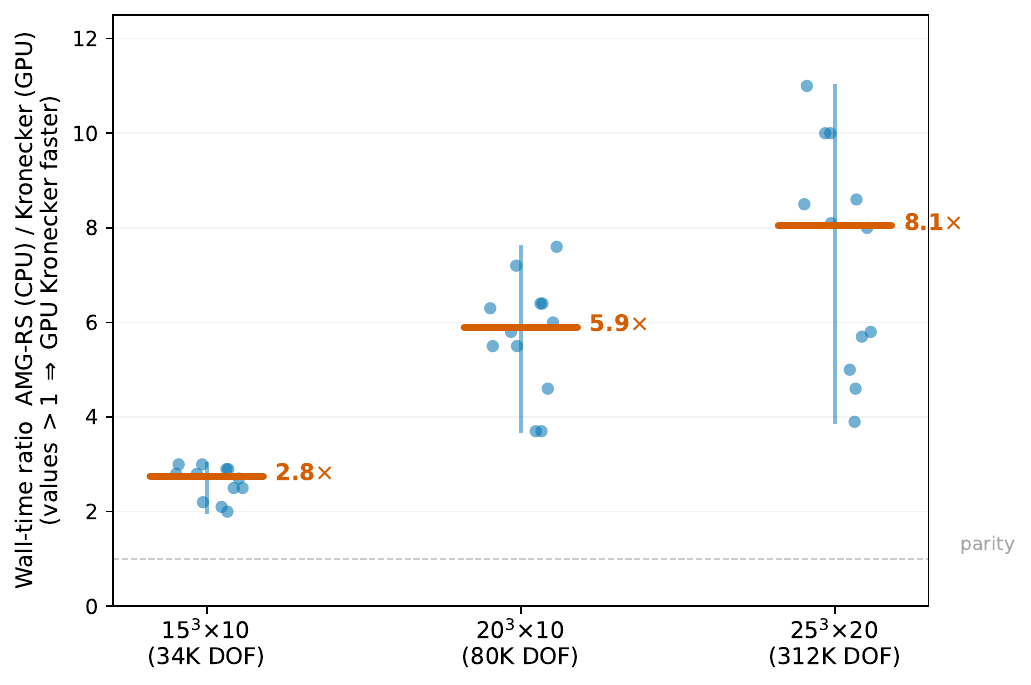}
\caption{Per-instance wall-time ratio AMG-RS~(CPU) / Kronecker~(GPU)
  at the three grids where AMG-RS was evaluated
  ($15^3{\times}10$, $20^3{\times}10$, $25^3{\times}20$).
  Each dot is one $\Rey \times \kappa_r$ configuration;
  orange bars show the median. This is a deployment ratio (CPU
  AMG-RS vs GPU Kronecker), not a per-iteration algorithmic
  comparison. The GPU advantage widens with DOF because the
  measured wall-time follows $t \sim N^{1.22}$ for AMG-RS and
  $t \sim N^{0.75}$ for GPU Kronecker over the tested grid range.}
\label{fig:st_amg_kron}
\end{figure}

Table~\ref{tab:st_scaling} reports wall-time scaling exponents
$p$ fitted as $t \propto N^p$.

\begin{table}[htbp]
\centering\small
\caption{Empirical scaling fits (wall-time per instance
  $t \sim N^p$, iterations $\sim N^q$) over the three
  AMG-tested grids: $15^3{\times}10$, $20^3{\times}10$, and
  $25^3{\times}20$. The third grid changes both the spatial size
  ($20{\to}25$) and the temporal size ($10{\to}20$), so $p$
  and $q$ here mix spatial and temporal scaling rather than
  isolating either. The fitted exponents are empirical summaries
  over these three grids --- a three-point fit with correspondingly
  wide uncertainty --- and are indicative only; they should not be
  over-interpreted at the quoted precision or read as asymptotic
  complexity estimates.}
\label{tab:st_scaling}
\begin{tabular}{@{} l rr @{}}
\toprule
Method & Iters $q$ & Walltime $p$ \\
\midrule
Cold GPU       & 0.46 & 0.72 \\
Kronecker GPU  & 0.44 & 0.75 \\
Cosine         & 0.95 & 1.06 \\
Cold CPU       & 0.47 & 1.37 \\
AMG-RS (CPU)   & 0.21 & 1.22 \\
\bottomrule
\end{tabular}
\end{table}

Cold GPU and Kronecker GPU have nearly identical wall-time exponents
($p = 0.72$ vs $0.75$): Kronecker deflation reduces the
iteration count by a roughly constant factor ($\sim$75\% fewer
iterations) but does not change how that count grows with~$N$.
The per-instance speedup of deflation over cold CG is therefore a
constant-factor advantage, not a scaling-class improvement.
The scaling advantage over CPU methods ($p \approx 0.7$ on GPU
vs $1.1$--$1.4$ on CPU) comes from the GPU's parallel sparse-matvec
throughput.

AMG-RS achieves higher iteration reduction (92--97\% vs.\
56--85\%); on iteration counts alone the algorithmic comparison
favors AMG-RS. In the tested deployment, however, GPU Kronecker
deflated CG is 2.0--11.0$\times$ faster per instance than the CPU
AMG-RS baseline, with the gap widening at larger~$N$
because the measured per-instance wall-time follows
$t \sim N^{0.75}$ for GPU Kronecker and $t \sim N^{1.22}$ for
CPU AMG-RS over the tested grid range.
Measured CPU cold CG wall-time follows $t \sim N^{1.37}$ in the same
deployment, $2$--$40\times$ slower per instance than GPU Kronecker.

For parabolic optimal control on tensor-product grids, the
recommended configuration is: analytical spatial eigenmodes
(sub-second precompute) combined with Kronecker temporal
expansion and Jacobi-preconditioned deflated CG on GPU.
The spatial precompute is identical to the steady-state case
(sub-second at $50^3$); only the coarse-solve dimension
increases from $k$ to $k n_t$.

\subsection{Solution accuracy vs direct sparse solve}
\label{ssec:accuracy}

Throughout this paper, every CG variant uses the same unprojected
relative residual tolerance $\mathrm{rtol} = 10^{-10}$, and the deflated
solves are otherwise identical to cold CG except for the projector
applied at each iteration. To verify that the reported speedups do
not come from a relaxation of accuracy, we recorded the relative
state error
\begin{equation}
  \label{eq:accuracy_rel_err}
  e_{\mathrm{rel}}
  \;:=\;
  \frac{\|\,\yy_{\mathrm{method}} - \yy_{\mathrm{direct}}\,\|_2}
       {\|\,\yy_{\mathrm{direct}}\,\|_2}
\end{equation}
on the inactive subset $\calI$ for every per-instance solve in the
space--time CHT benchmark, where $\yy_{\mathrm{direct}}$ is the
CPU sparse direct reference.
This benchmark exercises every method class in the paper (cold CG,
eigenmode-deflated CG with full and coarse-grid bases, analytical
deflation, Kronecker and cosine space--time bases, and AMG-RS) at
four space--time grids ($15^3{\times}10$ to $30^3{\times}20$,
$\approx$34K to 540K DOF) and 12 $\Rey \times \kappa_r$
configurations.
Table~\ref{tab:accuracy_summary} summarizes $e_{\mathrm{rel}}$
pooled across all grids and configurations.

\begin{table}[htbp]
\centering\small
\caption{Relative state error~$e_{\mathrm{rel}}$ versus the
  CPU sparse direct reference, pooled across the four space--time CHT grids
  ($15^3{\times}10$ to $30^3{\times}20$) and 12 $\Rey\times\kappa_r$
  configurations. CG residual tolerance was $\mathrm{rtol} = 10^{-10}$.
  Here a \emph{per-instance solve} is one solve of the reduced
  system~\eqref{eq:reduced} for a single parameter instance (inner
  solves only; outer active-set iterations are not counted), and $N$
  is the number of such solves contributing to each row. $N$ differs
  across rows because the methods were exercised over different sets
  of grids, $(\Rey,\kappa_r)$ configurations, and deflation ranks in
  the accuracy audit (e.g., cold CG and the $k{=}100$ eigenmode
  variant recur as the companion baseline across several rank sweeps,
  hence their larger $N$).
  CPU sparse direct is the reference and has $e_{\mathrm{rel}} \equiv 0$
  by definition.}
\label{tab:accuracy_summary}
\begin{tabular}{@{} l r r r r @{}}
\toprule
Method & $N$ & Median & 95th pct.\ & Max \\
\midrule
Cold CG (Jacobi)              & 4350 & $2.6\!\times\!10^{-10}$ & $5.9\!\times\!10^{-9}$ & $1.3\!\times\!10^{-8}$ \\
Eigenmode CG, full $k{=}30$   &  720 & $1.4\!\times\!10^{-10}$ & $4.3\!\times\!10^{-9}$ & $6.3\!\times\!10^{-9}$ \\
Eigenmode CG, full $k{=}100$  & 2190 & $2.5\!\times\!10^{-10}$ & $4.9\!\times\!10^{-9}$ & $1.2\!\times\!10^{-8}$ \\
Eigenmode CG, coarse $c{=}2$  & 1470 & $2.8\!\times\!10^{-10}$ & $5.3\!\times\!10^{-9}$ & $8.9\!\times\!10^{-9}$ \\
Eigenmode CG, analytical      & 1470 & $2.8\!\times\!10^{-10}$ & $5.3\!\times\!10^{-9}$ & $8.9\!\times\!10^{-9}$ \\
Kronecker space--time         &  720 & $9.2\!\times\!10^{-11}$ & $3.0\!\times\!10^{-9}$ & $5.5\!\times\!10^{-9}$ \\
Cosine space--time            &  720 & $1.0\!\times\!10^{-10}$ & $3.5\!\times\!10^{-9}$ & $5.7\!\times\!10^{-9}$ \\
AMG-RS (CPU)                  & 1080 & $7.2\!\times\!10^{-11}$ & $2.8\!\times\!10^{-9}$ & $5.7\!\times\!10^{-9}$ \\
\bottomrule
\end{tabular}
\end{table}

Across all $\sim$12{,}000 per-instance solves spanning eight method
variants and four grids, the largest observed relative state error
is $1.3 \times 10^{-8}$ (cold~CG at the largest grid), with
medians $\sim$$10^{-10}$ and 95th-percentile errors below
$6 \times 10^{-9}$. Every deflated variant matches cold~CG to within
its own accuracy band: the eigenmode, Kronecker, and cosine bases
do not perturb the linear solve beyond the floor set by the chosen
residual tolerance. AMG-RS, which uses a different preconditioning
strategy, shows the same accuracy band as deflated CG. Therefore,
the wall-time speedups reported in
Tables~\ref{tab:walltime}--\ref{tab:st_walltime} reflect genuine
reductions in iteration count and per-iteration cost, not an
accuracy/speed trade-off.

\FloatBarrier  

\section{Limitations and future work}
\label{sec:limitations}

\paragraph{Limitations.}
\begin{enumerate}
  \item \textbf{Spectral coherence is empirically validated, not
    formally proved.}  Cauchy interlacing and Davis--Kahan motivate
    the two-regime prediction, but the usable deflation cutoff and
    robustness regime are identified empirically across the tested
    operator families.  Strongly nonlinear regimes where the
    operator itself varies significantly remain open.

  \item \textbf{Reference-operator choice is problem-dependent
    outside the parameter-invariant setting.}  For linear PDEs,
    $M_{\mathrm{ref}}$ is uniquely determined.  For CHT, the
    experiments use a diffusion-only baseline; a principled
    selection rule for genuinely new problem families remains to
    be developed.

  \item \textbf{The online policy is hand-designed, not fully
    automatic.}  The two conditioning thresholds
    ($\tau_{\mathrm{safe}}$, $\tau_{\mathrm{cond}}$) provide
    online safeguards, but the decision of when to enable Ritz
    reselection still requires problem-class knowledge.

  \item \textbf{2D conditioning wall.}  Raw-eigenmode deflation
    can fail at moderate-to-high rank in the fragile 2D Laplacian
    regime; Ritz reselection and QR-combined deflation provide
    workarounds.

  \item \textbf{Scalability ceiling.}  The largest grids tested
    are $50^3$ in~3D and $500^2$ in~2D (up to ${\sim}540{,}000$
    DOF in space--time), limited by explicit Schur-complement
    assembly and iterative eigensolve.  Scaling beyond $10^6$~DOF
    would require matrix-free Krylov--Schur eigensolvers ---
    plausible but not yet demonstrated.

  \item \textbf{CPU baseline parallelism.}  The CPU baselines run with
    PETSc under MPI (8~ranks), using hypre BoomerAMG for the AMG-RS
    baseline.  Their dominant sparse kernels are memory-bandwidth-bound,
    so wall-time gains from additional ranks plateau once node memory
    bandwidth saturates; a larger multi-node deployment could still
    narrow the GPU wall-time advantage at larger DOF counts.  The
    iteration-reduction comparisons
    (Tables~\ref{tab:3d_iter}--\ref{tab:coarse_quality}) are
    hardware-independent and unaffected.

  \item \textbf{Problem class.}  The method targets the Schur
    complement reduction of state constrained problems; extension
    to control constraints or mixed formulations would require
    modifications to the active-set structure.

  \item \textbf{Kernel-level scope.}  The experiments are
    \emph{kernel-level} benchmarks of the inactive-set linear
    solve, which is the inner step of a primal active-set
    iteration; they are not full end-to-end active-set wall-time
    studies. Outer-loop costs (active-set updates, KKT residual
    evaluations, convergence checks) are excluded from the
    reported per-instance times, and the calibration in
    Section~\ref{sec:setup} fixes the active-set fraction at
    ${\sim}\,20\%$ at the midpoint parameter so that the inner
    solves are comparable across configurations rather than
    sweeping the full operating envelope of an outer active-set loop.
\end{enumerate}

\paragraph{Future directions.}
\emph{Adaptive deflation rank} via online condition monitoring
could replace the current fixed-rank strategy, automatically
selecting the largest safe rank per instance.
\emph{Operator-aware deflation} for strongly nonlinear PDEs
(Newton linearization, larger operator drift) would broaden
the applicability beyond the Picard-linearized regime.
\emph{Block deflation} with multiple right-hand sides (e.g.,
simultaneous parameter instances) could further amortize the
basis construction cost.
\emph{Distributed-memory GPU parallelism} would extend the
demonstrated single-GPU results to multi-GPU clusters,
enabling 3D grids beyond $50^3$.
Finally, integration with reduced basis methods --- using
deflated CG as the inner solver within a ROM framework ---
could combine the accuracy of our approach with the online
efficiency of model reduction.

\section{Conclusions}
\label{sec:conclusions}

We presented a reusable spectral-deflation strategy for the repeated
inactive-set Schur-complement solves that arise in parametric
state-constrained optimal control. Rather than recycling spectral
information between consecutive restricted systems---which is
unreliable when the active set changes discontinuously---we anchored
a deflation basis to the low eigenmodes of a single full-domain
reference operator, restricted them online to each inactive set, and
optionally enriched them with POD snapshots under explicit
conditioning safeguards. The approach leaves the discrete constrained
optimality system unchanged and solves each instance to the
prescribed Krylov tolerance, so it complements rather than replaces
surrogate modeling.

The method works because of a spectral-coherence property: the
leading restricted reference modes remain aligned with the
slow-convergence subspace of each instance even under active-set
restriction and mild operator drift, as motivated by Cauchy
interlacing and Davis--Kahan bounds and supported by principal-angle
diagnostics. Across steady, nonlinear thermal,
conjugate-heat-transfer, and parabolic space--time benchmarks, and
given the active set at each query, deflation removed a large fraction
of CG iterations on the inactive-set Schur-complement solve and, in our
GPU deployment, delivered substantial wall-time gains over CPU
sparse-direct and algebraic-multigrid baselines while matching the
direct solution to solver tolerance. Coarse-grid prolongation and
analytical tensor-product eigenmodes reduced the offline eigensolve
enough to amortize the many-query speedup within a single parametric
sweep --- but this amortization is clean only for the structured
subset. For tensor-product Laplacian references the analytical modes
are effectively offline-free (breakeven at the first instance),
whereas the harder conjugate-heat-transfer cases rely on coarse-grid
prolongation with breakeven at $4$--$32$ instances, and with a
fine-grid eigensolve the per-instance win is precompute-dominated
rather than amortized. The demonstrated scale (up to $50^3$,
${\sim}125$K DOF per steady instance) also remains well below an
industrial transformer-scale thermal model; closing that gap needs the
matrix-free assembly and eigensolve path noted above, which we argue
is feasible but do not demonstrate here.

The principal open problems are a formal account of spectral
coherence beyond the empirically tested regimes, a principled
reference-operator choice for genuinely new problem families, fully
automatic online policy selection, and matrix-free eigensolvers to
move past the current assembly-limited grid sizes. Extending the
framework to control-constrained and strongly nonlinear settings,
and embedding deflated CG as the inner solver within a
reduced-order-modeling loop, are natural next steps.

\section*{Acknowledgements}

Y.~Choi was supported by the U.S.\ Department of Energy, Office of
Science, Office of Advanced Scientific Computing Research, Scientific
Discovery through Advanced Computing (SciDAC) program through the
LEADS SciDAC Institute under Project Number~SCW1933 at LLNL and
through the CHaRMNET Mathematical Multifaceted Integrated Capability
Center (MMICC) under Award Number~DE-SC0023164. This work stems from
an earlier LLNL LDRD project (21-FS-042). Lawrence Livermore National
Laboratory is operated by Lawrence Livermore National Security, LLC,
for the U.S.\ Department of Energy, National Nuclear Security
Administration under Contract~DE-AC52-07NA27344. LLNL document
release number: LLNL-JRNL-2020445.

The authors used Claude (Anthropic) and ChatGPT (OpenAI) during
manuscript preparation, primarily for editorial assistance, code
review, and prose polishing; all scientific content, mathematical
derivations, experimental design, data analysis, and conclusions are
the authors' own.

\section*{Code, data, and reproducibility}

The implementation, benchmark configurations, and result archives
underlying this study are available to the editor and assigned
reviewers on request during the review process. A public release
will follow upon acceptance, subject to institutional approval; in
the interim, interested readers may contact the corresponding author.

The steady-state experiments use NVIDIA H200 GPUs with CPU baselines
on Intel Xeon Platinum 8480+ nodes; the space--time experiments
use NVIDIA H100 GPUs with CPU baselines on Azure
\texttt{Standard\_E64\_v3} nodes. The CPU baselines are run with PETSc
under MPI (8~ranks); per-component details are in
Appendix~\ref{app:gpu_implementation} and
Appendix~\ref{app:bench_spacetime}. The headline algorithmic claims
(CG iteration counts and iteration-reduction percentages) are
hardware-independent.
The GPU deflation solver is implemented in PyTorch~2.9.1
(CUDA~12.8, double precision throughout) with device synchronization
for timing; the reference eigenpairs are computed with a shift-invert
Lanczos eigensolver (ARPACK~\citep{lehoucq1998arpack}). The GPU
software environment uses NumPy~2.4.0 and CUDA~12.8 on
Python~$\geq$~3.11, with exact versions pinned in the project lock
file released with the code. The CPU baselines---sparse direct,
classical Ruge--St\"uben AMG via hypre BoomerAMG, and
Jacobi-preconditioned CG---are run with PETSc~\citep{balay1997petsc}
under MPI (8~ranks); the corresponding versions are pinned in that
solver's repository.
The parameter sequence used for online benchmarks is the
deterministic causal stream described in Section~\ref{sec:protocol};
the spectral-diagnostic sweep
(Section~\ref{ssec:spectral_diagnostics}) uses a fixed parameter list.
Any randomized components (e.g., the randomized eigensolver of
Appendix~\ref{app:randomized_eig}) use a fixed seed for
reproducibility, fixing both NumPy and PyTorch RNG state.

\newpage
\appendix

\section*{Appendix roadmap}
This appendix is organized into six groups:
\emph{theory support} (spectral-coherence analysis,
conditioning-wall mechanism, nonlinear verification, detailed
spectral diagnostics),
\emph{method extensions} (non-symmetric CHT operator, space--time
formulation),
\emph{benchmark definitions} (full problem specifications),
\emph{discretization and implementation} (finite-difference stencils,
active-set pseudocode, GPU solver details),
\emph{ablations and practical guidance} (CPU/GPU scaling, randomized
eigensolve, budget allocation, compressed sensing, Rayleigh--Ritz
reselection, analytical eigenmodes),
and \emph{full scaling tables} underlying the summary statistics in
the main text.

\section{Spectral coherence: theoretical analysis}
\label{app:spectral_theory}

This section explains why a basis computed once from a reference
operator can remain useful after restriction to changing inactive sets.
It separates three levels of claim: exact matrix facts,
perturbation-theoretic guidance, and empirical design rules. The first
two subsections isolate the two mechanisms that can change the operator;
the remaining subsections translate those mechanisms into basis-design
rules.

\subsection{Sources of spectral variation}
\label{sec:variation_sources}

Let $M_{\mathrm{ref}}$ denote a reference full-domain Schur complement,
and let $M_{\calI_m\calI_m}$ denote the restricted SPD system solved for
instance~$m$. Across a parametric sweep, these operators can differ for
two distinct reasons:

\begin{enumerate}
\item \emph{Active set perturbation.}
The first source of variation is geometric: the active-set
identification selects a different inactive set at each instance. We measure the change in active set mask
by the normalized Hamming distance
\begin{equation}\label{eq:delta}
  \delta_m
  =
  \frac{|\calA_{\mathrm{ref}} \triangle \calA_m|}{N},
\end{equation}
where $\calA_{\mathrm{ref}}$ is the active set
\footnote{Note that, even though $M_{\mathrm{ref}}$ is assembled on the full domain, the parameter
instance $\theta_{\mathrm{ref}}$ \emph{does} have an active set $\calA_{\mathrm{ref}}$
and an inactive set $\calI_{\mathrm{ref}}$.
In other words, $M_{\calI_{\mathrm{ref}}\calI_{\mathrm{ref}}}$ is still
a submatrix of $M_{\mathrm{ref}}$: the latter is employed in the reusable
spectral deflation, while the former is typically unused.}
corresponding to the reference instance
$\theta_{\mathrm{ref}}$ used to build $M_{\mathrm{ref}}$,
$\calA_m \subset \{1,\dots,N\}$ is the active set at instance~$m$,
$N$ is the full problem size, and
$\triangle$ denotes symmetric difference. Thus $\delta_m$ counts the
fraction of DOFs whose active/inactive status changes relative to the
reference instance.

A second, related quantity is the active fraction
\begin{equation}\label{eq:rho}
  \rho_m
  =
  \frac{|\calA_m|}{N}
  =
  1 - \frac{n_{\calI_m}}{N},
\end{equation}
which determines the size of the inactive-set system actually solved at
instance~$m$.  It is useful to keep $\rho_m$ and $\delta_m$ distinct:
two active sets can have the same cardinality but occupy different
locations, and two active sets with similar masks can still yield
different inactive-set cardinalities. The interlacing bounds of
Section~\ref{sec:interlacing} depend on $n_{\calI_m}$, equivalently on
$\rho_m$, while $\delta_m$ measures geometric mismatch relative to the
reference active set.

\item \emph{Operator drift.}
The second source of variation is algebraic: the full-domain operator
\[
M_m = \alpha A(\theta_m)^\top A(\theta_m) + I
\]
itself may change upon parameter changes. We quantify this by
\begin{equation}\label{eq:epsilon}
  \varepsilon_m
  =
  \frac{\|M_{\mathrm{ref}} - M_m\|_2}{\|M_{\mathrm{ref}}\|_2},
\end{equation}
Note that in \eqref{eq:epsilon} it is essential to consider the full-domain matrices
$M_{\mathrm{ref}}$ and $M_m$,
rather than the submatrices $M_{\calI_{\mathrm{ref}}\calI_{\mathrm{ref}}}$
and $M_{\calI_{m}\calI_{m}}$ defined on the corresponding inactive sets,
since in the latter case the numerator would have required taking the difference of
matrices of possibly different dimensions.
\end{enumerate}

The point of separating $\delta_m$ from $\varepsilon_m$ is that they are
controlled by different classical tools: principal submatrix
interlacing for active set restriction, and Davis--Kahan perturbation
bounds for drift between full-domain operators of the same size.

\subsection{Active set restriction: Cauchy interlacing}
\label{sec:interlacing}

The first variation source ($\delta_m$) arises from extracting a
principal submatrix of $M$. Cauchy's interlacing theorem (a classical
result, not ours) controls how the spectrum of such a submatrix can
move.

\begin{proposition}[Cauchy interlacing for principal submatrices
{\citep{hwang2004cauchy,horn2012matrix}}]
\label{prop:interlacing}
Let $M \in \reals^{N \times N}$ be symmetric with eigenvalues
$\lambda_1 \le \lambda_2 \le \cdots \le \lambda_N$, and let
$M_{\calI\calI} \in \reals^{n_{\calI} \times n_{\calI}}$ be a principal
submatrix with eigenvalues
$\mu_1 \le \mu_2 \le \cdots \le \mu_{n_{\calI}}$. Then, for
$j = 1, \dots, n_{\calI}$,
\begin{equation}\label{eq:interlace}
  \lambda_j \le \mu_j \le \lambda_{j + N - n_{\calI}}.
\end{equation}
\end{proposition}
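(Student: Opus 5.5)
The plan is to prove Proposition~\ref{prop:interlacing} via the Courant--Fischer min--max characterization of eigenvalues of a symmetric matrix:
\[
  \lambda_j(M) \;=\; \min_{\substack{S\subset\reals^N\\ \dim S=j}}\;\max_{\substack{\vv\in S\\ \vv\neq 0}} \frac{\vv^\top M\vv}{\vv^\top\vv}
  \;=\; \max_{\substack{T\subset\reals^N\\ \dim T=N-j+1}}\;\min_{\substack{\vv\in T\\ \vv\neq 0}} \frac{\vv^\top M\vv}{\vv^\top\vv}.
\]
The key device is the zero-extension map $E_\calI:\reals^{n_\calI}\to\reals^N$, which places a vector on the indices in $\calI$ and sets the remaining entries to zero. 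This map is an isometry, and it satisfies $\vv^\top M_{\calI\calI}\vv = (E_\calI\vv)^\top M (E_\calI\vv)$. Consequently, every Rayleigh quotient of $M_{\calI\calI}$ is also a Rayleigh quotient of $M$, evaluated over the $n_\calI$-dimensional coordinate subspace $V=\mathrm{range}(E_\calI)$.

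\emph{Lower bound.} Let $S'\subset\reals^{n_\calI}$ be a $j$-dimensional subspace attaining the min in the min--max formula for $\mu_j$. Its image $E_\calI S'$ is a $j$-dimensional subspace of $\reals^N$, and on it the Rayleigh quotient of $M$ coincides with that of $M_{\calI\calI}$ on $S'$. Since $\lambda_j$ is the minimum over \emph{all} $j$-dimensional subspaces of $\reals^N$, it follows that $\lambda_j\le \mu_j$.

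\emph{Upper bound.} Here I will use the max--min form. Let $T'\subset\reals^{n_\calI}$ attain the max for $\mu_j$, so $\dim T' = n_\calI-j+1$. Then $E_\calI T'$ is a subspace of $\reals^N$ of dimension $n_\calI - j+1$. We can write this as $N-k+1$ with $k = j+N-n_\calI$. Since $\lambda_k$ is the maximum over all subspaces of dimension $N-k+1$, we get $\mu_j\le\lambda_{j+N-n_\calI}$. The index $k$ lies in $\{1,\dots,N\}$ precisely because $1\le j\le n_\calI$.

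The only delicate step is this index bookkeeping in the upper bound: getting the dimension $n_\calI-j+1$ to match $N-k+1$. As a cross-check, I would alternatively derive the upper bound by applying the lower bound to $-M$. Its principal submatrix is $-M_{\calI\calI}$, and its ascending eigenvalues are $-\lambda_{N+1-i}$ and $-\mu_{n_\calI+1-i}$. Applying the lower bound gives $-\lambda_{N+1-i}\le-\mu_{n_\calI+1-i}$, and the substitution $j=n_\calI+1-i$ turns this into $\mu_j\le\lambda_{j+N-n_\calI}$. No positive definiteness is needed, only symmetry, so the result applies in particular to $M_{\mathrm{ref}}$ and $M(\theta)$.
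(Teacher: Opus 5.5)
Your proof is correct. The Courant--Fischer argument through the zero-extension isometry $E_\calI$ gives $\lambda_j\le\mu_j$ directly. The upper bound follows either from the max--min form, where your index bookkeeping $k=j+N-n_\calI\in\{1,\dots,N\}$ is right, or from the reflection $M\mapsto -M$; both use only symmetry.

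The paper gives no proof of Proposition~\ref{prop:interlacing}; it only cites the classical result. Your argument is essentially the standard min--max proof of that cited theorem (cf.\ Horn--Johnson), and it treats all $N-n_\calI$ deleted indices at once, so no induction over single deletions is needed.
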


The two bounds in~\eqref{eq:interlace} sandwich each
$\mu_j$ within a window of width $\lambda_{j+N-n_{\calI}} - \lambda_j$
of the parent spectrum.  Two methodological consequences follow.
First, the smallest eigenvalues of $M_{\calI\calI}$ remain tied to the
small eigenvalues of~$M$: $\mu_1 \ge \lambda_1$ and
$\mu_j \le \lambda_{j+N-n_{\calI}}$, so any low eigenvalue of
$M_{\calI\calI}$ is bounded above by a moderate eigenvalue of~$M$.
Second, this windowing is sharper when $n_{\calI}$ is closer to $N$,
i.e.\ when the active fraction $\rho_m$ is small (few DOFs are removed).
The window width is governed by the submatrix size through
$N - n_{\calI} = |\calA_m|$ --- equivalently by $\rho_m$ --- and
\emph{not} by the symmetric-difference distance $\delta_m$ from the
reference active set. For smaller $n_{\calI}$ (larger $\rho_m$), the
upper bound $\lambda_{j+N-n_{\calI}}$ moves further up the parent
spectrum and the windows widen.

\paragraph{Connection to the deflation basis.}
Our reference eigenmode basis $\Phi_{\mathrm{eig}}$ is built from
\emph{small-$\lambda$} eigenpairs of $M_{\mathrm{ref}}$ because those
are precisely the directions along which CG converges slowest (in
preconditioned CG, the iteration count is governed by the spread of
$\kappa(M_{\calI\calI})$, dominated by the smallest eigenvalues; see,
e.g.,~\citet{saad2003iterative}). By
Proposition~\ref{prop:interlacing}, this small-$\lambda$ subspace of
$M_{\mathrm{ref}}$ remains close, in eigenvalue location, to the
small-$\lambda$ subspace of $M_{\calI\calI}$ as long as the active
fraction $\rho_m$ is not too large; the interlacing window depends on
$n_{\calI}$ (equivalently $\rho_m$), not on the active-set distance
$\delta_m$. A small $\delta_m$ may correlate with similar restriction
geometry, but it does not itself enter the interlacing inequality. This eigenvalue-location stability is the structural
reason we expect the restricted eigenmodes to be useful;
how well their \emph{directions} survive restriction is an empirical
question that we quantify in
Section~\ref{ssec:spectral_diagnostics}.

\subsection{Operator drift: Davis--Kahan guidance}
\label{sec:daviskahan}

The second variation source ($\varepsilon_m$) is non-trivial only when
$A(\theta)$ varies. We start from a simple structural observation
(\emph{ours}):

\begin{proposition}[Vanishing operator drift for parameter-independent~$A$]
\label{prop:drift}
Assume $A(\theta) \equiv A$ is independent of~$\theta$%
\footnote{This does not mean the OCP is parameter-free: the parameter
may still enter through $\yy_d(\theta)$ or $\bpsi(\theta)$.}.
Then $M_m \equiv M_{\mathrm{ref}}$, hence $\varepsilon_m = 0$ for every
$m$, and spectral variation arises entirely from active set
restriction.
\end{proposition}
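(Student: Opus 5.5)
The proof is a direct substitution into the definitions of Section~\ref{sec:variation_sources}. Both $M_{\mathrm{ref}}$ and $M_m$ are built by the same formula $M(\theta)=\alpha A(\theta)^\top A(\theta)+I$ from~\eqref{eq:schur}, evaluated at $\theta_{\mathrm{ref}}$ and $\theta_m$ respectively. Under the hypothesis $A(\theta)\equiv A$, both evaluate to $\alpha A^\top A + I$. This gives $M_m = M_{\mathrm{ref}}$ as $N\times N$ matrices for every $m$. We should note that $\alpha$ is a fixed scalar, independent of $\theta$ (the paper uses $\alpha=10^{-3}$ throughout). We should also note that the discretization, and hence $N$, is the same across instances, so the two matrices live in the same space.

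Next, substitute into~\eqref{eq:epsilon}. The numerator $\|M_{\mathrm{ref}}-M_m\|_2=\|0\|_2=0$. The denominator is nonzero, because $M_{\mathrm{ref}}$ is SPD with $\lambda_{\min}\ge 1$ (as argued after~\eqref{eq:schur}), so $\|M_{\mathrm{ref}}\|_2\ge 1>0$. Hence $\varepsilon_m=0$ is well defined for all $m$.

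For the final clause, the claim is that the instance operator $M_{\calI_m\calI_m}$ is exactly the principal submatrix $(M_{\mathrm{ref}})_{\calI_m\calI_m}$. This follows from the first step, since restriction to $\calI_m$ commutes with the equality $M_m=M_{\mathrm{ref}}$. Thus the only thing that differs between the restricted systems across instances is the index set $\calI_m$. That set still depends on $\theta$ through $\yy_d(\theta)$ and $\bpsi(\theta)$, via the active-set stage. Its spectral effect is then governed by Proposition~\ref{prop:interlacing} applied with $M=M_{\mathrm{ref}}$.

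There is no real obstacle here. The only care needed is to state precisely what "spectral variation arises entirely from active set restriction" means, namely the submatrix identity above, and to check that the normalization in~\eqref{eq:epsilon} is valid.
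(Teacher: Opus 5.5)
Your proof is correct and takes the same direct-substitution route as the paper: with $A(\theta)\equiv A$, the matrix $M(\theta)=\alpha A^\top A+I$ no longer depends on $\theta$, so $\|M_{\mathrm{ref}}-M_m\|_2=0$. Your two additions are sound refinements that the paper leaves implicit. The first is the check that $\|M_{\mathrm{ref}}\|_2\ge 1$, which keeps $\varepsilon_m$ well defined. The second is the identity $M_{\calI_m\calI_m}=(M_{\mathrm{ref}})_{\calI_m\calI_m}$, which makes the ``entirely from active set restriction'' clause precise.
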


\begin{proof}
Direct: if $A(\theta) \equiv A$, then
$M(\theta) = \alpha A^\top A + I$ is also independent of~$\theta$,
so $M_m = M_{\mathrm{ref}}$ and $\|M_{\mathrm{ref}} - M_m\|_2 = 0$.
\end{proof}

Proposition~\ref{prop:drift} covers the principal benchmark class in
this paper (linear problems with parameter-only-in-$\yy_d$ or
$\bpsi$). For genuinely parameter-dependent or Picard-linearized
nonlinear operators, $\varepsilon_m > 0$ and the
\emph{eigenspace} (as opposed to eigenvalues) of $M_{\mathrm{ref}}$
may rotate relative to $M_m$. The classical Davis--Kahan
$\sin\Theta$ theorem (literature, not ours) bounds this rotation.

\begin{proposition}[Davis--Kahan $\sin\Theta$ bound, restated for our
setting~{\citep{davis1970rotation}}]
\label{prop:dk}
Let $M_{\mathrm{ref}}, M_m \in \reals^{N \times N}$ be symmetric with
eigenvalues $\lambda_1 \le \cdots \le \lambda_N$ (of $M_{\mathrm{ref}}$)
and $\lambda^{(m)}_1 \le \cdots \le \lambda^{(m)}_N$ (of $M_m$).
Fix a target index set $S \subset \{1,\dots,N\}$ and let
$\Lambda_S := \{\lambda_j : j \in S\}$ be the corresponding eigenvalue
cluster of $M_{\mathrm{ref}}$. Define
\[
  \mathcal{U} \;=\; \mathrm{span}\{\bphi_j : j \in S\},
  \qquad
  \mathcal{U}_m \;=\; \mathrm{span}\{\bphi^{(m)}_j : j \in S\},
\]
the invariant subspaces of $M_{\mathrm{ref}}$ and $M_m$ spanned by the
eigenvectors at the same indices. Let $\Theta(\mathcal{U},
\mathcal{U}_m) = \mathrm{diag}(\theta_1, \dots, \theta_{|S|})$ be the
diagonal matrix of principal angles between $\mathcal{U}$ and
$\mathcal{U}_m$, and define the cluster gap
\[
  \mathrm{gap}(\mathcal{U})
  \;:=\;
  \min_{i \in S,\; j \notin S} |\lambda_i - \lambda_j|.
\]
Then
\begin{equation}\label{eq:dk}
  \|\sin\Theta(\mathcal{U}, \mathcal{U}_m)\|_2
  \;\le\;
  \frac{\|M_{\mathrm{ref}} - M_m\|_2}{\mathrm{gap}(\mathcal{U})}.
\end{equation}
\end{proposition}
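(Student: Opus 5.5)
The route is the classical Sylvester-equation argument of \citet{davis1970rotation}. Write $E := M_m - M_{\mathrm{ref}}$. Let $U \in \reals^{N\times|S|}$ have orthonormal columns spanning $\mathcal{U}$, with $M_{\mathrm{ref}}U = U\Lambda_S$. Let $V_\perp$ have orthonormal columns spanning $\mathcal{U}_m^\perp$, with $M_m V_\perp = V_\perp \Lambda^{(m)}_{S^c}$, where $\Lambda^{(m)}_{S^c} = \diag(\lambda^{(m)}_j : j\notin S)$. The principal angles satisfy $\|\sin\Theta(\mathcal{U},\mathcal{U}_m)\|_2 = \|V_\perp^\top U\|_2$, so the whole proof reduces to bounding the $(N-|S|)\times|S|$ matrix $X := V_\perp^\top U$.

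\textbf{Step 1 (Sylvester identity).} Multiply $M_m U = M_{\mathrm{ref}}U + EU = U\Lambda_S + EU$ on the left by $V_\perp^\top$. This gives
\[
\Lambda^{(m)}_{S^c} X - X\Lambda_S = V_\perp^\top E U .
\]
Entrywise, $X_{ji} = (V_\perp^\top E U)_{ji}/(\lambda^{(m)}_j - \lambda_i)$.

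\textbf{Step 2 (separation bound).} Set $\delta := \min_{i\in S,\, j\notin S}|\lambda_i - \lambda^{(m)}_j|$. If $\Lambda_S$ lies in an interval and the $\lambda^{(m)}_j$, $j\notin S$, lie outside it (which holds for the low cluster $S=\{1,\dots,r\}$ used for deflation), the operator-norm form of the Bhatia--Davis--McIntosh Sylvester bound gives $\|X\|_2 \le \|V_\perp^\top E U\|_2/\delta \le \|E\|_2/\delta$.

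\textbf{Step 3 (replacing the mixed gap by $\mathrm{gap}(\mathcal{U})$).} This is the main obstacle. The statement uses the unperturbed gap and constant $1$. Weyl's inequality gives $|\lambda^{(m)}_j - \lambda_j| \le \|E\|_2$, hence only $\delta \ge \mathrm{gap}(\mathcal{U}) - \|E\|_2$. That yields the slightly weaker bound
\[
\|\sin\Theta\|_2 \le \frac{\|E\|_2}{\mathrm{gap}(\mathcal{U})-\|E\|_2}.
\]
For~\eqref{eq:dk} exactly as written, I would first try to add a hypothesis under which $\delta \ge \mathrm{gap}(\mathcal{U})$. One candidate is that the perturbation does not move the complementary eigenvalues toward the cluster, i.e.\ $\lambda^{(m)}_j \ge \lambda_j$ for $j\notin S$. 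This holds, for example, when $E \succeq 0$ and $S$ is the low cluster. Under that hypothesis Step~2 gives~\eqref{eq:dk} directly.

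I expect this step to genuinely need such a hypothesis, and I would check this with a $2\times 2$ test before writing anything further. Take $M_{\mathrm{ref}}=\diag(0,g)$ and $E = \begin{pmatrix} a & e\\ e & -a\end{pmatrix}$ with $2a \approx g$ and $e>0$ small. Then $M_m \approx \begin{pmatrix} g/2 & e\\ e & g/2\end{pmatrix}$, whose eigenvectors make an angle near $45^\circ$ with the coordinate axes, so $\sin\theta \approx 0.707$. But $\|E\|_2/\mathrm{gap} \approx 1/2$, so the unconditional statement fails. Accordingly, the proof I would write establishes~\eqref{eq:dk} under the separation hypothesis $\delta \ge \mathrm{gap}(\mathcal{U})$, and otherwise the Weyl-corrected bound above (or the Yu--Wang--Samworth form $2\|E\|_2/\mathrm{gap}(\mathcal{U})$). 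I would also remark that in the regime of interest, where $\varepsilon_m \lesssim 4.5\times10^{-8}$ and hence $\|E\|_2 \ll \mathrm{gap}(\mathcal{U})$, the distinction is immaterial to first order.
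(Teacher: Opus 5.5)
Your diagnosis is correct, and your $2\times2$ test is a valid counterexample: as written, Proposition~\ref{prop:dk} is false. The paper gives no argument of its own; it attributes the result to \citet{davis1970rotation}. Its ``restatement,'' however, is not the Davis--Kahan theorem. Davis and Kahan bound $\|\sin\Theta\|_2$ by $\|E\|_2/\delta$, where $\delta$ is the \emph{mixed} separation between $\Lambda_S$ (lying in an interval) and the perturbed complementary eigenvalues $\{\lambda^{(m)}_j : j\notin S\}$. The paper instead uses $\mathrm{gap}(\mathcal{U})$, computed from $M_{\mathrm{ref}}$ alone, and pairs it with index-matched subspaces, which lets eigenvalues cross. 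An exact crossing gives an even simpler counterexample. Take $M_{\mathrm{ref}}=\mathrm{diag}(0,g)$, $M_m=\mathrm{diag}(a,g-a)$ with $g/2<a<g$, and $S=\{1\}$. Then $\mathcal{U}_m=\mathrm{span}\{e_2\}$, so $\|\sin\Theta\|_2=1$, while $\|M_{\mathrm{ref}}-M_m\|_2/\mathrm{gap}(\mathcal{U})=a/g<1$. Your Steps~1--2 are the classical Sylvester-equation proof behind the citation. Your repaired versions are the right statements: constant $1$ under a hypothesis forcing $\delta\ge\mathrm{gap}(\mathcal{U})$, the Weyl-corrected $\|E\|_2/(\mathrm{gap}(\mathcal{U})-\|E\|_2)$, or $2\|E\|_2/\mathrm{gap}(\mathcal{U})$. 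Corollary~\ref{cor:dk_eps} inherits the same defect and needs the same repair.

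There are three smaller points to correct.

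First, the interval separation in Step~2 is not automatic for $S=\{1,\dots,r\}$ with $r\ge2$, since a perturbed $\lambda^{(m)}_j$ with $j>r$ can fall inside $[\lambda_1,\lambda_r]$. Separation does hold if $\|E\|_2<\mathrm{gap}(\mathcal{U})$, because Weyl gives $\lambda^{(m)}_{r+1}\ge\lambda_{r+1}-\|E\|_2>\lambda_r$; this is the same condition your Weyl-corrected bound needs to be meaningful. It also holds under your monotonicity hypothesis. State the condition explicitly rather than asserting it for the low cluster.

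Second, the proposition allows an arbitrary index set $S$. If $\Lambda_S$ is not a contiguous block of the ordered spectrum, Step~2 does not apply, and the general-separation Sylvester bound for Hermitian coefficients has constant $\pi/2$, not $1$. You should therefore restrict to contiguous $S$ explicitly.

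Third, your closing remark does not follow. $\varepsilon_m$ is relative to $\|M_{\mathrm{ref}}\|_2=O(\alpha h^{-4})$, which is about $10^8$ on the $200\times200$ diagnostic grid. This is consistent with the reported $\kappa(M_{\calI\calI})\approx10^6$ and $\lambda_1$ of order $10$--$10^2$. So $\varepsilon_m<4.5\times10^{-8}$ still allows $\|E\|_2$ of a few units, which is not obviously small against the gaps among the first twenty eigenvalues. Whether the Weyl correction is immaterial has to be checked from $\|E\|_2/\mathrm{gap}(\mathcal{U})$ directly, not inferred from the smallness of $\varepsilon_m$.
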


In our deflation context, $S = \{1, \dots, r_{\mathrm{defl}}\}$ (the leading
$r_{\mathrm{defl}}$ smallest eigenvalues), $\mathcal{U}$ is the span of
$\Phi_{\mathrm{eig}}$, and $\mathcal{U}_m$ is the corresponding span
inside~$M_m$. Combining~\eqref{eq:dk} with the definition
of~$\varepsilon_m$ in~\eqref{eq:epsilon} gives the following
corollary, which is our restatement of Davis--Kahan in the
$\varepsilon_m$ language used throughout this paper.

\begin{corollary}[Drift bound on the deflation subspace]
\label{cor:dk_eps}
With the notation of Proposition~\ref{prop:dk} and~$S$ corresponding
to the leading $r_{\mathrm{defl}}$ eigenmodes,
\begin{equation}\label{eq:dk_eps}
  \|\sin\Theta(\mathcal{U}, \mathcal{U}_m)\|_2
  \;\le\;
  \frac{\|M_{\mathrm{ref}}\|_2}{\mathrm{gap}(\mathcal{U})}\,
  \varepsilon_m.
\end{equation}
\end{corollary}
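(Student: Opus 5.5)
The corollary is a direct rewriting of Proposition~\ref{prop:dk}, so I will not re-derive any perturbation theory. I will specialize the index set to $S=\{1,\dots,r_{\mathrm{defl}}\}$. Then $\mathcal{U}$ is the span of the leading $r_{\mathrm{defl}}$ eigenvectors of $M_{\mathrm{ref}}$, i.e.\ the span of the (unrestricted) reference modes underlying $\Phi_{\mathrm{eig}}$, and $\mathcal{U}_m$ is the span of the corresponding eigenvectors of $M_m$. The gap then reduces to $\mathrm{gap}(\mathcal{U}) = \lambda_{r_{\mathrm{defl}}+1}-\lambda_{r_{\mathrm{defl}}}$, because the cluster consists of the smallest eigenvalues. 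I will assume this gap is strictly positive; otherwise the right-hand side of both \eqref{eq:dk} and \eqref{eq:dk_eps} is $+\infty$ and the statement holds trivially.

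The only algebraic step is to rewrite the numerator of \eqref{eq:dk} using the definition \eqref{eq:epsilon}. Since $M_{\mathrm{ref}}$ is SPD, we have $\|M_{\mathrm{ref}}\|_2=\lambda_N\ge 1>0$, so the division in \eqref{eq:epsilon} is legitimate. The definition then gives $\|M_{\mathrm{ref}}-M_m\|_2=\|M_{\mathrm{ref}}\|_2\,\varepsilon_m$ exactly. Substituting this identity into \eqref{eq:dk} yields \eqref{eq:dk_eps}, with equality between the two right-hand sides. I will also note that the symmetry hypothesis of Proposition~\ref{prop:dk} is met, since both $M_{\mathrm{ref}}$ and $M_m$ have the form $\alpha A^\top A+I$. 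Both matrices are full-domain $N\times N$ matrices, which is exactly why \eqref{eq:epsilon} was defined with full-domain rather than restricted operators.

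No real obstacle arises, because all the analytic content sits in Proposition~\ref{prop:dk}, which I take as given. The one point needing care is that the gap in the classical Davis--Kahan theorem is usually measured between the cluster of one operator and the complementary spectrum of the \emph{other}. The restated Proposition~\ref{prop:dk} uses the reference-only gap, so I will simply inherit its convention rather than re-deriving it. I will also remark that the corollary inherits this caveat: it bounds rotation between full-domain eigenspaces only. It says nothing about the restriction to $\calI_m$ in $\Phi_{\mathrm{eig}}$, which is governed separately by Proposition~\ref{prop:interlacing} and the empirical diagnostics.
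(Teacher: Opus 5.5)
Your proposal is correct and is the paper's own proof: both substitute $\|M_{\mathrm{ref}}-M_m\|_2=\varepsilon_m\|M_{\mathrm{ref}}\|_2$ from \eqref{eq:epsilon} into \eqref{eq:dk}. Your caveat about the gap convention is justified, since the classical theorem with a reference-only gap generally needs an extra constant (e.g.\ a factor~$2$) or a separation between the spectra of the two operators. Still, the corollary is stated relative to Proposition~\ref{prop:dk} as given, and inheriting its convention is exactly what the paper does.
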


\begin{proof}
Substitute $\|M_{\mathrm{ref}} - M_m\|_2 = \varepsilon_m\,
\|M_{\mathrm{ref}}\|_2$ from~\eqref{eq:epsilon} into~\eqref{eq:dk}.
\end{proof}

Two points are worth emphasizing.
\eqref{eq:dk} and the bounds in
Proposition~\ref{prop:dk} are classical and not our contribution.
The corollary~\eqref{eq:dk_eps} is a routine reformulation in the
$\varepsilon_m$ metric we have introduced for measurement purposes; it
is not a new theorem, only a convenient bookkeeping form.
The bound is also \emph{decoupled} from~$\delta_m$: it controls
rotation of the full-domain eigenspace under operator drift, before
restriction. The composite effect of drift followed by restriction
to~$\calI_m$ is bounded by~\eqref{eq:dk_eps} only up to the additional
restriction step, which we treat empirically (see
Section~\ref{ssec:spectral_diagnostics}).

\begin{remark}[Nonlinear and parameter-dependent operators]
\label{rem:nonlinear_drift}
For Picard-linearized nonlinear problems or PDEs with parameter-dependent
coefficients, both mechanisms are present: $\delta_m$ from the
changing active set and $\varepsilon_m$ from the drifting full-domain
operator. In our nonlinear thermal benchmark, $\varepsilon_m$ is
measured rather than assumed away; its magnitude turns out to be
small enough that active set restriction remains the dominant effect
(Section~\ref{ssec:spectral_diagnostics}).
\end{remark}

\subsection{Implications for basis design: low eigenmodes versus high POD singular values}
\label{sec:lowvshigh}

It may seem inconsistent that $\Phi_{\mathrm{eig}}$ retains the
\emph{smallest}-eigenvalue modes of $M_{\mathrm{ref}}$ while
$\Phi_{\mathrm{pod}}$ retains the \emph{largest}-singular-value modes
of the snapshot matrix. The two choices, however, are optimal for
different objectives applied to different matrices.

\paragraph{$\Phi_{\mathrm{eig}}$: spectral acceleration.}
The deflation basis is intended to absorb the slow-to-converge
component of CG. For SPD operators, CG convergence is governed by the
distribution of eigenvalues, with the smallest eigenvalues typically
controlling the worst components of the iteration error
\citep{saad2003iterative}. Removing the
smallest-eigenvalue subspace from the Krylov iteration is therefore
the natural choice and is the standard convention in spectral
deflation~\citep{nicolaides1987deflation,frank2001construction,gaul2014framework}.

\paragraph{$\Phi_{\mathrm{pod}}$: parametric energy capture.}
POD on the snapshot matrix~$S_{m-1}$ identifies the directions in
state space along which past solutions varied most. The
largest-singular-value modes of~$S_{m-1}$ are precisely the directions
where parametric variation has been concentrated and are therefore the
most likely to recur in future instances; this is the standard POD
truncation~\citep{kunisch2001galerkin,quarteroni2016reduced}.

The two principles are independent because they apply to different
matrices: $\Phi_{\mathrm{eig}}$ acts on the spectrum of the
\emph{operator}, while $\Phi_{\mathrm{pod}}$ acts on the singular
values of the \emph{solution archive}. Combining both yields a basis
that captures both the slow-convergence subspace (intrinsic to the
operator) and the parametric-energy subspace (intrinsic to the
solution distribution). The merging step in
Algorithm~\ref{alg:safe} is what keeps these two contributions from
interfering: $\Phi_{\mathrm{pod}}$ is appended to
$\Phi_{\mathrm{eig}}$ only while the projected coarse problem
$E = \ZZ^\top M_{\calI\calI}\,\ZZ$ remains well conditioned.

\subsection{Empirical design rules}
\label{sec:design_rules}

The propositions above provide structural guarantees but do not by
themselves determine the operating point of the method. We complement
them with three empirical observations (\emph{ours}, validated in
Section~\ref{sec:results}) that inform basis design:

\begin{observation}[Two-regime spectral coherence]
\label{prop:two_regime}
For a fixed reference operator there is a practically useful deflation
rank range in which the leading restricted reference modes remain
aligned with the slow-convergence directions of $M_{\calI\calI}$.
Beyond that range, added modes come from increasingly clustered parts
of the spectrum, become sensitive to restriction, and degrade the
conditioning of $E = \ZZ^\top M_{\calI\calI}\ZZ$.
\end{observation}

\begin{observation}[Cutoff alignment matters more than worst-case
alignment]
\label{prop:cutoff}
When a basis of rank $r$ is used for deflation, the predictive
diagnostic is the principal angle $\theta_r$ at the deflation cutoff,
not the worst-case angle $\theta_{\max}$ over all computed modes.
Instability in modes beyond the cutoff is irrelevant if those modes
are not included.
\end{observation}

\begin{observation}[Self-regulation under growing active sets]
\label{prop:selfreg}
As the active set grows, $M_{\calI\calI}$ shrinks and its smallest
eigenvalues move upward by interlacing
(Proposition~\ref{prop:interlacing}). Both the cold system and the
deflated system therefore become easier; iteration-reduction
percentages need not worsen monotonically with~$\delta_m$.
\end{observation}

These observations are not theorems. They are operational rules
extracted from the structural results above and from our numerical
experiments; they direct basis-rank selection, the focus of the
spectral diagnostics, and the interpretation of iteration-reduction
trends in Section~\ref{sec:results}.

\begin{remark}[Scope of analytical vs.\ empirical claims]
\label{rem:scope}
The analytical ingredients of this section are
Proposition~\ref{prop:interlacing} and Proposition~\ref{prop:dk}
(both classical, cited to the literature) and
Proposition~\ref{prop:drift} (a one-line consequence of the structure
of~$M$, ours but not deep). Corollary~\ref{cor:dk_eps} is a routine
reformulation of Davis--Kahan in our $\varepsilon_m$ notation. The
two-regime picture, the emphasis on cutoff diagnostics, and the
self-regulation behaviour are empirical claims whose role is to inform
basis design. We use the analytical results to justify \emph{why}
reuse is plausible, and numerical diagnostics
(Section~\ref{ssec:spectral_diagnostics}) to determine \emph{how far}
that reuse remains effective on this benchmark class.
\end{remark}

\subsection{Dimension-dependent conditioning limits}
\label{sec:conditioning_wall}

A separate issue arises once the deflation rank $r$ becomes large: even
if the chosen vectors are informative, the projected coarse problem
$E = \ZZ^\top M_{\calI\calI}\ZZ$ can become ill-conditioned. The onset of
this ``conditioning wall'' depends strongly on dimension.

\paragraph{Weyl asymptotics (classical).}
For the model Dirichlet Laplacian on $[0,1]^d$, classical Weyl-law
scaling suggests that the Laplacian eigenvalues, and hence those of
$\widehat{M}=\alpha(-\Delta)^2+I$, grow approximately as follows:
\begin{center}
\small
\begin{tabular}{@{} lll @{} }
\toprule
Dimension & $\mu_r$ for $-\Delta$ & $\lambda_r = \alpha\mu_r^2 + 1$ for $\widehat{M}$ \\
\midrule
2D & $\sim 4\pi r$ & $O(r^2)$ \\
3D & $\sim (6\pi^2 r)^{2/3}$ & $O(r^{4/3})$ \\
\bottomrule
\end{tabular}
\end{center}
The intrinsic spread of the first $r$ eigenvalues therefore grows much
faster in 2D than in 3D.

\begin{proposition}[Conditioning-wall heuristic]
\label{prop:conditioning_wall}
The following is a heuristic estimate rather than a rigorous bound.
Assume that $\ZZ$ consists of reference eigenmodes restricted to the
current inactive set, and let $\delta\in[0,1]$ denote a representative
fraction of DOFs whose active/inactive status changes relative to the
reference. Combining a Gershgorin-style row-sum
estimate~\citep{horn2012matrix} for the off-diagonal
entries of $E = \ZZ^\top M_{\calI\calI}\ZZ$ with the Weyl asymptotics
above suggests
\begin{equation}\label{eq:conditioning_wall}
  \kappa(E)
  \sim
  \frac{\lambda_r}{\lambda_1}
  \cdot
  \frac{\delta\,\lambda_r}{g_r},
\end{equation}
where $g_r = \lambda_{r+1}-\lambda_r$ is the spectral gap near the
cutoff. For $\widehat{M}=\alpha(-\Delta)^2+I$, this predicts
$\kappa(E)=O(\delta r^3)$ in 2D and
$\kappa(E)=O(\delta r^{7/3})$ in 3D.
\end{proposition}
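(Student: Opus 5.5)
The plan is a first-order, order-of-magnitude derivation rather than a rigorous proof, since the statement is explicitly heuristic. Write the restricted reference modes as $\bphi_j|_{\calI}$. Before QR, decompose $E$ as
\[
\widetilde{E}_{ij} = \bphi_i|_{\calI}^\top M_{\calI\calI}\,\bphi_j|_{\calI} = \lambda_j\,\delta_{ij} - \bphi_i^\top (M\,\Pi_{\calA}\bphi_j) - (\Pi_{\calA}\bphi_i)^\top M\,\Pi_{\calI}\bphi_j .
\]
Here $\Pi_{\calA},\Pi_{\calI}$ are the coordinate projectors. The first term is the exact diagonal coming from $M\bphi_j=\lambda_j\bphi_j$ on the full domain. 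The correction terms involve only entries touching the removed or changed DOFs, so they are localized on a set of relative size $\delta$.

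\emph{Step 1: off-diagonal size.} Estimate each correction entry by a Gershgorin-type row sum. A normalized mode has mass roughly $\delta$ on a set of relative size $\delta$, so $\|\Pi_{\calA}\bphi_j\|_2^2\approx\delta$. The operator $M$ restricted to the modes up to index $r$ acts with scale $\lambda_r$. This gives an off-diagonal row sum of order $\delta\lambda_r$. I take this as the modelling assumption (mode delocalization) and will not try to justify it rigorously.

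\emph{Step 2: effect of QR and near-degeneracy.} After QR-orthonormalization, the off-diagonal coupling between mode $r$ and its neighbours is balanced against diagonal differences of order the local gap $g_r=\lambda_{r+1}-\lambda_r$. By standard first-order perturbation of a nearly diagonal matrix, the smallest singular direction mixes with amplification $\delta\lambda_r/g_r$. This factor measures how badly the restricted near-degenerate cluster at the cutoff distorts the coarse Gram matrix. This is the main obstacle. A rigorous version would need a lower bound on $\sigma_{\min}$ of the restricted mode matrix, which the restriction step can in fact drive to zero. So I will only claim that the amplification enters multiplicatively on top of the intrinsic diagonal spread $\lambda_r/\lambda_1$, and that this spread is what $\kappa(E)$ would be at $\delta=0$. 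Multiplying the two factors gives \eqref{eq:conditioning_wall}.

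\emph{Step 3: plug in Weyl asymptotics.} Use the table: $\lambda_1=O(1)$. In 2D, $\lambda_r\sim r^2$, and the average gap is $g_r\sim d\lambda_r/dr\sim r$, so $\kappa\sim r^2\cdot\delta r^2/r=\delta r^3$. In 3D, $\lambda_r\sim r^{4/3}$ and $g_r\sim r^{1/3}$, so $\kappa\sim r^{4/3}\cdot\delta r^{4/3}/r^{1/3}=\delta r^{7/3}$. Finally, remark that lattice degeneracies make the actual $g_r$ smaller than the Weyl average in the symmetric 2D Laplacian. This is consistent with the wall being worse in practice there than the prediction.
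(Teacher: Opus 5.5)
Your proposal reproduces the paper's own informal mechanism (Appendix~\ref{app:conditioning_wall}) essentially step for step. Both multiply the intrinsic spread $\lambda_r/\lambda_1$ by a Gershgorin-type perturbation factor $\delta\lambda_r/g_r$; the paper writes this factor as $1 + C\,\delta\lambda_r/\min_i g_i$, which also recovers your $\delta=0$ limit. Both then apply Weyl-law scaling with $\lambda_r/g_r = O(r)$ in both dimensions, so the 2D/3D difference comes from the spread alone.

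Your exact decomposition of $\widetilde{E}_{ij}$ is a useful sharpening. You are right to flag the multiplicative gap-amplification as the non-rigorous link: for symmetric $E$, Weyl's perturbation inequality makes eigenvalue shifts additive and gap-independent. The paper likewise presents that step only as an interpretive scaling model, not a derivation.
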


The message of Proposition~\ref{prop:conditioning_wall} is practical:
two-dimensional problems should hit the conditioning wall at lower deflation ranks
than comparable three-dimensional problems, so rank selection and basis
monitoring matter much more in 2D.

\begin{remark}[Maximum safe rank estimate]
\label{rem:safe_rank}
Equating~\eqref{eq:conditioning_wall} to a user-chosen threshold
$\kappa_{\max}$ yields crude safe-rank estimates
\begin{equation}\label{eq:kstar_2d}
  \text{2D:}\quad
  r^* \approx \Bigl(\frac{\kappa_{\max}}{c_2\,\delta}\Bigr)^{1/3},
  \qquad
  \text{3D:}\quad
  r^* \approx \Bigl(\frac{\kappa_{\max}}{c_3\,\delta}\Bigr)^{3/7},
\end{equation}
with constants $c_d$ absorbing problem-dependent prefactors.
These estimates are intended only for initialization; the definitive
per-instance check is the online condition monitor used during basis
construction.
\end{remark}

\begin{corollary}[A priori safe-rank initialization]
\label{cor:kstar}
The estimates in~\eqref{eq:kstar_2d} provide an a priori initial guess
for the largest deflation rank likely to remain safe on a given problem
family. In practice they should be combined with the online threshold in
Algorithm~\ref{alg:safe}, which provides the decisive per-instance
accept/reject rule.
\end{corollary}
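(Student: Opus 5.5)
This corollary is essentially a restatement of Remark~\ref{rem:safe_rank}, and I will prove it by inverting the heuristic of Proposition~\ref{prop:conditioning_wall}. The first step is to specialize \eqref{eq:conditioning_wall} to $\widehat{M}=\alpha(-\Delta)^2+I$ using the Weyl table. In 2D, $\lambda_r=O(r^2)$, so the gap is $g_r=\lambda_{r+1}-\lambda_r=O(r)$. In 3D, $\lambda_r=O(r^{4/3})$, so $g_r=O(r^{1/3})$. Since $\lambda_1\ge 1$ is fixed, substituting these gives $\kappa(E)\approx c_2\,\delta\,r^3$ in 2D. In 3D it gives $\kappa(E)\approx c_3\,\delta\,r^{4/3}\cdot r^{4/3}/r^{1/3}=c_3\,\delta\,r^{7/3}$. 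The constants $c_d$ absorb $\alpha$, $\lambda_1$ and the Gershgorin prefactor.

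The second step is monotonicity. Both expressions are strictly increasing in $r$ for fixed $\delta>0$. Hence the condition $\kappa(E)\le\kappa_{\max}$ within the heuristic model is equivalent to $r\le r^*$. Here $r^*$ is the unique solution of $c_d\,\delta\,(r^*)^{p_d}=\kappa_{\max}$, with $p_2=3$ and $p_3=7/3$. Solving this equation gives exactly \eqref{eq:kstar_2d}, so $r^*$ is the largest rank predicted to remain below the threshold. Taking $\kappa_{\max}=\tau_{\mathrm{safe}}$ makes it an a priori guess for the rank at which Algorithm~\ref{alg:safe} would begin rejecting candidates.

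The third step is to justify the phrase ``should be combined with the online threshold''. Proposition~\ref{prop:conditioning_wall} is explicitly heuristic: the constants are unknown, $\delta$ is only representative, and discrete Weyl asymptotics hold only for $r\ll N$. The estimate therefore cannot certify $\mathrm{cond}(E)\le\tau_{\mathrm{safe}}$ for any particular instance. What guarantees safety is the explicit check on $E_{\mathrm{trial}}$ in Algorithm~\ref{alg:safe}, which is exact per instance by construction, together with the fallback in Remark~\ref{rem:tau_cond}.

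The main obstacle is that none of this is rigorous. The statement is a usage prescription resting on a heuristic, so the ``proof'' can only show that the formula follows from \eqref{eq:conditioning_wall} and explain why the online test remains decisive. The one place that needs care is making sure the gap exponents are computed consistently with the Weyl scalings.
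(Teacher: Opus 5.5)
Your proposal follows the paper's route. The paper supports this corollary only through Remark~\ref{rem:safe_rank}: equate the heuristic scaling $\kappa(E)\sim c_d\,\delta\,r^{p_d}$ of Proposition~\ref{prop:conditioning_wall} ($p_2=3$, $p_3=7/3$) to a threshold $\kappa_{\max}$, solve for $r$ to get \eqref{eq:kstar_2d}, and leave the decisive check to the online monitor. One small correction: Algorithm~\ref{alg:safe} only accepts or rejects POD candidates and \emph{assumes} the eigenmode block already satisfies $\mathrm{cond}(E)\le\tau_{\mathrm{safe}}$, so $r^*$ estimates where that input assumption (and ultimately the $\tau_{\mathrm{cond}}$ fallback of Remark~\ref{rem:tau_cond}) starts to fail, not the rank at which Algorithm~\ref{alg:safe} itself begins rejecting candidates.
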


\section{The 2D/3D conditioning wall}
\label{app:conditioning_wall}

The main text (Proposition~\ref{prop:conditioning_wall}) states a
heuristic conditioning formula
$\kappa \sim (\lambda_r / \lambda_1) \cdot (\delta \cdot \lambda_r / g_r)$
together with Weyl asymptotic scaling. This appendix records the
mechanism we used to interpret the empirical behavior; we present
it as an informal interpretation rather than a derivation. The
constants and exponents are written as scaling claims and should
be read accordingly.

\subsection{Perturbation amplification mechanism}
\label{app:perturbation_amplification}

We interpret the condition number of the deflation coarse solve
$\kappa(\ZZ^\top M_{\calI\calI} \ZZ)$ as shaped by two
contributions:

\begin{enumerate}
  \item \textbf{Intrinsic spread:}
    $\lambda_r / \lambda_1$ --- the ratio of largest to smallest
    eigenvalue in the deflation subspace. Under the Weyl
    asymptotic for the Laplacian, this scales like $O(r^2)$ in 2D
    and $O(r^{4/3})$ in 3D.

  \item \textbf{Perturbation amplification:}
    Active set restriction perturbs the diagonal of
    $\ZZ^\top \widehat{M}\, \ZZ$ and introduces off-diagonal
    coupling whose magnitude we model as
    $O(\delta \cdot \sqrt{\lambda_i \lambda_j} \cdot
      \text{overlap})$. A Gershgorin-type bound then suggests
    eigenvalue shifts of order $\delta \cdot \lambda_r$ in the
    worst case, leading to the heuristic
\end{enumerate}
\[
  \kappa(\ZZ^\top M_{\calI\calI} \ZZ) \;\sim\;
  \frac{\lambda_r}{\lambda_1} \cdot
  \left(1 + C \cdot
    \frac{\delta \cdot \lambda_r}{\min_i g_i}\right).
\]
We do not claim that $C$ or the precise scaling exponents are
sharp; both are interpretive scaling parameters meant to organize
the empirical observations.

In this picture, $\lambda_r / g_r$ scales like $O(r)$ in
\emph{both} 2D and 3D (since $r^2/r = r^{4/3}/r^{1/3} = r$), so
the Davis--Kahan ratio alone does not distinguish the dimensions.
The dimension-dependent factor that does is the intrinsic spread
$\lambda_r/\lambda_1$: $O(r^2)$ in 2D versus $O(r^{4/3})$ in 3D.
Multiplying this against the perturbation factor produces the
divergent conditioning paths summarized in
Table~\ref{tab:conditioning_scaling}.

\begin{table}[htbp]
\centering
\caption{Indicative scaling of the two conditioning-critical
  ratios used in the heuristic. The qualitative point is that the
  intrinsic spread $\lambda_r/\lambda_1$ grows much faster in 2D
  than in 3D, while the Davis--Kahan ratio $\lambda_r/g_r$ stays
  $O(r)$ in both. Numerical entries are order-of-magnitude
  estimates from sample spectra; they should not be read as
  exact predictions.}
\label{tab:conditioning_scaling}
\small
\begin{tabular}{@{} c rr rr @{}}
\toprule
 & \multicolumn{2}{c}{2D} & \multicolumn{2}{c}{3D} \\
\cmidrule(lr){2-3} \cmidrule(lr){4-5}
$r$ & $\lambda_r / \lambda_1$ & $\lambda_r / g_r$
    & $\lambda_r / \lambda_1$ & $\lambda_r / g_r$ \\
\midrule
100 & $\sim 10^4$           & $\sim 10^2$
    & $\sim 2\!\times\!10^3$ & $\sim 50$ \\
200 & $\sim 4\!\times\!10^4$ & $\sim 200$
    & $\sim 3.4\!\times\!10^3$ & $\sim 60$ \\
500 & $\sim 2.5\!\times\!10^5$ & $\sim 500$
    & $\sim 1.8\!\times\!10^4$ & $\sim 100$ \\
\bottomrule
\end{tabular}
\end{table}

Multiplying the intrinsic spread by the perturbation factor gives,
heuristically, $\kappa \sim \delta \cdot r^3$ in 2D
($r^2 \cdot r$) versus $\kappa \sim \delta \cdot r^{7/3}$ in 3D
($r^{4/3} \cdot r$). The much faster predicted growth in 2D is
consistent with the empirical conditioning behavior in
Table~\ref{tab:conditioning_empirical}, which is the part we
actually rely on; the heuristic above merely organizes that
observation.

\subsection{Lattice point degeneracy}
\label{app:lattice}

A complementary interpretation is offered by the arithmetic
structure of the lattice sums $\sum m_j^2$, which produces exact
spectral degeneracies on top of the Weyl asymptotic. We frame this
as interpretation: it is consistent with our observations but not
established as the cause.

\textbf{2D:}
$m^2 + n^2$ produces many repeated values (e.g., $5 = 1^2 + 2^2$;
$50 = 1^2 + 7^2 = 5^2 + 5^2 = 7^2 + 1^2$). Our spectra contain
12~exactly degenerate pairs in the first 50~modes on the test
grids. At degenerate modes the local gap vanishes ($g_i = 0$),
which makes any Davis--Kahan-style bound formally infinite there.

\textbf{3D:}
$m^2 + n^2 + p^2$ has higher per-value multiplicity, but the
density of distinct lattice sums is also higher (roughly $r^{1/3}$
inter-sum spacing in 3D versus $r^{1/2}$ in 2D under the same
ordering convention). The biharmonic squaring $(\sum m_j^2)^2$
amplifies the gaps between distinct lattice sums; this offers a
plausible (not proven) reason that the higher 3D multiplicity
does not by itself create a 3D analogue of the 2D conditioning
wall.

\subsection{Empirical confirmation}
\label{app:conditioning_empirical}

Table~\ref{tab:conditioning_empirical} reports the conditioning of
the deflation coarse Gram matrix across 2D and 3D problems and is
the empirical evidence that our heuristic is meant to organize.

\begin{table}[htbp]
\centering
\caption{Empirical deflation conditioning across 2D and 3D problems.
  The 2D conditioning failure at $r = 200$ on Laplacian
  configurations is absent in 3D at the same rank.}
\label{tab:conditioning_empirical}
\small
\begin{tabular}{@{} llrrl @{}}
\toprule
Config & Grid & $r$
  & $\kappa(\ZZ^\top M_{\calI\calI} \ZZ)_{\max}$ & Diverges? \\
\midrule
2d\_asym      & $200^2$ & 100 & $2.6 \times 10^7$   & No \\
2d\_asym      & $500^2$ & 100 & $2.3 \times 10^8$   & Marginal \\
2d\_asym      & $200^2$ & 200 & $7.4 \times 10^9$   & \textbf{Yes} \\
2d\_asym      & $500^2$ & 200 & $1.8 \times 10^{11}$ & \textbf{Yes} \\
\midrule
3d\_thermal   & $40^3$  & 200 & $2.4 \times 10^4$   & No \\
3d\_contam    & $40^3$  & 200 & $2.4 \times 10^4$   & No \\
cht\_re50\_kr100 & $40^3$  & 200 & $2.4 \times 10^4$   & No \\
\bottomrule
\end{tabular}
\end{table}

The convection effect and heuristic conditioning formula are stated
in the main text (Proposition~\ref{prop:conditioning_wall}). The
lattice degeneracy discussion above is offered as a complementary
interpretation; the conclusion we rely on is the empirical
observation in Table~\ref{tab:conditioning_empirical}, not the
specific scaling exponents of the heuristic.

\section{Extension to nonlinear PDE}
\label{app:nonlinear}

The spectral coherence analysis in the main text isolates active set
perturbation by using a parameter-independent linear PDE.  We now
test whether the framework extends beyond this clean setting to a
nonlinear problem: thermal convection--diffusion--reaction,
\[
  -\Delta y + \Ra\, (\mathbf{v} \cdot \nabla) y + \gamma\, y^3 = u,
\]
with $\Ra = 100$, $\gamma = 100$, and buoyancy-driven velocity
$\mathbf{v} = (\sin \pi x_1 \cos \pi x_2,\;
-\cos \pi x_1 \sin \pi x_2)$.

Picard iteration converges in 3--4~iterations for this test case,
and the Picard-linearized operator is nearly parameter-independent
($\varepsilon < 4.5 \times 10^{-8}$).

\subsection{Setup}

\begin{center}
\small
\begin{tabular}{@{} ll @{}}
\toprule
Parameter & Value \\
\midrule
Grid & $200 \times 200$ (40{,}000 DOF) \\
Sweep & $\mu \in [0, \pi/2]$ (rotation), 60~instances \\
Active fraction & 5--15\% \\
$\delta$ range & 0--15.2\% \\
Operator distance $\varepsilon$ & $< 4.5 \times 10^{-8}$ \\
Deflation rank & $r = 20$ \\
\bottomrule
\end{tabular}
\end{center}

\subsection{Key differences from linear cases}

The constraint threshold $\psi = 0.019$ used in this spectral
diagnostic produces only 5--15\% active DOFs (vs 20--29\% for the
linear cases).  Note that this $\psi$ value was chosen to isolate the
nonlinear-operator effect under a mild constraint; the main benchmark
suite (Section~\ref{sec:results}) uses a tighter threshold calibrated
to ${\sim}20$\% active fraction \emph{at the midpoint parameter} for
all problems, including thermal; the per-instance mean is lower for
the $\gamma$-sweep thermal cases (mean active fraction $0.13$--$0.14$,
Table~\ref{tab:psi_calibration}) owing to the V-shaped sweep.
Eigenmode mass in the
active set is uniformly low and nearly flat across mode indices
(3--7\%), unlike the oscillatory pattern in the linear cases
(cf.\ Figure~\ref{fig:eigenmode_mass_detail} in the spectral
diagnostics appendix below):

\begin{center}
\small
\begin{tabular}{@{} l rrr @{}}
\toprule
Mode & thermal\_ra100 & 2d\_asym & 2d\_nonsep \\
\midrule
 1  & 6.4\% & $<$0.1\% & 5.9\% \\
10  & 5.8\% & $<$0.1\% & 13.7\% \\
20  & 7.2\% & $<$0.1\% & 22.1\% \\
50  & 13.0\% & $<$0.1\% & 34.9\% \\
\bottomrule
\end{tabular}
\end{center}

Despite the nonlinear PDE, the operator distance
$\varepsilon_m = \|M_{\mathrm{ref}} - M_m\|_2 /
\|M_{\mathrm{ref}}\|_2$ stays below $4.5 \times 10^{-8}$.
The Picard linearization and small active set ensure that the Schur
complement is dominated by the fixed $\alpha L^2 + I$ term.

\subsection{Deflation effectiveness}

Deflation with $r = 20$ reference eigenmodes achieves
\textbf{52.9\% mean iteration reduction} (range 37.6--56.0\%):

\begin{center}
\small
\begin{tabular}{@{} lr @{}}
\toprule
Metric & Value \\
\midrule
Cold CG iterations     & 6{,}514 mean (5{,}299--7{,}372) \\
Deflated CG iterations & 3{,}053 mean (2{,}753--3{,}324) \\
Effectiveness          & 52.9\% mean \\
$\kappa(r\!=\!20)$, deflated & 2{,}825 mean (515--4{,}204) \\
$\kappa$, undeflated   & $1.06 \times 10^6$ mean \\
\bottomrule
\end{tabular}
\end{center}

\subsection{The cold-CG difficulty confound}

The correlation between~$\delta$ and effectiveness
($r = -0.84$ on interior instances) is primarily a confound:
cold~CG varies substantially (5{,}299--7{,}372, CV~=~8.2\%)
while deflated~CG is far more stable (2{,}753--3{,}324, CV~=~5.2\%).
The effectiveness ratio varies because the \emph{denominator} changes,
not because deflation degrades.

Evidence: $\mathrm{Corr}(\delta, n_{\mathrm{cold}}) = -0.996$,
while $\mathrm{Corr}(\delta, n_{\mathrm{defl}}) = -0.929$.  Both
decrease with~$\delta$, but cold~CG decreases faster, inflating
effectiveness.

\begin{remark}[General methodological caution]
The cold-CG difficulty confound is not specific to the nonlinear
extension.  Whenever deflation effectiveness is plotted against a
problem parameter (active set fraction, perturbation magnitude,
grid size), one must verify that the observed correlation reflects
\emph{deflation quality} rather than variation in the
\emph{baseline difficulty}.  The diagnostic is simple: check the
coefficient of variation of the denominator (cold CG iterations)
and the numerator (deflated CG iterations) separately.  If the
denominator varies more, the effectiveness ratio is a Simpson's
paradox artifact.
\end{remark}

\subsection{Scope of the nonlinear claim}

This is a \emph{verification} rather than an extension: we confirm
that the spectral coherence framework holds when the nonlinear
contribution is spectrally negligible.  Strongly nonlinear regimes
where the operator itself varies significantly with~$\mu$ ---
problems requiring many Picard/Newton iterations with substantially
different linearizations --- remain open.  In such regimes, both
perturbation sources (operator drift and active set change) would
be active simultaneously, and the clean separation exploited in the
main text would no longer hold.


\section{Spectral coherence diagnostics: detailed results}
\label{app:spectral_details}

This appendix provides the full per-problem analyses underlying
the summary in Section~\ref{ssec:spectral_diagnostics}.
All experiments use $200 \times 200$ grids (40{,}000~DOF),
60~parametric instances, and $r = 20$ deflation vectors.

\subsection{Eigenvalue trajectories}

All three problems exhibit empirically smooth, crossing-free
eigenvalue evolution as~$\mu$ varies, compatible with the
Cauchy interlacing picture (Proposition~\ref{prop:interlacing}).
We emphasize that interlacing controls eigenvalue placement under
a principal submatrix at fixed parameter; it does not by itself
predict smooth trajectories under continuous variation of~$\mu$.
The smoothness reported here is an empirical observation, with the
interlacing bounds providing a consistent backdrop rather than a
derivation.
Figure~\ref{fig:eigenvalue_trajectories} shows these trajectories.

\begin{figure}
\centering
\includegraphics[width=\textwidth]{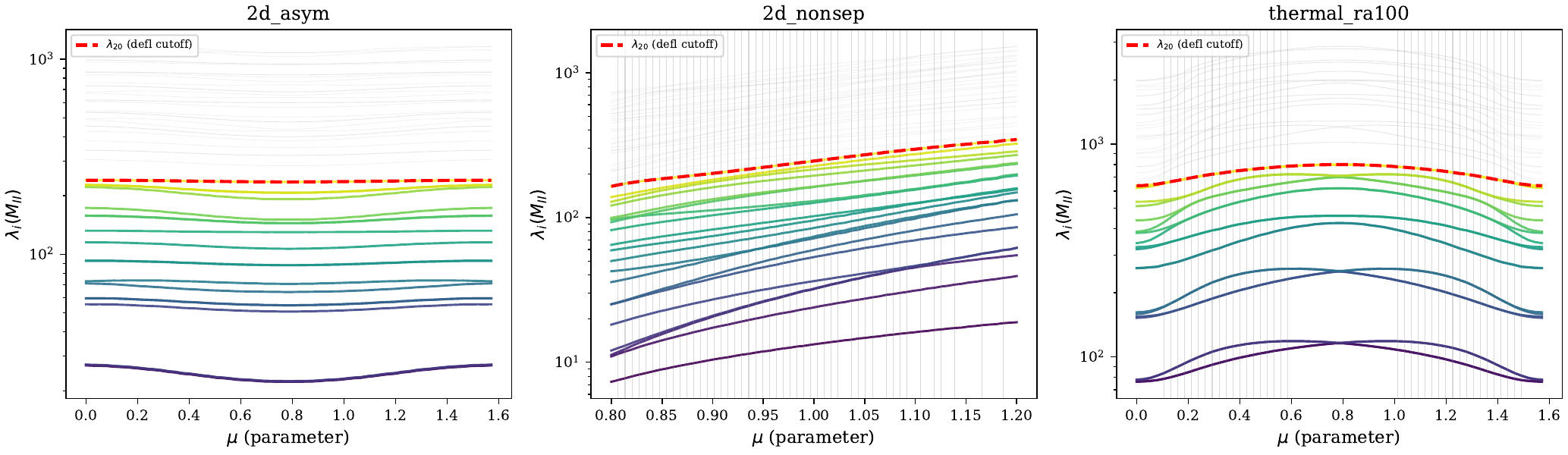}
\caption{Eigenvalue trajectories $\lambda_i(\mu)$ for the leading
  50~eigenvalues across all parametric instances.  All three problems
  exhibit smooth, crossing-free evolution even as active set
  membership changes discretely.}
\label{fig:eigenvalue_trajectories}
\end{figure}

For \textbf{2d\_asym}, $\lambda_1$ varies by~19.4\%
($22.2$--$26.9$) while $\lambda_{20}$ varies by only~1.96\%.
The 12 near-degenerate pairs from the square-domain symmetry
(exactly degenerate in the unperturbed lattice,
Appendix~\ref{app:lattice}; rendered near-degenerate here by the
rotation) persist throughout the sweep.  The spectral gap at the
deflation cutoff satisfies $\lambda_{21}/\lambda_{20} = 1.21$--$1.28$.

For \textbf{2d\_nonsep}, all 50~tracked eigenvalues grow
monotonically with amplitude ($2.1$--$3.4\times$).
$\lambda_{\max}$ is nearly constant (0.03\% variation) ---
high-frequency interior modes are unaffected by the active set.
$\lambda_1$ triples ($7.3 \to 18.9$) as the active set removes
low-eigenvalue boundary modes.
Consequence: $\kappa(M_{\calI\calI})$ \emph{decreases} from
$14.3 \times 10^6$ to $5.5 \times 10^6$ as the active set grows,
with $\mathrm{Corr}(\text{active frac}, \kappa) = -0.984$
(Figure~\ref{fig:spectral_gap_kappa}).

\begin{figure}
\centering
\includegraphics[width=\textwidth]{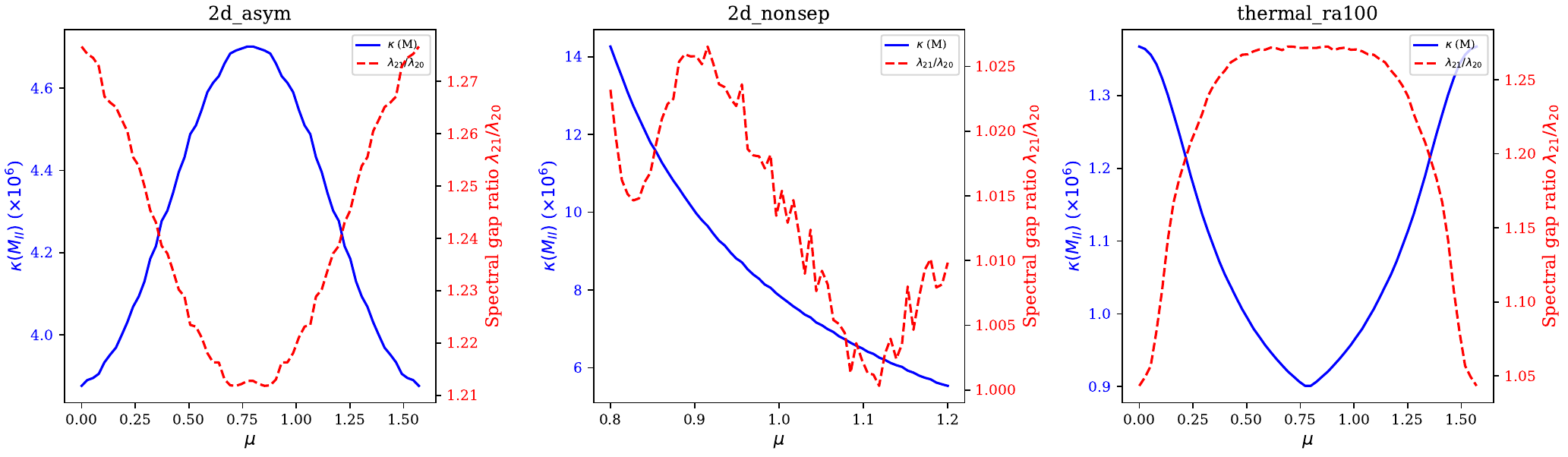}
\caption{Condition number $\kappa(M_{\calI\calI})$ and spectral gap
  evolution across parametric instances.  For 2d\_nonsep, $\kappa$
  \emph{decreases} as the active set grows (spectral
  self-regulation).}
\label{fig:spectral_gap_kappa}
\end{figure}

For \textbf{thermal\_ra100}, $\lambda_1$ varies by~51.7\%
($76$--$116$).  Near-quartets appear at symmetric endpoints
($C_{4v}$~symmetry), exact degenerate pairs at generic~$\mu$
($C_2$~symmetry).  All trajectories remain smooth with zero
crossings.

\subsection{Two-regime subspace angle structure}

Table~\ref{tab:quintile_angles} quantifies the two-regime structure
by reporting mean principal angles stratified by mode quintile at
maximum~$\delta$.
Figures~\ref{fig:subspace_angles}--\ref{fig:angle_heatmap} visualize
this regime structure.

\begin{figure}
\centering
\includegraphics[width=\textwidth]{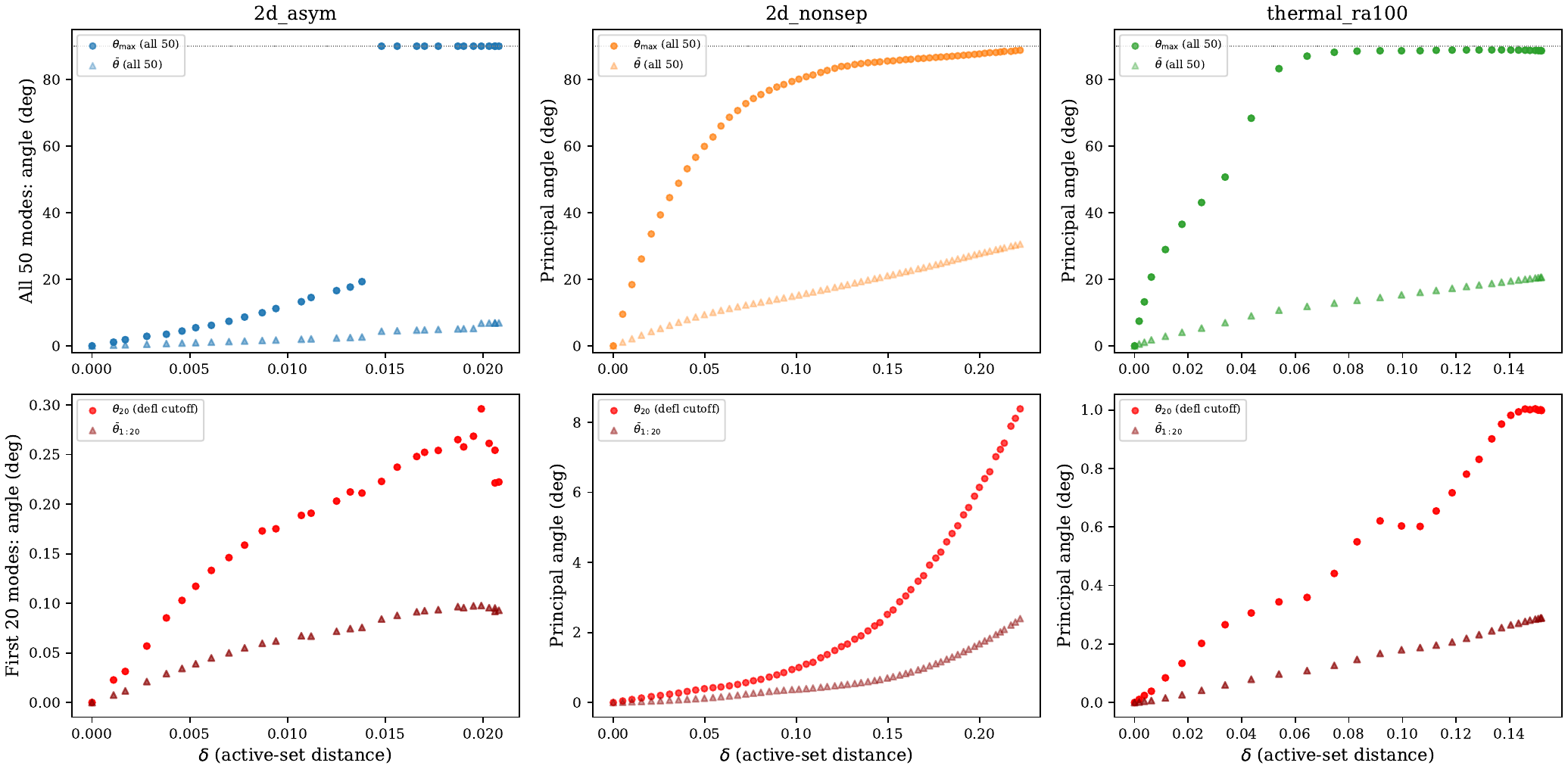}
\caption{Maximum and mean subspace angles vs active set
  distance~$\delta$ for all modes and the first~20 modes.
  $\theta_{\max}$ reaches $90^\circ$ but $\theta_{20}$ stays small.}
\label{fig:subspace_angles}
\end{figure}

\begin{figure}
\centering
\includegraphics[width=0.9\textwidth]{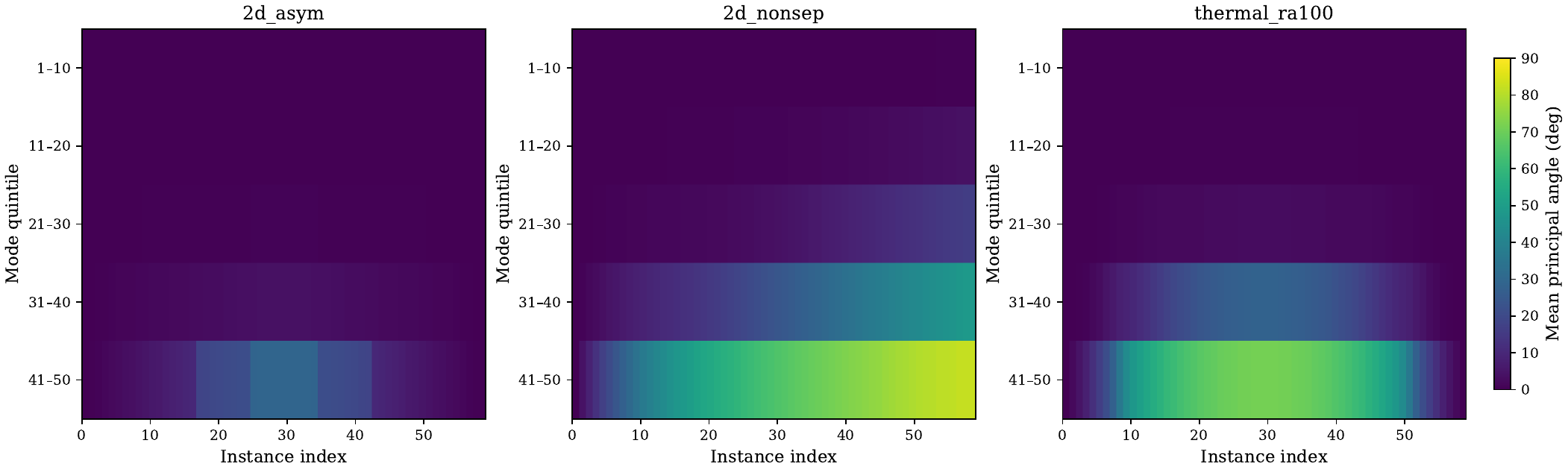}
\caption{Two-regime structure: mean subspace angle by mode quintile
  and instance index, one panel per problem.  The stable
  (blue) to randomized (red) transition is problem-dependent ---
  earliest for \texttt{2d\_nonsep} (mode~$\sim$20--30) and
  progressively later for \texttt{thermal\_ra100} and
  \texttt{2d\_asym}.}
\label{fig:angle_heatmap}
\end{figure}

\begin{table}[htbp]
\centering
\caption{Quintile mean subspace angles at maximum~$\delta$.
  The $r = 20$ deflation cutoff appears to sit near the boundary
  of the stable regime.}
\label{tab:quintile_angles}
\small
\begin{tabular}{@{} l rrr @{}}
\toprule
Quintile & 2d\_asym & 2d\_nonsep & thermal\_ra100 \\
\midrule
Modes 1--10   & $0.04^\circ$  & $0.47^\circ$ & $0.07^\circ$  \\
Modes 11--20  & $0.15^\circ$  & $4.33^\circ$ & $0.51^\circ$  \\
Modes 21--30  & $0.80^\circ$  & $16.6^\circ$ & $2.98^\circ$  \\
Modes 31--40  & $4.12^\circ$  & $48.8^\circ$ & $28.19^\circ$ \\
Modes 41--50  & $29.40^\circ$ & $82.5^\circ$ & $71.21^\circ$ \\
\bottomrule
\end{tabular}
\end{table}

At the $r = 20$ deflation cutoff these large high-mode angles lie
outside the deflation subspace, and the coarse Gram condition number
$\kappa(\ZZ^\top M_{\calI\calI} \ZZ)$ stays below~30{,}000. This
50-mode diagnostic characterizes only the low-rank regime: the angles
grow steeply by mode~${\sim}50$ (reaching $80$--$90^\circ$ in the
worst configurations), well below the $r = 100$--$500$ operating
ranks. A direct principal-angle measurement at $r = 100$ is outside
this diagnostic; the operating-rank evidence is instead the
conditioning-wall behavior --- the rapid growth of
$\kappa(\ZZ^\top M_{\calI\calI} \ZZ)$ with rank and the raw-eigenmode
fallbacks in Table~\ref{tab:2d_iter} --- which is precisely why
high-rank deflation uses Rayleigh--Ritz reselection and QR-combination
rather than raw eigenmodes.

\subsection{Coarse-space conditioning}

Table~\ref{tab:defl_kappa} reports the coarse Gram condition number
$\kappa(\ZZ^\top M_{\calI\calI} \ZZ)$ alongside
$\kappa(M_{\calI\calI})$ itself, for both the fine-grid reference and
the coarse-grid ($c = 2$) deflation bases. We list the two
quantities side by side to emphasize that they are different
mathematical objects: $\kappa(M_{\calI\calI})$ measures the
difficulty of the inactive-set operator, whereas
$\kappa(\ZZ^\top M_{\calI\calI} \ZZ)$ measures the stability of the
coarse solve used inside the projector. Reporting both is what makes
the coarse Gram value meaningful as an online stability indicator.

\begin{table}[htbp]
\centering
\caption{Conditioning of the inactive-set operator
  $M_{\calI\calI}$ and of the deflation coarse Gram matrix
  $E = \ZZ^\top M_{\calI\calI} \ZZ$. The coarse-grid basis ($c = 2$)
  yields a better-conditioned coarse solve than the fine-grid
  reference at every tested rank.}
\label{tab:defl_kappa}
\small
\begin{tabular}{@{} l r rr rr @{}}
\toprule
 & Undeflated
 & \multicolumn{2}{c}{$r = 10$}
 & \multicolumn{2}{c}{$r = 30$} \\
\cmidrule(lr){3-4} \cmidrule(lr){5-6}
Config & $\kappa(M_{\calI\calI})$
 & Fine ref & Coarse
 & Fine ref & Coarse \\
\midrule
2d\_asym   & $4.3 \times 10^6$ & 974    & \textbf{965}
           & 67{,}700 & \textbf{50{,}700} \\
2d\_nonsep & $7.9 \times 10^6$ & 11{,}900 & \textbf{11{,}400}
           & 141{,}000 & \textbf{114{,}000} \\
\bottomrule
\end{tabular}
\end{table}

\emph{Physical mechanism.}
The relationship between eigenmode mass in the active region and
coherence is problem-dependent.  For 2d\_asym, leading-mode mass
is negligible ($< 0.1\%$) because $\delta$ is tiny (max 2.1\%):
the active set barely changes, so restriction barely affects any
mode.  For 2d\_nonsep ($\delta$ up to 22\%), mass grows with mode
index (6\% at mode~1 to 35\% at mode~50), consistent with trailing
modes concentrating at the active set boundary.  For the thermal
case, mass is moderate and flat (6--13\%).
Trailing modes concentrate energy near the active-inactive boundary
and are sensitive to active set changes:
$\mathrm{Corr}(\text{mass}_\text{active}, \text{angle}) > 0.73$
(Figure~\ref{fig:eigenmode_mass_detail}).

\begin{figure}
\centering
\includegraphics[width=0.85\textwidth]{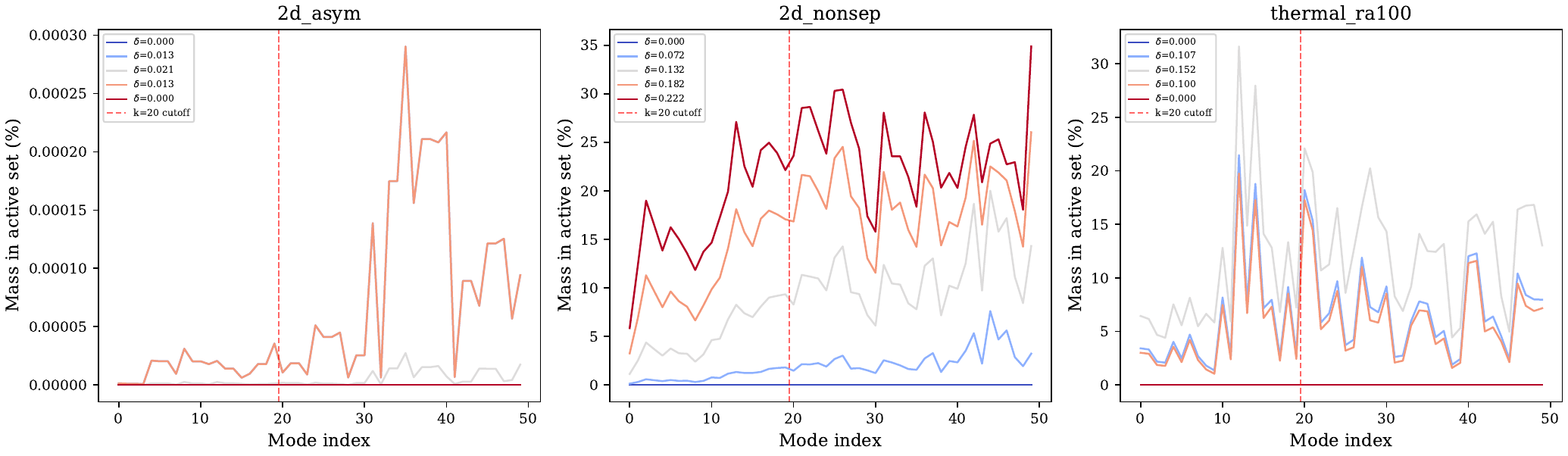}
\caption{Eigenmode mass in the active region vs mode index
  at maximum~$\delta$.  For 2d\_asym ($\delta = 2.1\%$), mass is
  negligible at all modes.  For 2d\_nonsep ($\delta = 22\%$), mass
  grows with mode index (6\% to 35\%), consistent with trailing
  modes concentrating at the active set boundary.  For thermal
  ($\delta = 15\%$), mass is moderate and flatter (6--13\%).}
\label{fig:eigenmode_mass_detail}
\end{figure}

\subsection{CG iteration analysis}

Figure~\ref{fig:cg_iterations} shows cold and deflated CG
iterations reindexed by active set distance~$\delta$.

\begin{figure}
\centering
\includegraphics[width=\textwidth]{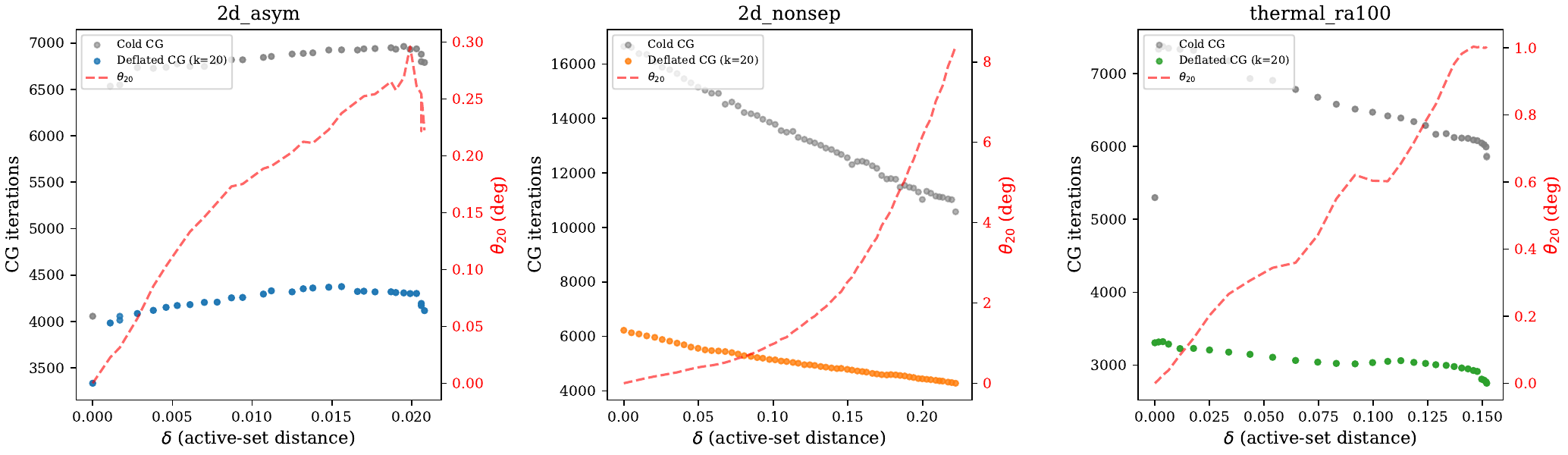}
\caption{Cold and deflated CG iterations vs active set
  distance~$\delta$, with $\theta_{20}$ overlaid (right axis, red).
  The cutoff angle $\theta_{20}$ is a useful diagnostic for deflation
  behavior; baseline CG difficulty also affects the effectiveness
  ratio, particularly at small~$\delta$ (e.g., 2d\_asym) and in
  configurations where the cold count itself varies.}
\label{fig:cg_iterations}
\end{figure}

For \textbf{2d\_asym}, effectiveness is nearly flat ($\sim$37\%)
because $\delta$ is tiny (max~2.1\%).  The speedup ratio
$n_{\mathrm{cold}} / n_{\mathrm{defl}} \approx 1.64\times$ is
constant.  The 17.8\% effectiveness at the endpoints is an artifact:
cold~CG is anomalously easy there ($4{,}057$ vs $\sim6{,}800$ typical)
due to favorable RHS--eigenvector alignment at the symmetric
$\mu = 0, \pi/2$ configurations.

For \textbf{2d\_nonsep}, cold~CG drops from $16.6$K to $10.6$K as
$\kappa$ falls with the growing active set; deflated~CG drops in
parallel ($6.2$K$\to$$4.3$K).  Effectiveness stays at 59--63\%
(std = 1.0\%).  The penalty factor (actual vs ideal CG reduction)
is constant at $\sim$1.7$\times$ --- a structural property of the
non-uniform eigenvalue distribution, not caused by angle degradation.

For \textbf{thermal\_ra100}, the near-zero overall correlation
$\mathrm{Corr}(\delta, \text{eff}) = -0.099$ reflects a
Simpson's-paradox-type confound:
$C_{4v}$~symmetry at the endpoints artificially reduces cold~CG cost.
Within interior instances (indices~1--58), the correlation is
$-0.84$.  The confound is in the denominator: cold~CG
varies substantially (5{,}299--7{,}372, CV~=~8.2\%) while deflated~CG
is far more stable (2{,}753--3{,}324, CV~=~5.2\%).  This indicates that
deflation provides a roughly constant absolute iteration reduction
($\sim$3{,}400~iterations removed) regardless of perturbation
severity --- suggesting a form of robustness not captured by the
effectiveness ratio alone.

\subsection{Two-source decomposition}

Figure~\ref{fig:two_source} separates the two potential degradation
sources: active set perturbation ($\delta$) and operator
drift~($\varepsilon$).

\begin{figure}
\centering
\includegraphics[width=0.85\textwidth]{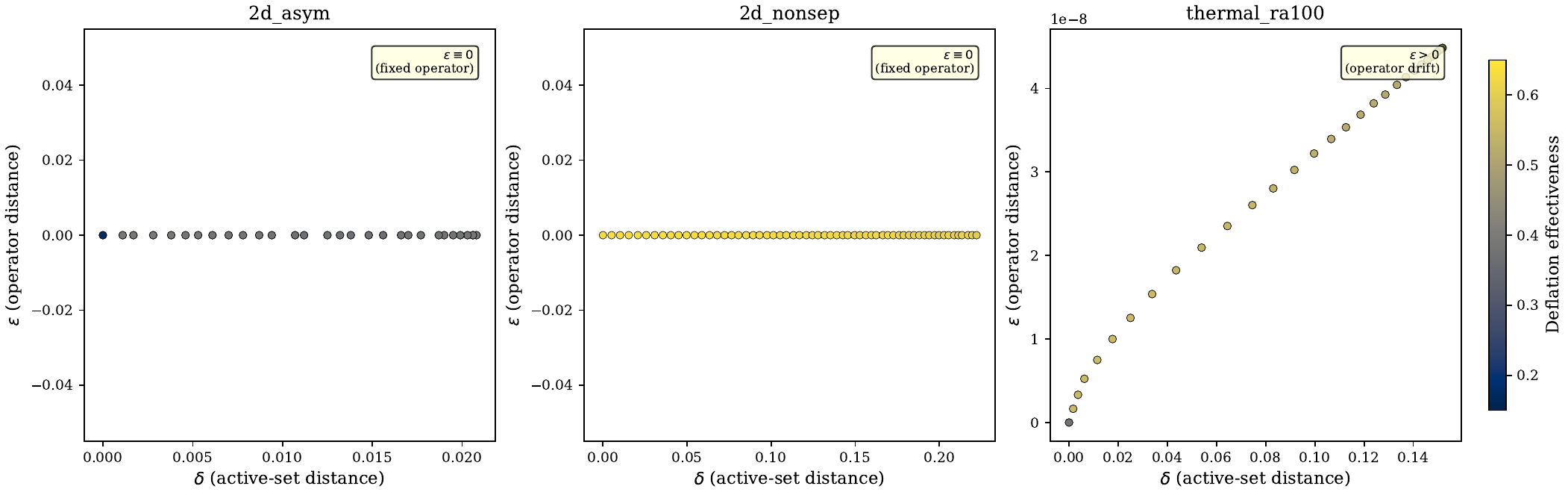}
\caption{Two-source decomposition: active set distance~$\delta$ vs
  operator drift~$\varepsilon$, colored by deflation effectiveness.
  For linear problems, $\varepsilon \equiv 0$; for thermal\_ra100,
  $\varepsilon$ is at most $\sim 5 \times 10^{-8}$ ($\mathcal{O}(10^{-8})$).}
\label{fig:two_source}
\end{figure}

For the \textbf{linear problems}, operator drift is identically zero
($\varepsilon \equiv 0$) across all instances, confirming that
active set geometry is the only degradation mechanism in this
setting.  The full Schur complement $\widehat{M} = \alpha L^2 + I$
is parameter-independent; only the restriction to inactive DOFs
changes.

For the \textbf{nonlinear} case (thermal\_ra100), the measured
relative operator drift is small: $\varepsilon \le 4.5 \times 10^{-8}$
across all instances (recall that $\varepsilon$ is already a
relative quantity, $\|M_{\mathrm{ref}} - M_m\|_2 /
\|M_{\mathrm{ref}}\|_2$). Active-set restriction therefore remains
the dominant observed perturbation mechanism on this benchmark, and
the operator is nearly constant for deflation purposes.  $\varepsilon$
tracks~$\delta$ almost perfectly ($\mathrm{Corr} = 0.997$) --- a
proxy for active set change, not an independent degradation source.

\subsection{Extended diagnostics: weighted distance, stratified angles, robustness}
\label{ssec:extended_investigation}

Three follow-up experiments probe the robustness of the main
diagnostics: whether active set distance should be weighted by
modal importance, whether subspace-angle sensitivity is localized
near the deflation cutoff, and whether deflation eventually breaks
down under extreme active set growth.

The uniform distance $\delta = |\calA_{\mathrm{ref}} \triangle
\calA_i| / N$ treats all DOFs equally.  We define a weighted
variant
\[
  \delta_w \;=\; \frac{\sum_j w_j \,
    |\calA_{\mathrm{ref}}(j) - \calA_i(j)|}{\sum_j w_j},
  \qquad
  w_j = \sum_{k=1}^{K_{\mathrm{defl}}} v_k(j)^2,
\]
where $\calA_{\mathrm{ref}}(j), \calA_i(j) \in \{0,1\}$ denote
the active set indicator at DOF~$j$, and
$w_j$ is the total eigenmode energy at DOF~$j$ from the first
$K_{\mathrm{defl}}$ reference eigenmodes.
We use $K_{\mathrm{defl}} = r$ throughout; in the present diagnostics
$r = 20$, so $w_j$ aggregates the first 20~reference modes.

The endpoint instances of the parametric sweep are
$\theta = 0$ and $\theta = \pi/2$, which correspond to the
high-$C_{4v}$ symmetry configurations of the rotating-source family;
the right-hand side aligns unusually well with the leading
eigenbasis there, producing anomalously low cold-CG counts that
inflate apparent deflation effectiveness without reflecting the
parametric trend in $\delta$. We therefore report both the
all-instance correlation and an interior-only correlation that
excludes those two endpoints, so that the symmetry anomaly is
not conflated with the dependence of effectiveness on~$\delta$.

\begin{table}[htbp]
\centering
\caption{Correlation of active set distance with deflation
  effectiveness: uniform~$\delta$ vs eigenmode-weighted~$\delta_w$.
  ``Interior'' excludes the two endpoint instances where
  $C_{4v}$~symmetry anomalies (see preceding paragraph) confound
  the correlation.}
\label{tab:weighted_delta}
\small
\begin{tabular}{@{} l rrrr @{}}
\toprule
 & $\mathrm{Corr}(\delta, \text{eff})$
 & $\mathrm{Corr}(\delta_w, \text{eff})$
 & Interior $\delta$
 & Interior $\delta_w$ \\
\midrule
2d\_asym       & $0.295$  & $0.163$             & $-0.178$ & $\mathbf{-0.552}$ \\
2d\_nonsep     & $-0.888$ & $\mathbf{-0.943}$   & $-0.902$ & $\mathbf{-0.943}$ \\
thermal\_ra100 & $-0.099$ & $-0.155$            & $-0.836$ & $-0.822$ \\
\bottomrule
\end{tabular}
\end{table}

Eigenmode-weighted~$\delta_w$
(Table~\ref{tab:weighted_delta})
improves on uniform~$\delta$ for
2d\_nonsep ($-0.94$ vs $-0.89$) and for 2d\_asym interior instances
($-0.55$ vs $-0.18$), where the tiny uniform~$\delta$ (max~2.1\%)
is too weakly resolved to reveal the trend without eigenmode
weighting.
For thermal\_ra100, both metrics give near-zero correlation on all
instances due to the endpoint symmetry anomaly; interior-only
correlations are already strong ($-0.84$) for both --- the confound
is the anomaly, not the metric.
Figure~\ref{fig:weighted_delta} visualizes the $\delta$ vs
$\delta_w$ relationship.

\begin{figure}
\centering
\includegraphics[width=0.85\textwidth]{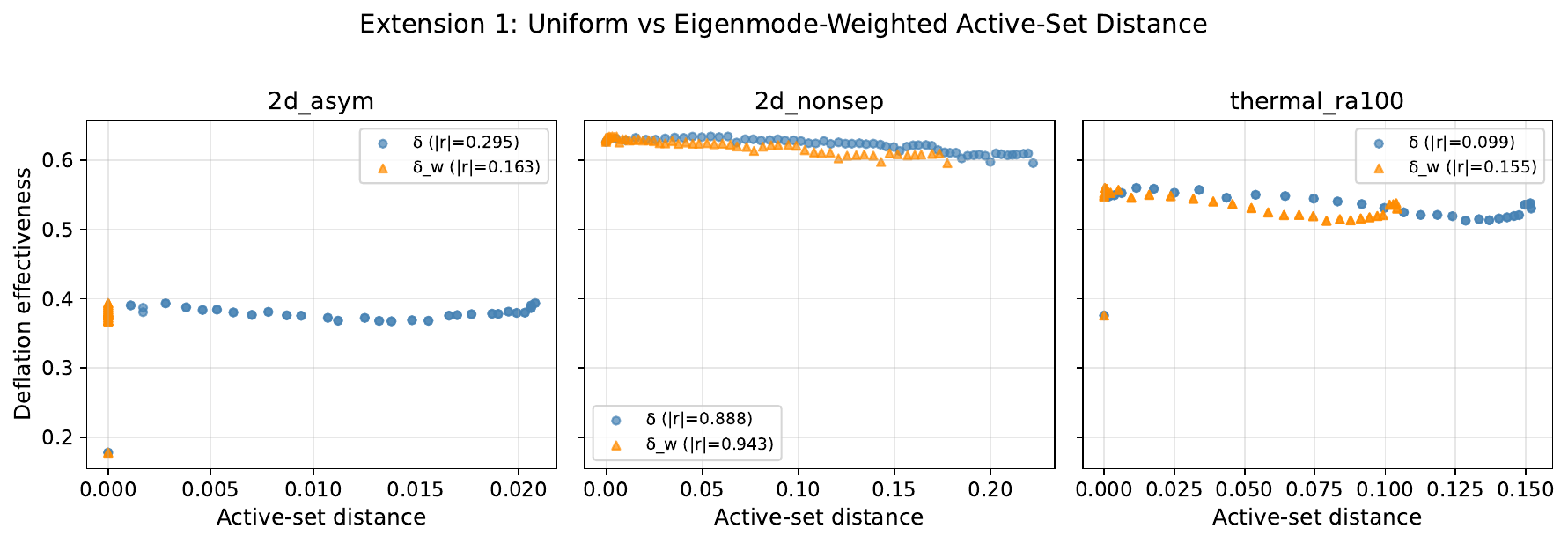}
\caption{Per-instance scatter of deflation effectiveness against
  uniform active-set distance $\delta$ (blue circles) and the
  eigenmode-weighted variant $\delta_w$ (orange triangles), one
  panel per problem; legend entries report $|r|$ with the
  corresponding metric. Eigenmode weighting mainly helps by
  rescuing the under-resolved 2d\_asym interior sweep
  ($\delta_w$ correlation $-0.55$ vs uniform $-0.18$ on the
  interior, max uniform $\delta$ is only $2.1\%$) and modestly
  sharpens 2d\_nonsep ($-0.94$ vs $-0.89$). For thermal\_ra100,
  both metrics give near-zero all-instance correlation because
  of the two endpoint symmetry anomalies described in the text;
  interior-only correlations are already strong ($-0.84$) for
  both metrics.}
\label{fig:weighted_delta}
\end{figure}

Table~\ref{tab:stratified_angles} quantifies the per-quintile
correlation structure underlying the $\theta_{20}$ vs $\theta_{\max}$
finding in Section~\ref{ssec:spectral_diagnostics}.

\begin{table}[htbp]
\centering
\caption{Correlation of subspace angle with effectiveness, stratified
  by mode quintile and compared with the cutoff angle $\theta_{20}$
  and worst-case angle $\theta_{\max}$.}
\label{tab:stratified_angles}
\small
\begin{tabular}{@{} l rrr @{}}
\toprule
Quintile & 2d\_asym & 2d\_nonsep & thermal\_ra100 \\
\midrule
Modes 1--10  & $0.29$   & $\mathbf{-0.90}$ & $-0.15$ \\
Modes 11--20 & $0.35$   & $\mathbf{-0.92}$ & $-0.10$ \\
Modes 21--30 & $0.32$   & $\mathbf{-0.94}$ & $-0.06$ \\
Modes 31--40 & $0.32$   & $-0.92$          & $-0.11$ \\
Modes 41--50 & $0.24$   & $-0.77$          & $-0.02$ \\
\midrule
$\mathrm{Corr}(\theta_{20}, \text{eff})$
             & $0.34$   & $\mathbf{-0.93}$ & $-0.10$ \\
$\mathrm{Corr}(\theta_{\max}, \text{eff})$
             & $0.19$   & $-0.57$          & $0.11$  \\
\bottomrule
\end{tabular}
\end{table}

For 2d\_nonsep, the strongest correlation is in quintile~21--30
($|\!r\!| = 0.94$), right \emph{beyond} the deflation cutoff.
We caution that modes 21--30 are not used in the $r = 20$
deflation basis, so this correlation does not establish causality;
it is consistent with shared dependence on $\delta$. We therefore
read it as suggestive evidence that the transition zone just
beyond the cutoff is a useful diagnostic of when the retained
subspace is approaching the incoherent regime, rather than as a
demonstration that this zone controls deflation quality.
Quintiles~31--40 and 41--50 show progressively weaker correlations
($-0.92$, $-0.77$), consistent with their randomized character.
Figure~\ref{fig:stratified_heatmap} visualizes this structure.

\begin{figure}
\centering
\includegraphics[width=0.85\textwidth]{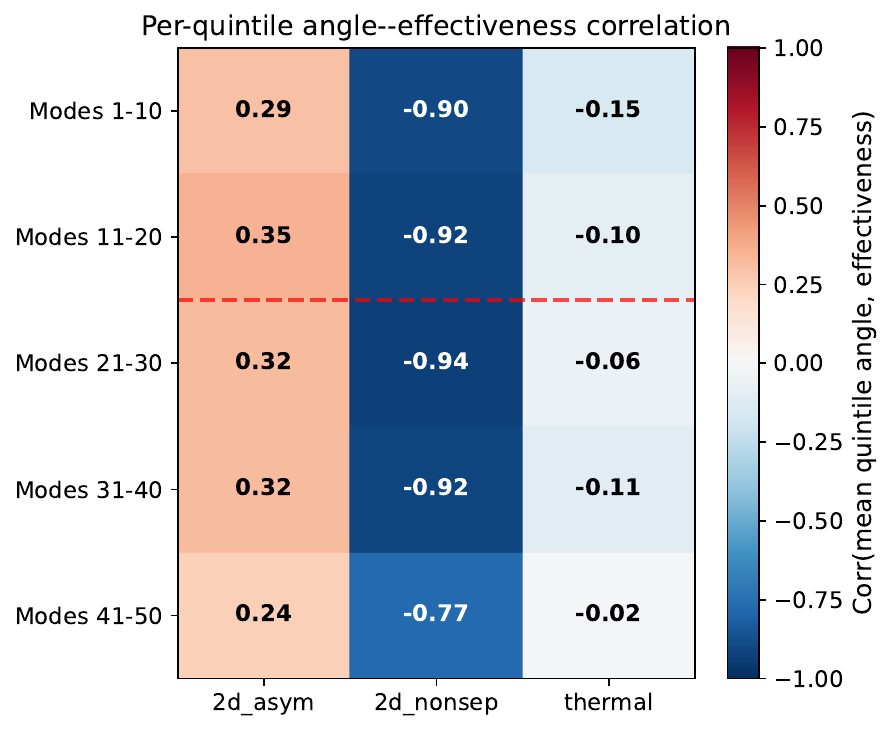}
\caption{Correlation of per-quintile mean subspace angle with
  deflation effectiveness across all three problems. The dashed
  red horizontal line separates the deflation subspace (modes
  1--20, above the line) from trailing modes (21--50, below).
  For 2d\_nonsep, the strongest correlation is in
  quintile~21--30 ($|r| = 0.94$), immediately below the cutoff;
  see the caveat in the body text on causal interpretation.}
\label{fig:stratified_heatmap}
\end{figure}

We sweep $\psi \in \{50\%, 30\%, 15\%, 10\%, 5\%\} \times
y_{\mathrm{unc},\max}$ on 2d\_nonsep, producing active sets ranging
from 27\% to 88\% of DOFs.

\begin{table}[htbp]
\centering
\caption{Stress test on 2d\_nonsep: deflation
  effectiveness remains 55--60\% even at 88\% active DOFs.}
\label{tab:push_breakdown}
\small
\begin{tabular}{@{} l ccccc @{}}
\toprule
$\psi$ level & Active frac & $\delta$ range
  & Eff.\ range & $\kappa(\ZZ^\top M_{\calI\calI} \ZZ)_{\max}$
  & $\theta_{20,\max}$ \\
\midrule
Baseline & 7--30\%  & 0--22\% & 59--63\% & $\sim$8{,}500 & $8.4^\circ$ \\
50\%            & 27--45\% & 0--17\% & 49--61\% & 8{,}453       & $8.2^\circ$ \\
30\%            & 48--61\% & 0--12\% & 51--58\% & 4{,}448       & $24^\circ$  \\
15\%            & 68--76\% & 0--7.8\%& 54--60\% & 3{,}238       & $17.7^\circ$ \\
10\%            & 76--82\% & 0--6.0\%& 55--60\% & 1{,}877       & $4.4^\circ$  \\
5\%             & 85--89\% & 0--3.6\%& 55--60\% & 1{,}093       & $2.4^\circ$  \\
\bottomrule
\end{tabular}
\end{table}

No breakdown was observed in this sweep: effectiveness in
Table~\ref{tab:push_breakdown} stays in the 55--60\,\% band across
all $\psi$ levels even at 88\,\% active DOFs, and the coarse-solve
conditioning visualized in Figure~\ref{fig:breakdown_cond}
\emph{improves} as the active set grows. Four mechanisms create
a robust self-regulating equilibrium:

\begin{figure}
\centering
\includegraphics[width=0.85\textwidth]{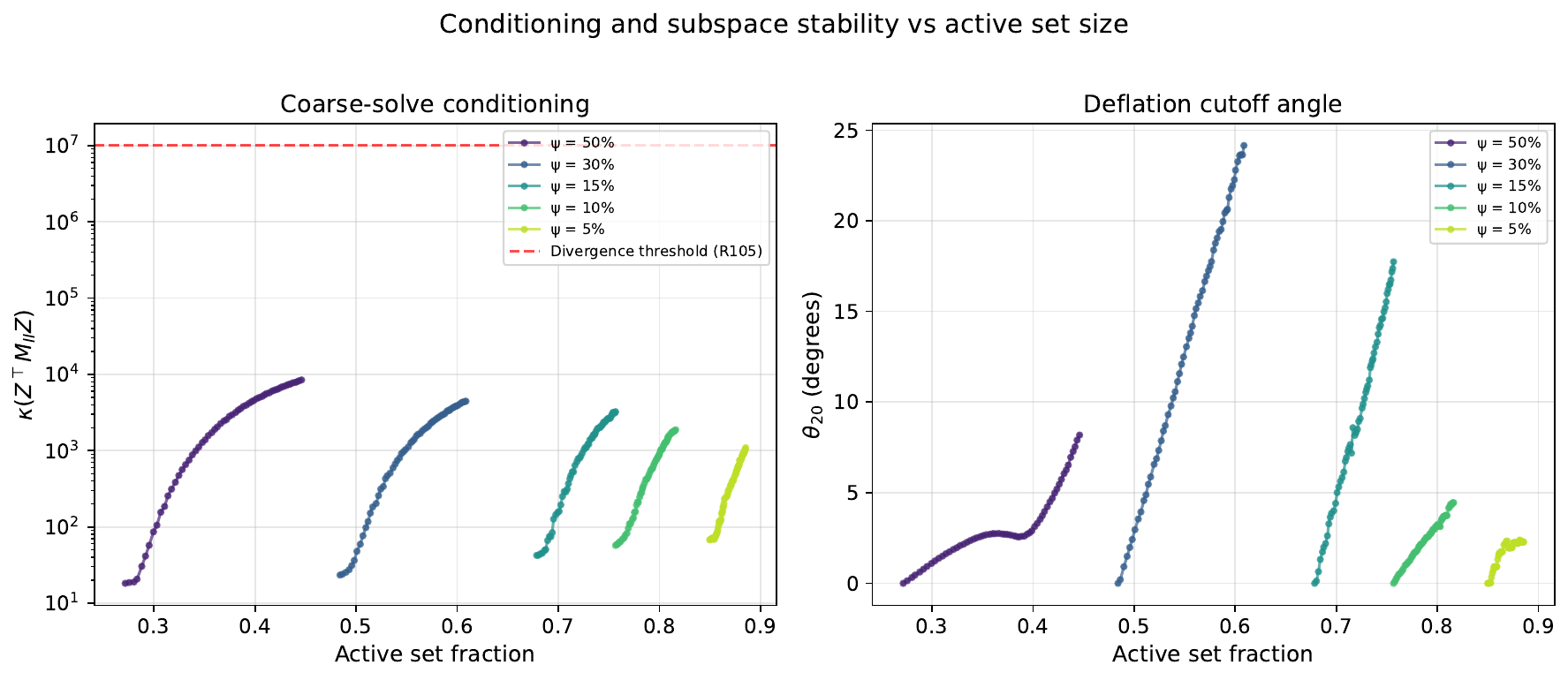}
\caption{Coarse Gram conditioning $\kappa(\ZZ^\top\!M_{\calI\calI}\ZZ)$
  and $\theta_{20}$ vs active fraction.  The coarse-solve conditioning
  \emph{improves} as the active set grows, remaining well below the
  empirical divergence threshold (${\sim}10^7$).}
\label{fig:breakdown_cond}
\end{figure}

\FloatBarrier  

\begin{enumerate}
  \item \textbf{Spectral self-regulation.}
    As the active set grows, fewer inactive DOFs remain, so
    $\kappa(M_{\calI\calI})$ drops from $6 \times 10^6$ (27\% active)
    to $3.85 \times 10^5$ (85\% active).  Both cold and deflated CG
    decrease proportionally.

  \item \textbf{Condition safety.}
    $\kappa(\ZZ^\top M_{\calI\calI} \ZZ)$ never approaches the
    empirical divergence threshold observed in our CG runs
    (${\sim}10^7$).  The peak ($8{,}453$ at $\psi = 50\%$)
    actually \emph{decreases} as~$\psi$ tightens further.

  \item \textbf{Stable effectiveness.}
    Effectiveness remains at 55--60\% across all~$\psi$ levels,
    indicating that the fixed deflation rank $r = 20$ continues to
    capture the dominant low-eigenvalue structure throughout the sweep.

  \item \textbf{Self-limiting~$\delta$.}
    Maximum~$\delta$ decreases as the active set grows
    (22\% $\to$ 17\% $\to$ 12\% $\to$ 7.8\% $\to$ 6.0\% $\to$ 3.6\%):
    with a large active set, the remaining inactive DOFs are more
    ``interior'' and change less between parameter instances.
\end{enumerate}

We note that the self-limiting~$\delta$ mechanism (item~4) prevented
testing the basis under large perturbation at high active set
fraction; the absence of breakdown is partly because the problem
self-regulates away from the breakdown regime.
These results suggest that, in this benchmark family, breakdown
is not triggered by large active-set fraction alone; it may
require a qualitatively different mechanism --- for example, a
rank-deficient target eigenspace (as in~1D) or a severe mismatch
between the deflation cap and the effective spectral dimension.

\section{Method extensions: non-symmetric and space--time systems}
\label{app:extensions}
\label{sec:extensions} 

This appendix collects methodological extensions that were previously
embedded in the core methodology narrative. They are retained here so
that the main paper can first establish the base problem formulation,
the reusable deflation algorithm, and the spectral rationale before
moving to broader operator classes.

\subsection{Non-symmetric conjugate heat transfer operator}
\label{sec:cht}

For CHT problems with heterogeneous
conductivity $\kappa(x)$ and Poiseuille convection $\vv(x)$, the state
operator
\begin{equation}
  A y \;=\; -\dive(\kappa\nabla y) + \vv \cdot \nabla y
\end{equation}
is non-symmetric. The reduced Schur complement
$M = \alpha A^\top A + I$ remains SPD, so the same deflation framework
extends directly to this setting. In principle, the reference basis
can be computed from any full-domain operator $M_{\mathrm{ref}}$; in
the experiments, we use the $\Rey = 0$, $\kappa_r = 1$ baseline
(pure diffusion) as the reference and transfer its eigenmodes to all
CHT configurations. We stress that this is by construction a
\emph{reference-mismatch test} for the deflation framework: for
$(\Rey, \kappa_r) \neq (0, 1)$ the current operator differs from
the reference by both convection and a heterogeneous conductivity,
so the operator drift~$\varepsilon$ is not negligible in the sense
of the Section~\ref{ssec:spectral_diagnostics} diagnostics. The
CHT performance results should therefore be read as empirical
robustness evidence under reference-mismatch, consistent with the
framing in Section~\ref{ssec:iter_benchmarks}; Appendix~\ref{app:analytical}
notes that even pure-Laplacian analytical eigenmodes provide
effective deflation for these benchmarks, which is itself a
stronger reference-mismatch test (analytical Laplacian eigenmodes
versus a CHT operator).

\paragraph{Interface treatment.}
At the solid--fluid interface ($x_3 = 0.5$), flux continuity
$\kappa_s \partial_n T|_s = \kappa_f \partial_n T|_f$ is enforced via
harmonic averaging of the interface conductivity,
$\kappa_{\mathrm{ifc}} = 2\kappa_s\kappa_f / (\kappa_s + \kappa_f)$,
following standard finite-volume practice\citep{patankar2018numerical}.

\subsection{Space--time extension}

The parabolic extension
$\partial_t y - \dive(\kappa\nabla y) + \vv\cdot\nabla y = u$
can be written in an all-at-once formulation with implicit Euler time
stepping~\citep{pearson2012new,langer2021spacetime}. This yields a block
lower-bidiagonal forward operator $F$ and
a block-tridiagonal SPD Schur complement
$M_{\mathrm{st}} = \alpha F^\top F + I$. The same PDAS-plus-deflation
framework therefore applies to the full space--time system. Numerical
validation is reported in Section~\ref{ssec:st_results}.

\subsubsection{Temporal deflation basis}

Let $Z_s \in \reals^{N_s \times r}$ denote a steady-state spatial
basis of $r$ eigenmodes. A naive constant-in-time lift,
$\mathbf{1}_{n_t} \otimes Z_s$, supplies only $r$ independent vectors for
a system of size $N_s n_t$ and therefore provides negligible iteration
reduction in practice (see Section~\ref{ssec:st_results}).

A more expressive choice is the Kronecker expansion
\begin{equation}\label{eq:kron_basis}
  Z_{\mathrm{st}} = I_{n_t} \otimes Z_s
  \;\in\; \reals^{N_s n_t \times r n_t},
\end{equation}
which assigns each spatial mode an independent coefficient at each time
step. The corresponding coarse Gram matrix,
$G_{\mathrm{st}} = Z_{\mathrm{st}}^\top M_{\mathrm{st}} Z_{\mathrm{st}}$,
is only $(r n_t) \times (r n_t)$ and remains trivial to factor for the
ranks used in this paper.

We also tested a cosine temporal basis built from the DCT-II matrix
$W_t \in \reals^{n_t \times r_t}$, giving
$Z_{\mathrm{cos}} = W_t \otimes Z_s$. When $r_t = n_t$, this spans the
same column space as the Kronecker basis and performs identically. When
truncated to $r_t < n_t$, however, performance degrades rapidly.
Lanczos eigenmodes of the full space--time operator were also tested
at smaller grids ($N \leq 100{,}000$), but under the same rank budget
they are less effective than the Kronecker form and were not pursued
at larger scales.  Quantitative comparisons are in
Section~\ref{ssec:st_results}.

\begin{figure}[htbp]
\centering
\begin{tikzpicture}[scale=0.85]
  \node[font=\footnotesize\bfseries] at (-3.5,4.0)
    {(a) Forward operator $F$};

  \begin{scope}[shift={(-3.5,0)}]
    \def\bs{1.0}  
    \def\nt{4}    
    \draw[thick] (-0.1, {-\nt*\bs-0.1}) rectangle ({\nt*\bs+0.1}, 0.1);

    \foreach \i in {1,...,\nt} {
      \fill[cbBlue!18] ({(\i-1)*\bs}, {-\i*\bs})
        rectangle ({(\i)*\bs}, {(-\i+1)*\bs});
      \node[font=\small] at ({(\i-0.5)*\bs}, {(-\i+0.5)*\bs}) {$D$};
    }
    \foreach \i in {2,...,\nt} {
      \fill[cbOrange!18] ({(\i-2)*\bs}, {-\i*\bs})
        rectangle ({(\i-1)*\bs}, {(-\i+1)*\bs});
      \node[font=\small] at ({(\i-1.5)*\bs}, {(-\i+0.5)*\bs}) {$\!-B$};
    }
    \node[font=\scriptsize, cbBlue!80!black] at (2.8, -0.5)
      {$D = \frac{I}{\Delta t} + A$};
    \node[font=\scriptsize, cbOrange!90!black] at (2.8, -1.3)
      {$B = \frac{I}{\Delta t}$};
    \node[font=\scriptsize] at (2.0, -4.6)
      {Block lower bidiagonal};
  \end{scope}

  \node[font=\footnotesize\bfseries] at (5.0,4.0)
    {(b) $M_{\mathrm{st}} = \alpha F^\top\!F + I$};

  \begin{scope}[shift={(5.0,0)}]
    \def\bs{1.0}
    \def\nt{4}
    \draw[thick] (-0.1, {-\nt*\bs-0.1}) rectangle ({\nt*\bs+0.1}, 0.1);

    \foreach \i in {1,...,\nt} {
      \fill[cbTeal!18] ({(\i-1)*\bs}, {-\i*\bs})
        rectangle ({(\i)*\bs}, {(-\i+1)*\bs});
    }
    \node[font=\scriptsize, align=center] at (0.5, -0.5)
      {$\alpha(D^\top\!D$\\$+B^\top\!B)+I$};
    \node[font=\scriptsize, align=center] at ({(\nt-0.5)*\bs}, {(-\nt+0.5)*\bs})
      {$\alpha D^\top\!D$\\$+I$};
    \foreach \i in {2,3} {
      \node[font=\tiny, align=center] at ({(\i-0.5)*\bs}, {(-\i+0.5)*\bs})
        {$\alpha(D^\top\!D$\\$+B^\top\!B)+I$};
    }

    \foreach \i in {1,...,3} {
      \fill[cbOrange!18] ({(\i)*\bs}, {-\i*\bs})
        rectangle ({(\i+1)*\bs}, {(-\i+1)*\bs});
      \node[font=\scriptsize] at ({(\i+0.5)*\bs}, {(-\i+0.5)*\bs})
        {$-\alpha D^\top\!B$};
    }
    \foreach \i in {2,...,\nt} {
      \fill[cbOrange!18] ({(\i-2)*\bs}, {-\i*\bs})
        rectangle ({(\i-1)*\bs}, {(-\i+1)*\bs});
      \node[font=\scriptsize] at ({(\i-1.5)*\bs}, {(-\i+0.5)*\bs})
        {$-\alpha B^\top\!D$};
    }

    \node[font=\scriptsize, cbTeal!75!black] at (2.0, -4.6)
      {Block tridiagonal, SPD};
  \end{scope}

\end{tikzpicture}
\caption{Block structure of the space--time operators for the parabolic
  extension. (a)~The forward operator $F$ is block lower bidiagonal
  under implicit Euler time
  stepping~\citep{pearson2012new}. (b)~The
  space--time Schur complement $M_{\mathrm{st}} = \alpha F^\top F + I$
  is block tridiagonal and SPD. Each block is
  $N_s \times N_s$ (spatial DOF), and the full system has
  $n_t N_s$ unknowns.}
\label{fig:spacetime}
\end{figure}

\FloatBarrier  

\section{Full benchmark suite description}
\label{app:benchmarks}

This appendix provides the complete mathematical specification of all
15~benchmark configurations used in Section~\ref{sec:results}.
All benchmark families share the same quadratic tracking objective
and pointwise state constraint; the governing state equation varies
by benchmark class.
Throughout, $\alpha = 10^{-3}$ and $\psi$ is set as a fraction of the
unconstrained maximum to ensure a nontrivial active set.
Discretization details are in Appendix~\ref{app:discretization}.
Table~\ref{tab:app_benchmark_summary} provides a compact overview.

\begin{table}[htbp]
\centering
\caption{Benchmark families overview (supplements Table~1 with
  operator type, parametric dimension, and mean active set
  fraction). The $\langle a\rangle$ column is the measured mean
  active fraction across the 30~parametric instances at a
  representative grid; per-instance ranges, $\psi$ values, and the
  calibration protocol are in Table~\ref{tab:psi_calibration}.}
\label{tab:app_benchmark_summary}
\small
\begin{tabular}{@{} llcccl @{}}
\toprule
Problem & Operator $\mathcal{L}$ & $d$ & $n_\mu$
  & $\langle a\rangle$ & Parameter \\
\midrule
\texttt{2d\_asym}    & $-\Delta$  & 2 & 1 & $0.21$
  & $\theta$ (rotation) \\
\texttt{2d\_sym}     & $-\Delta$  & 2 & 1 & $0.19$
  & $a$ (amplitude) \\
\texttt{2d\_nonsep}  & $-\Delta$  & 2 & 1 & $0.19$
  & $a$ (amplitude) \\
\texttt{thermal\_ra*} & $-\Delta + \Ra\,\mathbf{v}\!\cdot\!\nabla + \gamma(\cdot)^3$
  & 2 & 1 & $0.13$--$0.21$\textsuperscript{*} & $\theta$ (rotation) \\
\texttt{3d\_obstacle} & $-\Delta$ & 3 & 1 & $0.18$--$0.21$\textsuperscript{\dag}
  & $\theta$ (rotation) \\
\texttt{3d\_contam}  & $-\Delta$  & 3 & 1 & $0.11$
  & $t$ (translation) \\
\texttt{3d\_thermal} & $-\Delta$  & 3 & 1 & $0.22$
  & $\theta$ (rotation) \\
\texttt{cht\_*}      & $-\nabla\!\cdot\!(\kappa\nabla\cdot) + \Rey\,\mathbf{v}\!\cdot\!\nabla$
  & 3 & 1 & $0.17$--$0.23$ & $\theta$ (rotation) \\
\texttt{cht\_space\_time} & $\partial_t - \nabla\!\cdot\!(\kappa\nabla\cdot) + \mathbf{v}\!\cdot\!\nabla$
  & 3+t & 1 & $0.02$--$0.03$\textsuperscript{\ddag} & $\theta$ (rotation) \\
\bottomrule
\end{tabular}
\\[2pt]
\footnotesize
The target-calibrated suites (all rows except
\texttt{cht\_space\_time}) are tuned to a nontrivial active fraction
of ${\approx}\,20\%$ at the midpoint parameter; the mean
$\langle a\rangle$ differs from this target because the active
fraction varies over the parametric sweep
(Table~\ref{tab:psi_calibration}).
\textsuperscript{*}\texttt{thermal\_ra*} spans
$\langle a\rangle = 0.21/0.14/0.13/0.13$ for
$\Ra = 10/100/500/1000$.
\textsuperscript{\dag}\texttt{3d\_obstacle} is
$0.18$ ($17.7$--$18.9\%$) at $30^3$
(Section~\ref{ssec:iter_benchmarks}) and $0.21$
($20.0$--$21.2\%$) at $50^3$ (the obstacle 50$^3$ rerun used in
Tables~\ref{tab:walltime} and~\ref{tab:gpu_vs_amg}), reflecting
per-grid recalibration of $\psi$ (Section~\ref{sec:benchmarks}).
\textsuperscript{\ddag}\texttt{cht\_space\_time} uses a fixed
$\psi = 0.7\,y_{\max}$ and is \emph{not} target-active calibrated,
hence its much lower active fraction.
\end{table}

\subsection{2D Laplacian problems}
\label{app:bench_2d_lapl}

All three 2D Laplacian problems use $\mathcal{L} = -\Delta$ on
$\Omega = (0,1)^2$ with $y = 0$ on~$\partial\Omega$, tested on grids
$100^2$--$500^2$.

\paragraph{\texttt{2d\_asym}: Rotating four-Gaussian.}
The desired state is
\[
  y_d(x;\theta)
  = \sum_{k=1}^4 \exp\!\Bigl(
    -\frac{\|x - c_k(\theta)\|^2}{2\sigma^2}\Bigr),
  \quad \sigma = 0.1,\quad \theta \in [0,\pi/2],
\]
where the centers rotate about $(0.5,0.5)$ at radius
$r = \sqrt{0.08}$:
$c_k(\theta) = (0.5 + r\cos(\phi_k + \theta),\;
0.5 + r\sin(\phi_k + \theta))$
with $\phi_k = 5\pi/4 + (k{-}1)\pi/2$.
The rotation produces asymmetric, spatially varying active sets
that change shape and connectivity with~$\theta$.

\paragraph{\texttt{2d\_sym}: Separable sinusoidal.}
\[
  y_d(x; a) = a\,\sin(\pi x_1)\sin(\pi x_2),
  \quad a \in [0.8, 1.2].
\]
Since $\sin(\pi x_1)\sin(\pi x_2)$ is an eigenfunction of $-\Delta$
on $(0,1)^2$ with eigenvalue $2\pi^2$, it is also an eigenfunction
of the Schur complement $M = \alpha A^\top A + I$ with eigenvalue
$1 + (2\pi^2)^2 \alpha = 1 + 4\pi^4 \alpha$. The unconstrained
solution is therefore exactly
\[
  y^* = \frac{a}{1 + 4\pi^4 \alpha}\,\sin(\pi x_1)\sin(\pi x_2).
\]
The solution manifold is rank-1, making this the easiest case
for deflation.  The active set is a simply connected region that
grows concentrically with~$a$.

\paragraph{\texttt{2d\_nonsep}: Non-separable.}
\[
  y_d(x; a) = a\,\sin(2\pi x_1 x_2),
  \quad a \in [0.8, 1.2].
\]
The multiplicative coupling $x_1 x_2$ produces hyperbolic contour
lines and a fundamentally non-separable Fourier expansion.
The solution manifold has higher effective dimension than
\texttt{2d\_sym}, requiring more deflation modes.

\subsection{2D thermal convection--diffusion--reaction}
\label{app:bench_thermal}

The operator includes convection and a cubic reaction:
\[
  \mathcal{L}\,y = -\Delta y + \Ra\,\mathbf{v}(x)\cdot\nabla y
  + \gamma\,y^3,
  \qquad
  \mathbf{v}(x) = \bigl(\sin(\pi x_1)\cos(\pi x_2),\;
  -\cos(\pi x_1)\sin(\pi x_2)\bigr),
\]
with $\nabla\!\cdot\!\mathbf{v} = 0$.
The desired state is the same rotating four-Gaussian as
\texttt{2d\_asym}.
Four Rayleigh-number regimes are tested:

\begin{center}
\small
\begin{tabular}{@{} lcccl @{}}
\toprule
Config & $\Ra$ & $\gamma$ & Regime \\
\midrule
\texttt{thermal\_ra10}   & 10   & 100 & Diffusion-dominated \\
\texttt{thermal\_ra100}  & 100  & 100 & Balanced \\
\texttt{thermal\_ra500}  & 500  & 50  & Convection-dominated \\
\texttt{thermal\_ra1000} & 1000 & 100 & Strongly convective \\
\bottomrule
\end{tabular}
\end{center}

\noindent
The nonlinear term $\gamma y^3$ is handled via Picard
linearization ($\gamma y_{\mathrm{prev}}^2 y$).
At $\Ra = 500$, $\gamma$ is reduced to~50 to ensure Picard
convergence.  The Schur complement $M = \alpha A^\top\!A + I$ of the
Picard-linearized operator remains SPD with $M \succeq I$ by the same
argument as in Section~\ref{sec:schur}.

\subsection{3D Laplacian problems}
\label{app:bench_3d_lapl}

All 3D Laplacian problems use $-\Delta y = u$ on $\Omega = (0,1)^3$
with $y = 0$ on~$\partial\Omega$, tested on grids $10^3$--$50^3$.

\paragraph{\texttt{3d\_obstacle}: Rotating four-source.}
\[
  y_d(x;\theta) = 3.0 \sum_{k=1}^{4} \exp\!\Bigl(
    -\frac{\|x - s_k(\theta)\|^2}{2(0.12)^2}\Bigr),
\]
where the four source centers $s_k(\theta)$ are obtained by rotating
the base locations
\[
  S_0 \;=\; \bigl\{\,
    (0.3,0.3,0.5),\;
    (0.3,0.7,0.5),\;
    (0.7,0.3,0.5),\;
    (0.7,0.7,0.5)
  \,\bigr\}
\]
around the $x_3$-axis through $(0.5,0.5)$ by angle
$\theta \in [0, \pi/2]$.
This produces active set fractions of $17.7$--$18.9\%$ at $30^3$
and $20.0$--$21.2\%$ at $50^3$ across the 30~instances; the
per-grid recalibration of $\psi$ accounts for the difference.

\paragraph{\texttt{3d\_contam}: Translating single source.}
\[
  y_d(x;t) = 3.0\,\exp\!\Bigl(
    -\frac{\|x - c(t)\|^2}{2(0.15)^2}\Bigr),
  \quad c(t) = (1{-}t)\,(0.25,0.25,0.25) + t\,(0.75,0.75,0.75),
  \quad t \in [0,1].
\]
The source translates along the body diagonal.  The 30~instances
correspond to uniformly spaced~$t$.

\paragraph{\texttt{3d\_thermal}: Rotating vertex Gaussians.}
\[
  y_d(x;\theta) = \sum_{k=1}^{8} \exp\!\Bigl(
    -\frac{\|x - v_k(\theta)\|^2}{2(0.12)^2}\Bigr),
  \quad \theta \in [0,\pi/2],
\]
where $v_k(\theta)$ are obtained by rotating eight base vertices
(at distance $r = \sqrt{0.12} \approx 0.346$ from the center
$(0.5,0.5,0.5)$) by angle~$\theta$ about the $x_3$-axis.  Eight simultaneously
moving sources produce a richer solution manifold than
\texttt{3d\_contam}.

\subsection{3D conjugate heat transfer (CHT)}
\label{app:bench_cht}

The domain $\Omega = (0,1)^3$ is split into a solid region
$\Omega_s = \{x_3 < 0.5\}$ and fluid region
$\Omega_f = \{x_3 \ge 0.5\}$:
\[
  -\nabla\!\cdot\!\bigl(\kappa(x)\,\nabla y\bigr)
  + \Rey\,\mathbf{v}(x)\cdot\nabla y = u,
\]
with piecewise-constant conductivity
$\kappa = \kappa_r$ (solid), $\kappa = 1$ (fluid),
and Poiseuille velocity
$\mathbf{v} = (4x_3(1{-}x_3),\,0,\,0)$ in the fluid,
$\mathbf{v} = \mathbf{0}$ in the solid.
Interface continuity uses harmonic-averaged conductivity
$\kappa_\Gamma = 2\kappa_r/(\kappa_r + 1)$.
The desired state is a four-Gaussian rotating pattern with centers at
$(0.3,0.3,0.3)$, $(0.3,0.7,0.3)$, $(0.7,0.3,0.7)$,
$(0.7,0.7,0.7)$, rotating about the $x_3$-axis as $\theta$ varies.

Four configurations span convection and conductivity contrast.
We do not attach a single physical P\'{e}clet number to each
configuration: with heterogeneous conductivity $\kappa(x)$ and a
prescribed Poiseuille profile $\vv(x)$, the local cell P\'{e}clet
varies through the domain (the fluid layer has $\kappa = 1$, the
solid has $\kappa = \kappa_r$), so any single-number summary
depends sensitively on which length, velocity, and diffusivity
scales are chosen. Instead, we use a qualitative ``Regime''
label below to describe relative convection strength, ordered by
$\Rey$ and $\kappa_r$.

\begin{center}
\small
\begin{tabular}{@{} lccc @{}}
\toprule
Config & $\Rey$ & $\kappa_r$ & Regime \\
\midrule
\texttt{cht\_re0\_kr1}    & 0   & 1   & Pure diffusion \\
\texttt{cht\_re10\_kr10}  & 10  & 10  & Mild convection \\
\texttt{cht\_re50\_kr100} & 50  & 100 & Strong convection, high contrast \\
\texttt{cht\_re100\_kr10} & 100 & 10  & Very strong convection \\
\bottomrule
\end{tabular}
\end{center}

\noindent
The non-symmetric operator means $M = \alpha A^\top\!A + I$ differs
from the symmetric case $\alpha A^2 + I$. Reference eigenmodes
are computed from the $\Rey = 0$, $\kappa_r = 1$ baseline and
transferred to all configurations; for $(\Rey, \kappa_r) \neq
(0, 1)$ this is a reference-mismatch test (see
Section~\ref{ssec:iter_benchmarks} and Appendix~\ref{sec:cht}).

\subsection{Space-time CHT}
\label{app:bench_spacetime}

The parabolic extension adds a time derivative:
\[
  \frac{\partial y}{\partial t}
  - \nabla\!\cdot\!(\kappa\nabla y) + \Rey\,\mathbf{v}\cdot\nabla y
  = u
  \quad\text{in } \Omega \times (0,T],
\]
with $y(x,0) = 0$ and $T = 1$.  The desired state features a
temporal ramp:
\[
  y_d(x,t;\theta) = (1 - e^{-t/\tau})\sum_{k=1}^4
  \exp\!\Bigl(-\frac{\|x - c_k(\theta)\|^2}{2\sigma^2}\Bigr),
  \quad \tau = 0.2,\;\sigma = 0.12.
\]
Time is discretized with implicit Euler into $n_t$ steps.
The all-at-once system uses a block-bidiagonal forward operator
with $N = n_t \times n^3$ unknowns; the matrix-vector product
$Mv = \alpha A^\top\!(Av) + v$ is implemented via a
forward--backward time sweep ($2n_t$ spatial matvecs).
Grids tested: $(n,n_t) \in \{(15,10),\,(20,10),\,(25,20),\,
(30,20)\}$, yielding DOF from 33{,}750 to 540{,}000.
All $\Rey \times \kappa_r$ combinations from the steady CHT are
tested, giving 48~total configurations.
GPU experiments run on an NVIDIA H100 (80\,GB);
CPU experiments run on Azure Standard\_E64\_v3 nodes
with PETSc under MPI (8~ranks).

The full space--time iteration and wall-time tables are in
Appendix~\ref{app:scaling_data}.

\subsection{Mesh sizes and active-set details}
\label{app:bench_mesh}

All problems use the primal active-set iteration of
Algorithm~\ref{alg:pdas} (Appendix~\ref{app:pdas_algorithm};
see the discussion there for the relation to full primal--dual
active-set methods) with cold start.
The active-set iteration converges in 2--4 outer iterations for all
configurations.
The state constraint $\psi$ is calibrated per-configuration on the
mid-parameter unconstrained solution $y_{\mathrm{unc}}(\theta_{\mathrm{mid}})$,
with three protocol variants in the actual code base:
(i) for the 2D suite, $\psi$ is a percentile of
$y_{\mathrm{unc}}(\theta_{\mathrm{mid}})$ chosen so that the active
fraction at the midpoint equals the target by construction;
(ii) for the 3D online suite, $\psi$ is found by a 3-iteration
bisection (loose convergence; the midpoint active fraction is
${\approx}\,17$--$19\%$ rather than exactly~$20\%$);
(iii) for the 3d\_obstacle 50$^3$ rerun, $\psi$ is found by a
40-iteration bisection with acceptance window $[0.15, 0.25]$.
All three target the \emph{midpoint} active fraction at
${\sim}20\%$; instance-level active fractions vary as
$y_{\mathrm{unc}}(\theta)$ varies over the parametric sweep, and
the per-configuration ranges are reported in
Table~\ref{tab:psi_calibration}. The space--time CHT suite uses
a different protocol noted below.
Each configuration is tested with 30~parametric instances
(uniformly spaced in the parameter) unless noted otherwise.

\begin{table}[htbp]
\centering\small
\caption{State-constraint calibration values $\psi$ and resulting
  active fractions at a representative grid per configuration.
  $\langle a \rangle$ denotes the mean active fraction across the
  30~parametric instances; the range column reports
  $[\min, \max]$ across instances. Within-instance variation
  reflects the parametric sweep of $y_{\mathrm{unc}}(\theta)$
  (smooth and monotonic for 2D Laplacian configurations and for
  the 3D rotating/translating sources; symmetric V-shape for
  $\gamma$-sweep thermal configurations). The space--time CHT row
  reports the protocol divergence noted in the body text.}
\label{tab:psi_calibration}
\begin{tabular}{@{} l l r l l @{}}
\toprule
Config & Grid & $\psi$ & $\langle a \rangle$ & Range $[\min,\max]$ \\
\midrule
\multicolumn{5}{@{}l}{\emph{2D suite (percentile-based, exact at midpoint)}} \\
2d\_asym       & $500^2$ & $0.299$
              & $0.21$ & $[0.20, 0.22]$ \\
2d\_sym        & $500^2$ & $0.512$
              & $0.19$ & $[0.07, 0.29]$ \\
2d\_nonsep     & $500^2$ & $0.512$
              & $0.19$ & $[0.07, 0.29]$ \\
thermal\_ra10  & $500^2$ & $0.221$
              & $0.21$ & $[0.21, 0.21]$ \\
thermal\_ra100 & $500^2$ & $1.90\!\times\!10^{-2}$
              & $0.14$ & $[0.05, 0.20]$ \\
thermal\_ra500 & $500^2$ & $8.77\!\times\!10^{-4}$
              & $0.13$ & $[0.03, 0.20]$ \\
thermal\_ra1000 & $500^2$ & $2.21\!\times\!10^{-4}$
              & $0.13$ & $[0.03, 0.20]$ \\
\midrule
\multicolumn{5}{@{}l}{\emph{3D suite (3-iter bisection at midpoint, loose)}} \\
3d\_thermal      & $30^3$ & $0.268$
                 & $0.22$ & $\approx [0.21, 0.22]$\textsuperscript{a} \\
3d\_contam       & $30^3$ & $0.249$
                 & $0.11$ & $[0.03, 0.18]$\textsuperscript{b} \\
3d\_obstacle     & $30^3$ & $0.378$
                 & $0.18$ & $[0.18, 0.19]$ \\
3d\_obstacle     & $50^3$ & $0.345$
                 & $0.21$ & $[0.20, 0.21]$\textsuperscript{c} \\
cht\_re0\_kr1    & $30^3$ & $9.31\!\times\!10^{-2}$
                 & $0.23$ & $\approx [0.23, 0.24]$ \\
cht\_re10\_kr10  & $30^3$ & $3.27\!\times\!10^{-2}$
                 & $0.17$ & $\approx [0.17, 0.18]$ \\
cht\_re50\_kr100 & $30^3$ & $1.35\!\times\!10^{-2}$
                 & $0.17$ & $\approx [0.17, 0.18]$ \\
cht\_re100\_kr10 & $30^3$ & $6.43\!\times\!10^{-3}$
                 & $0.18$ & $\approx [0.17, 0.18]$ \\
\midrule
\multicolumn{5}{@{}l}{\emph{Space--time CHT (fixed $\psi = 0.7\,y_{\max}$, not target-active calibrated)}} \\
cht\_space\_time & $25^3{\times}20$
                 & $3.4{-}87\!\times\!10^{-3}$\textsuperscript{d}
                 & $0.02$--$0.03$ & $[0.02, 0.03]$ \\
\bottomrule
\end{tabular}
\\[2pt]
\footnotesize
\textsuperscript{a}3d\_thermal mean active fraction is consistent
($\approx 0.21$--$0.22$) across grids $\geq 25^3$; the $20^3$ grid
gives $0.15$, which we attribute to discretization aliasing of the
8-vertex source pattern.
\textsuperscript{b}3d\_contam range is wider because the contaminant
source translates with the parameter; bisection meets the target
($\approx 0.20$) at the midpoint, but other instances see smaller
active sets as the source moves away from the constraint hot spot.
\textsuperscript{c}The $50^3$ obstacle uses the 40-iteration
bisection from the targeted rerun rather than the loose 3-iteration
3D protocol; this is why the achieved active fraction is closer to
the $20\%$ target.
\textsuperscript{d}Space--time CHT uses $\psi = 0.7\,y_{\max}$
(constant fraction of the maximum unconstrained value) rather than
a target active fraction; the resulting active fractions are
${\approx}\,2$--$3\%$, an order of magnitude lower than the 2D/3D
suites. This protocol divergence is intentional (the
constraint-set geometry on space--time grids differs qualitatively
from the steady-state cases) but is not visible in the prior
appendix prose; it is recorded here for reproducibility.
\end{table}

Grid sizes:
\begin{itemize}
  \item 2D: $100^2$, $200^2$, $300^2$, $400^2$, $500^2$
    (10K--250K interior DOF).
  \item 3D (steady): $10^3$--$50^3$ in increments of~5
    (1K--125K interior DOF).
  \item Space-time: $(n,n_t) \in \{(15,10),\,(20,10),\,(25,20),\,
    (30,20)\}$ (34K--540K DOF).
\end{itemize}

\FloatBarrier  


\section{Finite difference discretization}
\label{app:discretization}
\label{sec:discretization}

All experiments use uniform Cartesian grids on a $d$-dimensional box
with mesh size $h = 1/(n+1)$, where $n$ denotes the number of interior
nodes in each spatial direction.  Finite differences are adopted because
they allow simple implementation and explicit assembly of the Schur
complement.  The spectral coherence arguments developed in
Section~\ref{sec:spectral} are not specific to this choice; they depend
on the algebraic spectral structure of the Schur complement rather than
on the discretization itself.  The same ideas extend naturally to other
standard discretizations, such as finite elements, although different
sparsity patterns and eigenvalue distributions may affect practical
choices such as the deflation rank and the observed conditioning.
Homogeneous Dirichlet boundary conditions, $y=0$ on $\partial\Omega$,
are enforced by restricting the unknowns to the interior grid nodes.

The negative Laplacian $-\Delta$ is discretized using the standard
second-order central-difference stencil: the 5-point stencil in 2D and
the 7-point stencil in 3D. In 2D, for an interior node $(i,j)$, we use
\begin{equation}\label{eq:fd_lap}
  (-\Delta_h y)_{ij}
  = \frac{1}{h^2}
  \bigl(
    4y_{ij} - y_{i-1,j} - y_{i+1,j} - y_{i,j-1} - y_{i,j+1}
  \bigr).
\end{equation}
This yields a sparse matrix $L_h \in \reals^{N \times N}$, where
$N = n^d$ is the number of interior unknowns. The matrix has bandwidth
$O(n^{d-1})$ and at most $2d+1$ nonzeros per row.

For the convection--diffusion operators arising in the thermal and CHT
problems, the advective term $\vv \cdot \nabla y$ is discretized with
second-order central differences. In 2D this is
\begin{equation}\label{eq:fd_conv_2d}
  (\vv \cdot \nabla_h y)_{ij}
  = v_1 \frac{y_{i+1,j} - y_{i-1,j}}{2h}
  + v_2 \frac{y_{i,j+1} - y_{i,j-1}}{2h};
\end{equation}
the 3D extension used for the CHT benchmarks adds the $x_3$
component analogously,
\begin{equation}\label{eq:fd_conv_3d}
  (\vv \cdot \nabla_h y)_{ijk}
  = v_1 \frac{y_{i+1,j,k} - y_{i-1,j,k}}{2h}
  + v_2 \frac{y_{i,j+1,k} - y_{i,j-1,k}}{2h}
  + v_3 \frac{y_{i,j,k+1} - y_{i,j,k-1}}{2h}.
\end{equation}
The discrete state operator $A_h = L_h + C_h$ is therefore generally
non-symmetric.
In the tested regimes the central-difference discretization remained
stable for the benchmark purpose; more advection-dominated regimes
would require upwind or stabilized discretizations, and we have not
attempted to characterize the upper end of the convection regime
beyond what the benchmark suite covers. In all cases, the Schur
complement
\[
M = \alpha A_h^\top A_h + I
\]
remains symmetric positive definite with
$\lambda_{\min}(M) \ge 1$ (Section~\ref{sec:schur}).

For CHT problems with piecewise-constant conductivity $\kappa(x)$,
taking values $\kappa=\kappa_r$ in the solid and $\kappa=1$ in the
fluid, the diffusion operator $-\dive(\kappa \nabla y)$ is discretized
using harmonic averaging at cell interfaces \citep{patankar2018numerical}:
\begin{equation}\label{eq:harmonic}
  \kappa_{i+\frac{1}{2}}
  = \frac{2\,\kappa_i\,\kappa_{i+1}}{\kappa_i + \kappa_{i+1}}.
\end{equation}
This choice enforces the correct flux continuity across the solid--fluid
interface at $x_3 = 0.5$ without requiring explicit interface tracking.

For the thermal problem with cubic reaction term $\gamma y^3$, we employ
Picard linearization. At each Picard step, the nonlinear term is
approximated by
\[
\gamma y^3 \approx \gamma y_{\mathrm{prev}}^2\, y,
\]
which leads to the linearized operator
\[
A_{\mathrm{lin}}
= L_h + C_h + \gamma\,\diag(y_{\mathrm{prev}}^2).
\]
The Picard iteration is terminated once
\[
\frac{\|y^{(k+1)} - y^{(k)}\|}{\|y^{(k+1)}\|} < 10^{-8},
\]
which is sufficiently strict to ensure that the error due to incomplete
nonlinear convergence is negligible compared with the discretization
error and the linear-solver error in all reported experiments. We found
that tighter tolerances did not produce any visible change in the
quantities of interest or in the solver statistics, whereas looser
tolerances could leave a non-negligible fixed-point error.

\section{Implementation details}
\label{app:implementation}

This appendix collects the algorithmic and implementation details
omitted from the main text for brevity.

\subsection{Active-set identification used in the benchmark kernel}
\label{app:pdas_algorithm}

Algorithm~\ref{alg:pdas} gives the active-set iteration used
throughout. We note that this is a \emph{primal} active-set
identification: the active set is updated from the primal violation
$y_i > \psi_i - \tau$ alone, without an explicit dual multiplier
update or primal--dual complementarity test. A full primal--dual
active-set method (i.e.\ PDAS in the strict sense
of~\citep{hintermueller2003primal, bergounioux1999augmented}) would also
maintain a Lagrange multiplier $\lambda$ for the inequality
constraint $y \le \psi$ and update the active set from the merged
test $\lambda_i + c\,(y_i - \psi_i) > 0$. This paper benchmarks the
inner inactive-set linear solve under the active set identified by
Algorithm~\ref{alg:pdas}; the deflation framework itself is
agnostic to whether identification is primal or primal--dual.
The key design choices of the algorithm as benchmarked are:
\begin{itemize}
  \item \textbf{Cold start:} $y^{(0)} = 0$, $\calA^{(0)} = \emptyset$.
    Warm-starting from a previous parametric instance was tested but
    provides negligible benefit (Remark~\ref{rem:warmstart}).
  \item \textbf{Convergence criterion:} active set stabilization,
    i.e., $\calA^{(k+1)} = \calA^{(k)}$.  This is the natural
    semismooth Newton criterion: once the active set is correct,
    the linear solve yields the exact solution.
  \item \textbf{Active set prediction:} $\calA^{(k+1)} =
    \{i : y_i^{(k)} > \psi_i - \tau\}$ with $\tau = 10^{-10}$.
    The tolerance $\tau$ avoids cycling due to floating-point
    perturbation at the active-inactive boundary.
\end{itemize}

\begin{algorithm}[htbp]
\caption{Primal active set iteration for the Schur complement system}
\label{alg:pdas}
\begin{algorithmic}[1]
\Require $M \in \reals^{N\times N}$, $y_d \in \reals^N$, $\psi \in \reals^N$,
         tolerance $\tau > 0$, maximum iterations $k_{\max}$
\Ensure State iterate $y$ and active set $\calA$
\State $y \gets 0$, $\calA \gets \emptyset$
\For{$k = 0,1,\dots,k_{\max}$}
    \State $\calI \gets \{1,\dots,N\}\setminus \calA$
    \State $y_{\calA} \gets \psi_{\calA}$
    \State $b \gets (y_d)_{\calI} - M_{\calI\calA}\psi_{\calA}$
    \State Solve $M_{\calI\calI} y_{\calI} = b$
    \State $\calA_{\mathrm{new}} \gets \{\, i \in \{1,\dots,N\} : y_i > \psi_i - \tau \,\}$
    \If{$\calA_{\mathrm{new}} = \calA$}
        \State \Return $(y,\calA)$
    \EndIf
    \State $\calA \gets \calA_{\mathrm{new}}$
\EndFor
\State \Return $(y,\calA)$
\end{algorithmic}
\end{algorithm}

For all tested configurations, the active-set iteration of
Algorithm~\ref{alg:pdas} converges in 2--4 outer iterations.  The inner solve at line~6 is the computational
bottleneck and the target of the deflation framework.

\subsection{CG stopping criterion}
\label{app:cg_stopping}

All CG solves (cold, deflated, and AMG-preconditioned) use the
relative residual norm criterion:
\begin{equation}\label{eq:cg_stop}
  \frac{\|r_k\|}{\|b\|} < \epsilon_{\mathrm{CG}},
  \qquad \epsilon_{\mathrm{CG}} = 10^{-10},
\end{equation}
where $r_k = b - M_{\calI\calI}\,x_k$ is the CG residual and
$b = (y_d)_\calI - M_{\calI\calA}\,\psi_\calA$ is the right-hand
side.  The maximum iteration count is $10^5$.  The Jacobi
(diagonal) preconditioner $D^{-1} = \mathrm{diag}(M_{\calI\calI})^{-1}$
is applied to both cold and deflated CG.

For deflated CG, the residual is measured in the
\emph{unprojected} system: convergence is declared when the original
(not deflated) residual satisfies~\eqref{eq:cg_stop}.  This ensures
that iteration counts are directly comparable across strategies.
This unprojected residual is maintained by the standard CG recurrence
($r_k \leftarrow r_{k-1} - \alpha\,M_{\calI\calI}\,p$) rather than
recomputed from $b - M_{\calI\calI}\,x_k$ at each iteration.  Because a
recurrence-propagated residual can in principle drift from the true
residual at a tolerance as tight as $\epsilon_{\mathrm{CG}} = 10^{-10}$,
we verify accuracy independently of the stopping test:
Section~\ref{ssec:accuracy} compares every deflated solve against the
sparse-direct solution and finds agreement to solver tolerance
(Table~\ref{tab:accuracy_summary}), which would expose any such drift.

\subsection{Matrix assembly and active set restriction}
\label{app:matrix_assembly}

The Schur complement $M = \alpha A^\top A + I$ is assembled
\emph{explicitly} as a sparse CSR matrix.  The product $A^\top A$
is computed via sparse matrix multiplication and stored; at
$N = 125\mathrm{K}$ (3D, $50^3$), $M$ has approximately
$25N$ nonzeros (bandwidth $\sim n^2$ due to the squared stencil).

\textbf{Active set restriction.}
$M_{\calI\calI}$ is extracted by submatrix indexing on the CSR
structure: given the inactive index set $\calI$, form
$M[\calI, \calI]$ as a new sparse matrix.  The right-hand side
coupling $M_{\calI\calA}\,\psi_\calA$ is computed similarly.
No re-assembly is performed; the full-domain $M$ is built once
per problem configuration, and each active-set iteration
re-extracts the relevant submatrix.

\textbf{Why explicit assembly.}
Matrix-free (matvec-only) application of $M$ would require
two sparse matvecs per CG iteration ($v \mapsto A^\top(Av) + v$).
Explicit assembly doubles memory but halves the per-iteration
matvec count, and enables direct extraction of diagonal entries
for the Jacobi preconditioner without a separate pass.  At the
problem sizes tested ($\le 125\mathrm{K}$ DOF), explicit assembly
fits comfortably in GPU memory.

\subsection{GPU implementation}
\label{app:gpu_implementation}

The GPU solver is implemented in PyTorch~2.9.1 (CUDA~12.8) using
native sparse CSR tensor support.  All computations use
\textbf{float64} (double precision) to match the CPU reference
and ensure identical convergence behavior.

\textbf{Data transfer.}
The full-domain matrix $M$, desired state $y_d$, and constraint
bound $\psi$ are assembled on the host as sparse matrices and transferred
to GPU as PyTorch sparse CSR tensors at the start of each
problem configuration.  Per-instance costs (active set restriction,
RHS assembly) are performed on GPU.

\textbf{Sparse matvec.}
The CG inner product and sparse matrix--vector product
$M_{\calI\calI}\,v$ are computed on sparse CSR tensors, which map to
cuSPARSE routines internally.

\textbf{Deflation coarse solve.}
The $k \times k$ Gram matrix $E = \ZZ^\top M_{\calI\calI}\,\ZZ$
is formed by $k$ sparse matvecs and assembled as a dense matrix.
Since $E$ is SPD whenever the basis passes the
$\tau_{\mathrm{safe}}$ guard (Remark~\ref{rem:tau_safe}), the
coarse solve uses a dense Cholesky factorization of $E$
(followed by triangular solves;
$O(k^3)$ factorization for $k \le 500$, with the factorization
reused across the projector applications within one A-DEF2 solve)
rather than forming an explicit inverse.

\textbf{Hardware.}
All steady-state wall-time results use NVIDIA H200 (80\,GB HBM3e) for
GPU runs and dual-socket Intel Xeon Platinum 8480+ nodes for the CPU
baselines, which are run with PETSc~\citep{balay1997petsc} under MPI
(8~ranks).  The space--time experiments
(Appendix~\ref{app:bench_spacetime}) use a different cluster: NVIDIA
H100 (80\,GB) for GPU and Azure Standard\_E64\_v3 nodes for the CPU
baselines.

\textbf{Timing protocol.}
GPU timings include a device-synchronization barrier
before each measurement. To handle one-time GPU kernel warm-up and
CUDA library initialization overhead, a separate warm-up solve is run before
the parametric sweep begins; this warm-up is not counted as one of
the 30 benchmark instances. All reported per-instance averages
therefore cover the full set of 30 parametric instances per
configuration; only the warm-up solve is excluded. The same
convention applies to CPU baselines, with the warm-up serving to
bring caches and library state into a steady regime before
measurement.


\section{CPU vs GPU CG scaling}
\label{app:cpu_gpu}

To isolate the hardware effect from the algorithmic effect, we run
the \emph{identical} A-DEF2 deflated CG solver on both CPU
and GPU (NVIDIA H200), with the same coarse-grid ($c = 2$)
eigenmodes, tolerance ($10^{-10}$), and warm-start chains.

\subsection{Iteration counts are identical}

CG iteration counts match within $\pm 1$ across CPU and GPU for all
methods, grids, and configurations.  The GPU advantage is purely
per-iteration throughput, not algorithmic.

\subsection{Per-solve GPU speedup}

The ratio of CPU time to GPU time for identical CG solves:

\begin{center}
\small
\begin{tabular}{@{} llrr @{}}
\toprule
Config & Grid & Cold CG & Defl $r\!=\!100$ \\
\midrule
2d\_asym       & $200^2$ & $3.4\times$  & $9.9\times$ \\
2d\_asym       & $500^2$ & $19.6\times$ & $\mathbf{64.8\times}$ \\
thermal\_ra100 & $500^2$ & $21.4\times$ & $\mathbf{70.1\times}$ \\
thermal\_ra1000 & $500^2$ & $20.9\times$ & $\mathbf{70.6\times}$ \\
\addlinespace
3d\_thermal    & $40^3$  & $7.1\times$  & $17.6\times$ \\
3d\_contam     & $40^3$  & $7.4\times$  & $\mathbf{19.9\times}$ \\
cht\_re50\_kr100 & $40^3$  & $7.3\times$  & $\mathbf{19.3\times}$ \\
\bottomrule
\end{tabular}
\end{center}

The GPU-vs-CPU wall-time advantage in this deployment grows with
problem size and is larger for deflated CG than cold CG\@.  The
deflation coarse solve is dominated by matrix-vector products,
which are GPU-friendly.  At $500^2$ (250K~DOF), the same deflated
CG algorithm runs 65--71$\times$ faster on GPU than on CPU; this
is a hardware-throughput comparison of an identical algorithm,
not an algorithmic advantage.

\subsection{Empirical wall-time fits}

The table reports empirical fits $t \sim N^p$ over the tested
grid range; these are finite-size deployment fits, not asymptotic
algorithmic complexities. These are a \emph{distinct fitting set}
from the per-configuration ranges in Table~\ref{tab:scaling}: the
values here are single representative fits from the
identical-algorithm CPU-vs-GPU ablation, which uses the coarse-grid
($c = 2$) eigenmode basis, whereas Table~\ref{tab:scaling} reports
the min--max across configurations for the main runs (fine-grid
basis). Single values here may therefore fall slightly outside the
Table~\ref{tab:scaling} ranges.

\begin{center}
\small
\begin{tabular}{@{} l ll @{}}
\toprule
Method & 2D ($p$) & 3D ($p$) \\
\midrule
Direct (CPU)            & $t \sim N^{1.43}$ & $t \sim N^{2.12}$ \\
Cold CG (CPU)           & $t \sim N^{2.10}$ & $t \sim N^{1.73}$ \\
Cold CG (GPU)           & $t \sim N^{0.97}$ & $t \sim N^{0.52}$ \\
Defl $r\!=\!100$ (CPU)  & $t \sim N^{2.14}$ & $t \sim N^{1.50}$ \\
Defl $r\!=\!100$ (GPU)  & $t \sim N^{1.16}$ & $t \sim N^{0.49}$ \\
\bottomrule
\end{tabular}
\end{center}

Key observations:
\begin{itemize}
  \item \textbf{CPU deflation has comparable fitted exponent to
    cold CPU} in 2D ($t \sim N^{2.1}$ for both): the overhead per
    deflated iteration offsets the iteration reduction at CPU
    speeds, and CPU deflated CG never beats CPU direct in 2D
    over the tested range.

  \item \textbf{The 2D/3D gap dominates the fits.}
    In 3D, even cold CG on GPU ($t \sim N^{0.5}$) is much
    faster than CPU direct ($t \sim N^{2.1}$). In 2D, CPU direct
    ($t \sim N^{1.4}$) is close to GPU CG ($t \sim N^{1.0}$), so
    deflation's iteration reduction is what opens the gap on the
    tested grids.

  \item \textbf{GPU changes both the constant and the empirical
    exponent} over the grid sizes tested: sparse matvec
    parallelism lowers the fitted wall-time exponent substantially
    (cold CG drops from $t \sim N^{2.1}$ on CPU to $t \sim N^{1.0}$
    on GPU in 2D). Iteration counts are identical on both
    platforms; only the wall-time fits differ. We do not interpret
    these exponents as asymptotic complexities and do not
    extrapolate them beyond the largest grid reported here.
\end{itemize}

\subsection{Warm CG = Cold CG}

Warm start (previous solution as initial guess) provides
\textbf{zero} iteration reduction across all configurations and
grids.  The active set changes discontinuously between
instances, producing a restriction that makes the previous solution
a poor initial guess for the new restricted system.

\section{Negative result: randomized eigensolve}
\label{app:randomized_eig}

The ARPACK precompute dominates amortized wall-time, scaling
empirically as $N^{1.46}$ over the grids tested in this section
(the analytical-comparison fit in Appendix~\ref{app:analytical}
gives $N^{1.42}$ over its grid set; both round to $N^{1.4\text{--}1.5}$).
The Halko--Martinsson--Tropp (HMT) randomized
eigendecomposition~\cite{halko2011finding} could reduce this via
random probing $\Omega \in \reals^{n \times (k+p)}$, shift-invert
$Y = M_{\mathrm{ref}}^{-1} \Omega$ (via sparse~LU), power
iteration ($q = 2$), and Rayleigh--Ritz extraction.

\subsection{Timing: randomized scales worse}

\begin{center}
\small
\begin{tabular}{@{} l r r r r r r @{}}
\toprule
Grid & DOF & ARPACK & rand ($p\!=\!0$) & Speedup
     & rand ($p\!=\!50$) & Speedup \\
\midrule
$15^3$ & 3.4K  & 6.6\,s   & 2.9\,s   & $2.3\times$
       & 3.2\,s & $2.1\times$ \\
$20^3$ & 8K    & 22.5\,s  & 13.3\,s  & $1.7\times$
       & 16.9\,s & $1.3\times$ \\
$25^3$ & 15.6K & 56.9\,s  & 50.0\,s  & $1.1\times$
       & 55.3\,s & $1.0\times$ \\
$40^3$ & 64K   & 479\,s   & 507\,s   & $\mathbf{0.95\times}$
       & ---     & --- \\
\bottomrule
\end{tabular}
\end{center}

Scaling exponents: ARPACK $O(N^{1.46})$ vs randomized
$O(N^{1.76})$.  The sparse~LU factorization within the
shift-invert randomized method grows faster than ARPACK's Lanczos
iteration.  The crossover where randomized becomes \emph{slower}
occurs between $30^3$ and $40^3$.

\subsection{Subspace quality}

Mean principal angles between randomized and exact eigenmodes are
stable at 10--12$^\circ$ across all grids.  However, maximum
angles reach ${\sim}80$--$86^\circ$, indicating some modes are
nearly orthogonal to exact eigenvectors.  Oversampling
$p = 0 \to 50$ improves mean angle by only ${\sim}1^\circ$.

\subsection{Iteration penalty}

Mean additional CG iterations (above exact eigenmodes), averaged
across 7~problems:

\begin{center}
\small
\begin{tabular}{@{} l rrr @{}}
\toprule
$r$ & $p\!=\!0$ & $p\!=\!20$ & $p\!=\!50$ \\
\midrule
20--40  & 3--4\% & 3--4\% & 3\% \\
100     & 11.4\% & 10.9\% & 10.6\% \\
200     & 24.3\% & 23.2\% & 22.4\% \\
500     & 52.0\% & 50.4\% & 47.5\% \\
\bottomrule
\end{tabular}
\end{center}

\subsection{Implicit regularization (insufficient compensation)}

Randomized modes act as a spectral filter: approximate eigenvectors
preferentially drop the high-frequency components that create
ill-conditioning.  At $r = 500$,
$\mathrm{cond}(\ZZ^\top M_{\calI\calI} \ZZ)$ drops from
$15.8 \times 10^6$ (exact) to $3{,}600$ (randomized) --- zero
divergence vs occasional divergence for exact.  But the lower
conditioning does not compensate for the 52\% iteration penalty.

\subsection{Verdict}

\textbf{In the shift-invert randomized implementation tested here,
randomized eigensolves are not competitive} for this problem class:
\begin{enumerate}
  \item Wall-time scaling: measured $t \sim N^{1.76}$ vs ARPACK
    $t \sim N^{1.46}$ over the tested grid range; the gap widens
    with grid size.
  \item Iteration penalty: 11--52\,\% more CG iterations.
  \item Oversampling does not help: $p = 50$ vs $p = 0$ improves
    $<\!1$\% at $r = 100$.
  \item Only wins at small grids ($15^3$--$20^3$) where total
    eigensolve time is already $< 25$\,s.
  \item Better alternative: coarse-grid prolongation
    (Section~\ref{ssec:coarse_results}) achieves $c^d$ DOF
    reduction with $<\!1$\% quality loss.
\end{enumerate}
We do not extrapolate this verdict to all randomized eigensolve
schemes; alternative randomized strategies (e.g., randomized
block-Lanczos, generalized Nystr\"om) may behave differently and
were not tested here.

\section{Mode-budget allocation for basis construction}
\label{app:budget}

Given a fixed \emph{mode-count budget} $r_{\mathrm{tot}} = r_{\mathrm{eig}} + r_{\mathrm{pod}}$
for the combined deflation basis, how should one allocate between
reference eigenmodes and online POD modes? We test this on 2D
problems with 60 instances, enabling POD ranks up to 50. The
``budget'' in this section is a basis-size budget (number of
deflation vectors), not a wall-time budget; the dominant
wall-time cost in our setting is the offline eigensolve, and once
that is paid (or replaced by analytical eigenmodes,
Appendix~\ref{app:analytical}), the per-instance cost depends
mainly on $r_{\mathrm{tot}}$ rather than on the eig/POD split.

\subsection{More snapshots do not help}

Doubling the snapshot pool from 30 to 60~instances provides no
meaningful improvement.  POD-only deflation saturates at
$r = 20$--30~modes for most configurations, and conditioning
collapse occurs at $r \geq 30$ on 2d\_sym (from 53\% to 11\%
reduction) and at $r = 50$ on 2d\_nonsep (from 84\% to 51\%).
The bottleneck is $\mathrm{cond}(\ZZ^\top M_{\calI\calI} \ZZ)$,
not data volume.

\subsection{Eigenmodes dominate the mode budget}

At $300^2$, eig(100) alone achieves 79--93\,\% reduction. Online
POD adds 5--10\% without requiring an additional eigensolve or
PDE solve: the incremental SVD runs on FOM snapshots already
collected during the parametric sweep. The SVD update and the
storage of the accumulated snapshots are not free, but they are
small compared to the eigensolve and the per-instance CG, so we
treat the marginal cost of the POD addition as negligible at the
scales reported here. Representative mode-budget allocation for
2d\_asym (each row fixes the total mode count
$r_{\mathrm{tot}} = r_{\mathrm{eig}} + r_{\mathrm{pod}}$):

\begin{center}
\small
\begin{tabular}{@{} l l r @{}}
\toprule
Mode budget $r_{\mathrm{tot}}$ & Best split & Reduction \\
\midrule
30  & eig(10) + pod(20)  & 76\,\% \\
60  & eig(40) + pod(20)  & 84\,\% \\
120 & eig(100) + pod(20) & 90\,\% \\
550 & eig(500) + pod(50) & 97\,\% \\
\bottomrule
\end{tabular}
\end{center}

Combined always outperforms either component alone. The practical
rule: invest heavily in eigenmodes (100--200), add all available
POD modes (20--30 is sufficient). The marginal value of each
additional eigenmode exceeds each additional POD mode.

\subsection{Per-config pattern}

Higher Rayleigh number (stiffer PDE) favors eigenmodes:
at $r = 30$, eigenmodes outperform POD by $+9$ to $+43$\% on
thermal configurations.  Pure diffusion problems (2d\_asym,
2d\_nonsep) slightly favor POD per mode because parameterized
solutions explore a data manifold that reference eigenmodes may
miss.  But thermal configurations with convection + reaction have a
structured slow eigenspace that pre-computed eigenmodes capture
better than data-driven POD\@.

\section{Negative result: compressed sensing diagnostic}
\label{app:compressed_sensing}

\paragraph{Scope.}
This appendix asks whether the \emph{full-state parametric update}
$\Delta y = y_{i+1} - y_i$ between consecutive parameter instances
is sparse enough in the eigenmode basis for compressed sensing
(CS) to replace per-instance CG. The analysis is on full-state
updates, not on the restricted inactive-set linear system that the
deflation framework actually accelerates in the main text. The
inactive-set restriction couples the solve to a per-instance
active-set indicator that does not have an obvious sparse
representation in the same eigenmode basis, so the sparsity story
below is one mechanism for using parametric structure, not a
substitute for the inactive-set deflation benchmarks.

\paragraph{Hypothesis.}
If the full-state update $\Delta y$ were $s$-sparse in the leading
eigenmode basis with $s \ll r$, classical CS recovery results
(\citep{candes2006stable}) suggest that $m = O(s \log(k/s))$ random
measurements can in principle recover $\Delta y$. In favorable
implementations this could replace per-instance CG at sub-linear
recovery cost. We do not attempt a sharp asymptotic cost
comparison in $n$: the cost of CS depends strongly on the sensing
operator (random projection, subsampled Fourier, etc.), the
recovery algorithm (OMP, $\ell_1$-min, AMP), and the implementation
of measurements; the practical constants matter more than the
formal scaling. We test the hypothesis empirically with $r = 200$
eigenmodes and OMP (Orthogonal Matching Pursuit) recovery.

\subsection{Core hypothesis fails: \texorpdfstring{$\Delta y$}{Delta y} is not sparse}

\begin{center}
\small
\begin{tabular}{@{} l rr l @{}}
\toprule
Problem & $y$: $s_{99}$ & $\Delta y$: $s_{99}$
  & Sparser? \\
\midrule
3d\_thermal      & 15  & 37   & No ($2.5\times$ denser) \\
3d\_contam       & 59  & 51   & Slightly \\
cht\_re0\_kr1    & 44  & 55   & No \\
cht\_re10\_kr10  & 106 & 122  & No \\
cht\_re50\_kr100 & 111 & 119  & No \\
cht\_re100\_kr10 & 96  & 114  & No \\
\bottomrule
\end{tabular}
\end{center}

\noindent
($s_{99}$ = number of eigenmode coefficients capturing 99\% of
energy, $30^3$ grid.)  Active set changes between consecutive
instances are non-smooth perturbations that excite high-frequency
eigenmodes.

\subsection{Basis is incomplete}

The 200-mode eigenspace captures only 60--87\% of the solution
energy for CHT problems --- an irreducible error floor that no
CS algorithm can overcome.

\subsection{OMP recovery: no configuration achieves \texorpdfstring{$< 10$\%}{< 10 percent} error}

Best reconstruction error at $30^3$ ($m = 200$, optimal
$s_{\max}$):

\begin{center}
\small
\begin{tabular}{@{} l rr @{}}
\toprule
Problem & $\|y_{\text{rec}} - y_{\text{true}}\| / \|y_{\text{true}}\|$
  & Verdict \\
\midrule
3d\_contam       & 0.124 & Marginal \\
3d\_thermal      & 0.127 & Marginal \\
cht\_re0\_kr1    & 0.318 & Poor \\
cht\_re100\_kr10 & 0.503 & Failure \\
cht\_re50\_kr100 & 0.548 & Failure \\
\bottomrule
\end{tabular}
\end{center}

The best results require $m = 200$ measurements in a $r = 200$
basis --- no compression at all.

\subsection{CS is effective only in low-rank regimes}

\begin{center}
\small
\begin{tabular}{@{} l rrr @{}}
\toprule
Problem & $|\Delta y| / |y|$ & CS quality & Warm CG iters \\
\midrule
3d\_thermal      & 0.02\% & Best  & 1--2 (trivial) \\
3d\_contam       & 7.2\%  & OK    & ${\sim}10$ \\
cht\_re50\_kr100 & 3.4\%  & Worst & ${\sim}1{,}300$ \\
\bottomrule
\end{tabular}
\end{center}

The sparsest updates occur where the solution barely changes
between instances ($|\Delta y|/|y| = 0.02$\,\%), making CG trivial
regardless of method. The hard instances where CG is expensive
(${\sim}1{,}300$ iterations) are precisely where $\Delta y$ is
least sparse. (As stated in the scope paragraph at the top of this
appendix, this analysis is on the full-state update; the
inactive-set linear system benchmarked in the main text is a
different object.)

\subsection{Verdict}

\textbf{Compressed sensing is not viable} for replacing CG in this
problem class:
\begin{enumerate}
  \item $\Delta y$ is not sparse: 37--122 modes for 99\% energy.
  \item 200 modes capture only 60--87\% of energy.
  \item Best results require $m \approx k$ (no compression).
  \item Recovery degrades with grid refinement at fixed~$r$.
  \item Where CS works, CG is already trivially fast.
\end{enumerate}


\section{Rayleigh--Ritz reselection: regime-dependent guidance}
\label{app:ritz}

Rayleigh--Ritz reselection --- projecting pre-computed eigenmodes
onto the per-instance $M_{\calI\calI}$ to select the optimal
$r$-dimensional subspace --- has a \emph{regime-dependent} character
that resists a single recommendation.

\subsection{Where Ritz is a conservative rescue: fragile 2D high-rank regime}

Raw eigenmode deflation can encounter conditioning failures at
$r \geq 100$ in the fragile 2D high-rank regime --- most notably
2d\_asym, where one of 30~instances falls back at $r = 100$ and all
fall back at $r = 200$.  The other 2D Laplacian cases (2d\_sym,
2d\_nonsep) are more tolerant but still degrade at high~$r$.
Ritz reselection rescues high-$r$ deflation by projecting onto the
current operator, reordering modes and discarding those that
contribute to ill-conditioning.  Ritz(500) achieves 92--98\%
iteration reduction where raw eig(500) diverges.

\subsection{Ritz degrades performance in 3D convection-dominated problems}

For non-symmetric operators (CHT with $\Rey \geq 10$), Ritz at
moderate~$r$ \emph{destroys} deflation
quality: $-23$ to $-58$\% at $r \leq 100$.
The eigenmodes were computed from a symmetric reference
$M_{\mathrm{ref}}$; the Ritz projection onto $M_{\calI\calI}$
(which is SPD but derived from the non-symmetric state operator~$A$)
selects modes that are optimal in the Galerkin sense but
counterproductive for deflation.

\begin{remark}[Pool-size identity]
At $r =$ pool size, Ritz reselection is the identity: the
projection spans the same subspace as the original eigenmodes.
This is counterintuitive --- more modes make Ritz \emph{safer},
not more dangerous --- because at full pool size there is no room
to choose poorly.  The damage is concentrated at moderate~$r$
(say $r \approx 100$ out of 500~pool modes), where the
reselection has freedom to pick a catastrophic subspace.
\end{remark}

\subsection{Where Ritz is counterproductive: GPU wall-time}

Even where Ritz improves iteration counts (3D symmetric problems:
$+3$ to $+5$\%), the dense eigenvalue problem on the
$r \times r$ projected matrix costs $\sim$3\,seconds per instance
at $r = 500$.  This exceeds the GPU CG solve time
(0.07--0.11\,s per instance at $40^3$) by ${\sim}40\times$,
making the per-instance wall-time dominated by the Ritz overhead
rather than the CG solve.

\subsection{Summary}

\begin{table}[htbp]
\centering
\caption{Regime-dependent guidance for Rayleigh--Ritz reselection.}
\label{tab:ritz_guidance}
\small
\begin{tabular}{@{} llp{5cm} @{}}
\toprule
Regime & Ritz effect & Recommendation \\
\midrule
2D Laplacian, $r \geq 100$
  & Conservative rescue (esp.\ 2d\_asym)
  & Prefer Ritz for robust high-$r$ use \\
3D symmetric, all~$r$
  & Marginal ($+3$--$5$\%)
  & Skip for wall-time \\
3D convection, $r \leq 100$\textsuperscript{*}
  & Catastrophic ($-23$ to $-58$\%)
  & \textbf{Never use} \\
3D or thermal, $r \leq 100$
  & Unnecessary
  & Skip (raw eig is typically safe) \\
\bottomrule
\end{tabular}
\\[2pt]
\footnotesize\textsuperscript{*}The catastrophic regime is
moderate $r$. The Pool-size identity remark notes that at
$r =$ pool size (here $r = 500$ of a 500-mode pool) Ritz
reselection becomes the identity, so the failure regime is
``moderate $r$ relative to the pool size,'' not literally all
$r \le 500$. We pick $r \le 100$ as the regime supported by our
direct measurements.
\end{table}

Ritz should be presented as a \emph{conditioning rescue for
2D high-$r$ deflation}, not a general-purpose enhancement.
The 3D convection failure is a genuine limitation.

\subsection{What online condition monitoring catches, and what it does not}
\label{app:ritz_diagnostic}

The regime dependence in Table~\ref{tab:ritz_guidance} raises a
practical question: at runtime, when should the solver disable
Ritz reselection? The two-level condition monitoring of
Remarks~\ref{rem:tau_safe} and~\ref{rem:tau_cond} provides part of
the answer, but not all of it; we are explicit about the scope.

\paragraph{What condition monitoring catches.}
$\tau_{\mathrm{safe}} = 10^4$ during basis construction prevents
appending modes that would destabilize the trial coarse Gram matrix
$E_{\mathrm{trial}} = \ZZ_{\mathrm{trial}}^\top M_{\calI\calI}\ZZ_{\mathrm{trial}}$.
$\tau_{\mathrm{cond}} = 10^{10}$ at solve time triggers fallback
to undeflated CG if the deployed coarse Gram matrix is
ill-conditioned despite the construction guard. Together these
catch \emph{unsafe} bases --- bases where the coarse solve itself
becomes numerically unreliable. On 2D Laplacian at $r \ge 100$,
raw eigenmodes can exceed $\tau_{\mathrm{cond}}$, and Ritz
reselection (or QR-combined deflation) keeps the runtime
conditioning below threshold; the monitor catches this case.

\paragraph{What condition monitoring does \emph{not} catch.}
A Ritz-reselected basis can be \emph{well conditioned} (passing
both $\tau_{\mathrm{safe}}$ and $\tau_{\mathrm{cond}}$) and still
give worse iteration counts than raw eigenmodes, because the
projected eigenproblem
$\widetilde{\ZZ}^\top M_{\calI\calI}\widetilde{\ZZ}\,c = \mu c$
selects a Galerkin-optimal subspace that is not deflation-optimal
when $M_{\mathrm{ref}}$ and $M_{\calI\calI}$ are spectrally
incoherent (the 3D convection-dominated case). The condition
monitor cannot detect this regime; the iteration count itself
would have to be the signal. We therefore do \emph{not} rely on
$\tau_{\mathrm{safe}}$ or $\tau_{\mathrm{cond}}$ to disable Ritz
in 3D convection cases. Instead, the policy decision to skip Ritz
in those regimes is based on the empirical regime study summarized
in Table~\ref{tab:ritz_guidance}, applied as a static configuration
choice.

\paragraph{Threshold robustness.}
The threshold $\tau_{\mathrm{safe}} = 10^4$ is conservative; values
up to $10^6$ work in practice but provide less safety margin. The
a priori bound from Corollary~\ref{cor:kstar} provides an initial
estimate of the safe rank, while the online monitor provides the
definitive per-instance check on conditioning (but not on
iteration-count optimality).

\section{Analytical eigenmodes for tensor-product grids}
\label{app:analytical}

When $M_{\mathrm{ref}} = \alpha L^2 + I$ and the grid is a uniform
Cartesian product with $n$ interior nodes per dimension, the
eigenmodes are available in closed form as products of 1D sine
vectors (Section~\ref{sec:analytical}).  This appendix documents
the verification and full comparison with the iterative ARPACK
eigensolve and coarse-grid prolongation.

\subsection{Construction cost}

Table~\ref{tab:app_analytical_cost} reports the precompute time for
all three eigenmode sources across 8~grid sizes in 3D ($r = 500$
modes).

\begin{table}[htbp]
\centering
\caption{Eigenmode precompute time (seconds) in 3D.
  Analytical = closed-form $\sin\otimes\sin\otimes\sin$
  construction; ARPACK = shift-invert Lanczos eigensolve;
  coarse = ARPACK on $(n/2)^3$ grid + trilinear prolongation + QR.
  Mean over 7~configurations.}
\label{tab:app_analytical_cost}
\small
\begin{tabular}{@{} l r rrr @{}}
\toprule
Grid & DOF & Analytical & Coarse ($c\!=\!2$) & Fine ARPACK \\
\midrule
$10^3$ &     1K & 0.007\,s &   0.3\,s &    1.7\,s \\
$15^3$ &   3.4K & 0.015\,s &   0.8\,s &    5.9\,s \\
$20^3$ &     8K & 0.030\,s &   2.0\,s &   19.9\,s \\
$25^3$ &  15.6K & 0.059\,s &   5.6\,s &   56.0\,s \\
$30^3$ &    27K & 0.129\,s &     10\,s &    128\,s \\
$40^3$ &    64K & 0.360\,s &     22\,s &    474\,s \\
$45^3$ &    91K & 0.522\,s &     31\,s &    820\,s \\
$50^3$ &   125K & 0.729\,s &     55\,s &  1{,}344\,s \\
\bottomrule
\end{tabular}
\end{table}

\noindent
Analytical construction wall-time follows $t \sim N^{1.02}$
(dominated by partial sort + outer products), compared with
$t \sim N^{1.42}$ for ARPACK over the grid set used here
(cf.\ $N^{1.46}$ in Appendix~\ref{app:randomized_eig} over a
different grid set). The
speedup ratio grows as ${\sim}N^{0.40}$, reaching $1{,}844\times$
at $50^3$.

\subsection{Verification}

The analytical eigenmodes are verified against ARPACK on
all 56~tasks (7~configurations $\times$ 8~grids):
\begin{itemize}
  \item \textbf{Eigenvalues:} maximum relative error
    $1.2 \times 10^{-11}$ (at $50^3$), mean ${\sim}10^{-15}$.
    Differences are attributable to finite-precision
    arithmetic in the discrete eigenvalue formula.
  \item \textbf{Principal angles:} $0.00^\circ$ at $r = 10$
    for all 56~tasks.  The subspaces spanned by the leading
    analytical and iterative eigenmodes are identical to working
    precision.
  \item \textbf{CG iterations:} mean difference 1.3~iterations
    (out of 150--500), maximum difference 24~iterations (4.7\%).
    Small differences arise from degenerate eigenvalue ordering
    within multiplicity groups (the Kronecker product structure
    creates many repeated eigenvalues $\lambda_{ijk} =
    \lambda_{ikj} = \cdots$); these affect which specific modes
    are in the top-$r$ but not the spanned subspace quality.
\end{itemize}

\subsection{Applicability}

Analytical eigenmodes are available when \emph{all} of the
following hold:
\begin{enumerate}
  \item The spatial grid is a uniform Cartesian product (same
    mesh size $h$ in all dimensions).
  \item The PDE operator $A$ in the reference
    $M_{\mathrm{ref}} = \alpha A^\top A + I$ is the standard
    Laplacian $-\Delta$ (or any operator that decomposes as a
    Kronecker sum of 1D operators with known spectra).
  \item Boundary conditions are homogeneous Dirichlet on a
    rectangular domain.
\end{enumerate}
These conditions are satisfied by all pure Laplacian benchmarks
in this paper (\texttt{2d\_asym}, \texttt{2d\_sym},
\texttt{2d\_nonsep}, \texttt{3d\_thermal}, \texttt{3d\_obstacle},
\texttt{3d\_contam}).
They are \emph{not} satisfied by the thermal CDR or CHT
benchmarks (where $A$ includes convection and heterogeneous
diffusion), nor by problems on unstructured meshes or
non-rectangular domains. For these cases, coarse-grid prolongation
(Section~\ref{ssec:coarse_results}) remains the recommended
strategy. However, the empirical performance differs between the
two non-applicability cases and merits separate treatment.

\paragraph{Picard-linearized thermal (assumption near-satisfied).}
For the 2D thermal CDR benchmarks (Section~\ref{ssec:spectral_diagnostics}
diagnostics), the Picard linearization yields a Schur complement
that stays close to the reference $\alpha L^2 + I$ across the
parametric sweep: the relative operator drift
$\varepsilon = \|M_{\mathrm{ref}} - M_m\|_2 / \|M_{\mathrm{ref}}\|_2$
is at most $4.5 \times 10^{-8}$ on thermal\_ra100. In this regime
the analytical Laplacian eigenmodes are essentially exact for the
operator that would be assembled at runtime, and their
effectiveness in Table~\ref{tab:analytical} is consistent with the
small-drift assumption.

\paragraph{CHT (reference-mismatch test).}
For the 3D CHT benchmarks the situation is qualitatively
different. The reference $M_{\mathrm{ref}} = \alpha L^2 + I$ uses
the pure Laplacian, but the per-instance operator $M_m = \alpha
A_m^\top A_m + I$ includes both convection ($\Rey \cdot \vv \cdot
\nabla$) and a heterogeneous solid--fluid conductivity ratio
$\kappa_r$, so the operator drift~$\varepsilon$ is \emph{not}
negligible for $(\Rey, \kappa_r) \neq (0, 1)$. The performance of
the analytical Laplacian eigenmodes on CHT in
Table~\ref{tab:analytical} is therefore best read as empirical
robustness under reference--current-operator mismatch (consistent
with the framing in Section~\ref{ssec:iter_benchmarks}), not as a
consequence of negligible drift. The result is that pure-Laplacian
analytical eigenmodes provide a useful deflation basis for CHT in
practice, even though the formal applicability conditions of this
appendix do not hold.


\section{Additional scaling data}
\label{app:scaling_data}

This appendix provides the full iteration count and wall-time tables
underlying the summary statistics in Section~\ref{sec:results}.
All iteration counts are averages over 30~parametric instances
unless noted otherwise.

\subsection{2D cold CG baselines across grids}
\label{app:2d_cold}

Table~\ref{tab:app_2d_cold} reports the cold Jacobi-preconditioned
CG iteration counts for all seven 2D configurations across five
grid sizes.  These baselines underlie the percentage reductions
in Table~\ref{tab:2d_iter} (main text, $300^2$ only).

\begin{table}[htbp]
\centering
\caption{Cold CG iterations (Jacobi preconditioner) for all 2D
  configurations.  Scaling exponents $q$ fitted to
  iters~$\sim n^q$ where $n$ is the grid side length.}
\label{tab:app_2d_cold}
\small
\begin{tabular}{@{} l rrrrr r @{}}
\toprule
Config & $100^2$ & $200^2$ & $300^2$ & $400^2$ & $500^2$
  & $q$ \\
\midrule
2d\_asym        & 1{,}732  & 6{,}653  & 14{,}674 & 25{,}625 & 39{,}548 & 1.94 \\
2d\_sym         & 1{,}115  & 4{,}217  & 9{,}310  & 16{,}392 & 25{,}465 & 1.94 \\
2d\_nonsep      & 3{,}485  & 13{,}265 & 29{,}092 & 50{,}770 & 78{,}116 & 1.93 \\
thermal\_ra10   & 2{,}876  & 10{,}702 & 23{,}248 & 40{,}494 & 62{,}284 & 1.91 \\
thermal\_ra100  & 1{,}731  & 6{,}460  & 14{,}042 & 24{,}304 & 37{,}227 & 1.91 \\
thermal\_ra500  & 645      & 2{,}435  & 5{,}208  & 8{,}905  & 13{,}496 & 1.89 \\
thermal\_ra1000 & 443      & 1{,}408  & 2{,}995  & 5{,}193  & 7{,}900  & 1.79 \\
\bottomrule
\end{tabular}
\end{table}

All Laplacian configurations scale as $O(n^{1.93\text{--}1.94})$,
consistent with the biharmonic-like Schur complement
$M = \alpha A^\top\! A + I$. For the 2D Laplacian discretization,
$\lambda_{\max}(A) \sim n^2$ on an $n \times n$ grid, so
$\lambda_{\max}(M) \sim \alpha n^4 + 1$ while $\lambda_{\min}(M) \ge 1$;
hence $\kappa(M) \sim n^4 = N^2$ where $N = n^2$ is the interior
DOF count. CG iteration counts therefore grow like
$\sqrt{\kappa(M)} \sim n^2 \sim N$, matching the observed
$n^{1.93\text{--}1.94}$ within fitting noise. The thermal
configurations show similar or slightly lower exponents; higher
$\Ra$ reduces the condition number via convection-induced
spectral spreading.

\subsection{2D eigenmode deflation: grid scaling at fixed \texorpdfstring{$r$}{r}}
\label{app:2d_eig_grid}

Table~\ref{tab:app_2d_eig_grid} reports the iteration reduction
for eigenmode deflation (with Rayleigh--Ritz reselection) at
$r = 20$ across all grids.  The main text reports only
$300^2$; this table confirms that \emph{reduction improves with
grid refinement} for all configurations.

\begin{table}[htbp]
\centering
\caption{Iteration reduction (\%) with eig+Ritz($r = 20$) across
  grids.  Reduction improves monotonically with grid size for all
  configurations.}
\label{tab:app_2d_eig_grid}
\small
\begin{tabular}{@{} l rrrrr @{}}
\toprule
Config & $100^2$ & $200^2$ & $300^2$ & $400^2$ & $500^2$ \\
\midrule
2d\_asym        & 50.5 & 55.2 & 58.0 & 60.1 & 61.7 \\
2d\_sym         & 25.9 & 28.1 & 30.7 & 32.5 & 34.4 \\
2d\_nonsep      & 69.1 & 71.8 & 73.5 & 74.4 & 75.5 \\
thermal\_ra10   & 66.6 & 69.4 & 70.9 & 72.5 & 73.6 \\
thermal\_ra100  & 59.6 & 62.3 & 63.6 & 64.8 & 65.7 \\
thermal\_ra500  & 50.1 & 54.3 & 55.1 & 55.8 & 56.2 \\
thermal\_ra1000 & 42.9 & 43.8 & 45.7 & 47.5 & 48.6 \\
\bottomrule
\end{tabular}
\end{table}

All configurations show monotonically increasing reduction with
grid size.  Laplacian configurations gain $+6$--$11$\% from
$100^2$ to $500^2$; thermal configurations gain $+6$--$7$\%.
The improvement reflects the leading eigenvalues becoming a
smaller fraction of the total spectrum as~$N$ grows.

\subsection{2D conditioning: eigenmode vs Rayleigh--Ritz}
\label{app:2d_conditioning}

Table~\ref{tab:app_cond_eig_ritz} demonstrates the conditioning
contrast between raw eigenmodes and Rayleigh--Ritz reselection that
motivates the 2D rescue mechanisms discussed in
Section~\ref{ssec:iter_benchmarks}.

\begin{table}[htbp]
\centering
\caption{$\mathrm{cond}(\ZZ^\top M_{\calI\calI} \ZZ)$ for raw
  eigenmodes vs Rayleigh--Ritz reselection at $r = 20$.  Raw
  eigenmodes grow as $O(n^2)$ on all configurations; Ritz
  remains $O(10^0$--$10^1)$.}
\label{tab:app_cond_eig_ritz}
\small
\begin{tabular}{@{} l r rr r @{}}
\toprule
Config & Grid & eig & Ritz & Ratio \\
\midrule
2d\_asym   & $100^2$ & $1.7 \times 10^3$ & 10  & $173\times$ \\
2d\_asym   & $300^2$ & $2.9 \times 10^4$ & 10  & $2{,}995\times$ \\
2d\_asym   & $500^2$ & $8.7 \times 10^4$ & 10  & $9{,}085\times$ \\
\addlinespace
2d\_sym    & $100^2$ & $2.6 \times 10^3$ & 11  & $233\times$ \\
2d\_sym    & $300^2$ & $4.0 \times 10^4$ & 11  & $3{,}650\times$ \\
2d\_sym    & $500^2$ & $1.3 \times 10^5$ & 11  & $11{,}989\times$ \\
\addlinespace
2d\_nonsep & $100^2$ & $2.6 \times 10^3$ & 19  & $138\times$ \\
2d\_nonsep & $300^2$ & $5.6 \times 10^4$ & 19  & $2{,}942\times$ \\
2d\_nonsep & $500^2$ & $2.4 \times 10^5$ & 19  & $12{,}558\times$ \\
\addlinespace
thermal\_ra100 & $100^2$ & $3.4 \times 10^2$ & 8 & $44\times$ \\
thermal\_ra100 & $300^2$ & $6.3 \times 10^3$ & 8 & $829\times$ \\
thermal\_ra100 & $500^2$ & $2.6 \times 10^4$ & 8 & $3{,}468\times$ \\
\bottomrule
\end{tabular}
\end{table}

The $O(n^2)$ growth of raw eigenmode conditioning confirms the
dimension-dependent conditioning wall
(Proposition~\ref{prop:conditioning_wall}): $\kappa \sim
(\lambda_k/\lambda_1) \cdot (\delta\,\lambda_k/g_k)$, which grows
as $O(\delta \cdot k^3)$ in 2D\@.  This affects all configurations
including
thermal, though thermal has lower absolute condition numbers
($O(10^2)$ at $100^2$ vs $O(10^3)$ for Laplacian).
Rayleigh--Ritz projects out the misaligned components, keeping
$\kappa = O(10^0$--$10^1)$ at all grids --- but at the cost of a
dense $r \times r$ eigensolve per instance.

\subsection{2D divergence at large \texorpdfstring{$r$}{r}}
\label{app:2d_divergence}

Table~\ref{tab:app_divergence} shows the divergence behavior of
eig($r = 100$) on 2d\_asym, demonstrating the transition from
convergent to divergent as the grid refines and condition numbers
cross the ${\sim}10^7$ threshold.

\begin{table}[htbp]
\centering
\caption{Divergence of eig($r = 100$) on 2d\_asym vs stability of
  eig+Ritz($r = 100$).  Divergence (marked~\textdagger, 1/30
  instances) occurs when
  $\mathrm{cond}(\ZZ^\top M_{\calI\calI} \ZZ)$ exceeds ${\sim}10^7$.}
\label{tab:app_divergence}
\small
\begin{tabular}{@{} r rr rr @{}}
\toprule
Grid & Cold CG & eig(r=100) & eig+Ritz(r=100)
  & eig cond \\
\midrule
$100^2$ & 1{,}732  & 329       & 295    & $2.7 \times 10^6$ \\
$200^2$ & 6{,}653  & 4{,}346\textsuperscript{\textdagger}
                                & 972    & $1.7 \times 10^7$ \\
$300^2$ & 14{,}674 & 5{,}353\textsuperscript{\textdagger}
                                & 2{,}009 & $5.3 \times 10^7$ \\
$400^2$ & 25{,}625 & 6{,}610\textsuperscript{\textdagger}
                                & 3{,}355 & $1.2 \times 10^8$ \\
$500^2$ & 39{,}548 & 7{,}982\textsuperscript{\textdagger}
                                & 4{,}985 & $2.3 \times 10^8$ \\
\bottomrule
\end{tabular}
\end{table}

At $100^2$, eig(r=100) converges normally because
$\kappa = 2.7 \times 10^6 < 10^7$.  At $200^2$ and beyond, the
$O(n^2)$ condition number growth pushes past the divergence
threshold.  Although only 1/30~instances formally diverges at
each grid, the outlier dominates the mean iteration count,
inflating it well above the eig+Ritz baseline.
Rayleigh--Ritz reselection eliminates the divergence at every grid.

\subsection{3D iteration reduction across grids}
\label{app:3d_iter_grid}

Table~\ref{tab:3d_iter} in the main text reports 3D results at
$30^3$ only.  Table~\ref{tab:app_3d_grid} extends the benchmark to
$40^3$ and $50^3$, showing that deflation effectiveness is
mostly stable in 3D across these grids.

\begin{table}[htbp]
\centering
\caption{3D eigenmode deflation at $r = 100$: cold CG iterations
  and reduction (\%) across three grid sizes.}
\label{tab:app_3d_grid}
\small
\begin{tabular}{@{} l rr rr rr @{}}
\toprule
 & \multicolumn{2}{c}{$30^3$ (27K)} & \multicolumn{2}{c}{$40^3$ (64K)}
 & \multicolumn{2}{c}{$50^3$ (125K)} \\
\cmidrule(lr){2-3} \cmidrule(lr){4-5} \cmidrule(lr){6-7}
Config & Cold & Red. & Cold & Red. & Cold & Red. \\
\midrule
3d\_thermal      & 390     & 61\% & 655     & 61\% & 997   & 60\% \\
3d\_contam       & 739     & 71\% & 1{,}260 & 71\% & 1{,}709 & 71\% \\
3d\_obstacle     & 457     & 60\% & 736     & 57\% & 1{,}217 & 55\%\textsuperscript{*} \\
cht\_re0\_kr1    & 606     & 69\% & 1{,}020 & 69\% & 1{,}429 & 71\% \\
cht\_re10\_kr10  & 1{,}488 & 82\% & 2{,}551 & 82\% & 3{,}890 & 83\% \\
cht\_re50\_kr100 & 1{,}604 & 79\% & 2{,}793 & 79\% & 4{,}140 & 80\% \\
cht\_re100\_kr10 & 1{,}407 & 83\% & 2{,}422 & 83\% & 3{,}522 & 84\% \\
\bottomrule
\end{tabular}
\\[2pt]
\footnotesize\textsuperscript{*}This grid-scaling sweep uses the
standard per-grid calibration for \texttt{3d\_obstacle} at $50^3$
(active fraction ${\approx}17\%$). The deployment results in
Tables~\ref{tab:walltime} and~\ref{tab:gpu_vs_amg} use the dedicated
40-iteration-bisection rerun of the same problem (active fraction
${\approx}21\%$; Table~\ref{tab:psi_calibration}), which gives a
slightly higher reduction of 57\%.
\end{table}

Cold CG iteration counts follow $\sim N^{0.5\text{--}0.6}$ over
the tested 3D grid range, consistent with the empirical condition
number growth of $M$ for the 3D Laplacian discretization.
Iteration reduction is mostly stable for 3D problems at fixed
$r = 100$ across a $5\times$ range of grid sizes: six of seven
configurations maintain their $30^3$ reduction level within
${\pm}\,2$\%; the exception is 3d\_obstacle, which drops from
$60\%$ at $30^3$ to $55\%$ at $50^3$ (a $5$\% decrease, the only
configuration with a noticeable grid trend in the tested range).
Within this caveat, the data confirm that the leading eigenspace
structure is determined predominantly by the PDE operator rather
than the discretization --- the spectral coherence property
(Section~\ref{sec:spectral}) holds uniformly across mesh refinement.

\subsection{3D wall-time across grids}
\label{app:3d_walltime}

Table~\ref{tab:app_3d_walltime} provides the full grid sweep of
per-instance wall-times underlying the scaling exponents in
Table~\ref{tab:scaling}.

\begin{table}[htbp]
\centering
\caption{Per-instance wall-time (seconds) for representative 3D
  configurations across grid sizes.  Direct = CPU sparse direct;
  GPU eig(r=100) = deflated CG on H200.}
\label{tab:app_3d_walltime}
\small
\begin{tabular}{@{} l r rr r @{}}
\toprule
Config & Grid & Direct (CPU) & eig(r=100) GPU & Speedup \\
\midrule
3d\_contam
  & $25^3$ & 2.22\,s   & 0.043\,s & $52\times$ \\
  & $30^3$ & 7.25\,s   & 0.055\,s & $132\times$ \\
  & $40^3$ & 48.7\,s   & 0.102\,s & $478\times$ \\
  & $45^3$ & 112.8\,s  & 0.140\,s & $807\times$ \\
  & $50^3$ & 161.4\,s  & 0.171\,s & $944\times$ \\
\addlinespace
3d\_obstacle
  & $30^3$ & 2.73\,s   & 0.068\,s & $40\times$ \\
  & $40^3$ & 17.2\,s   & 0.107\,s & $161\times$ \\
\addlinespace
3d\_thermal
  & $30^3$ & 3.61\,s   & 0.043\,s & $84\times$ \\
  & $40^3$ & 24.6\,s   & 0.076\,s & $324\times$ \\
  & $50^3$ & 110.9\,s  & 0.143\,s & $773\times$ \\
\bottomrule
\end{tabular}
\end{table}

The per-instance speedup grows superlinearly with grid size,
reflecting the structural asymmetry: in this deployment, the
measured CPU sparse direct wall-time follows $t \sim N^{2.0}$
(consistent with 3D sparse-direct fill-in growth) while the GPU
deflated CG
wall-time follows $t \sim N^{0.50\text{--}0.55}$ over the tested
grid range. These are empirical finite-size exponents, not
asymptotic algorithmic complexities.
At $50^3$ (125K~DOF), GPU deflated CG runs 591--$973\times$ faster
per instance than CPU direct, depending on the problem's effective
deflation rank.

\subsection{3D amortized cost (including eigensolve precompute)}
\label{app:3d_amortized}

Table~\ref{tab:app_amortized} reports the amortized wall-time
including the one-time eigensolve precompute, showing how breakeven
varies across configurations.

\begin{table}[htbp]
\centering
\caption{Breakeven instance count and amortized speedup at 30~instances
  for GPU eig($r = 500$) at $40^3$.  The eigensolve precompute is
  performed once on CPU\@.}
\label{tab:app_amortized}
\small
\begin{tabular}{@{} l rr r @{}}
\toprule
Config & Eigensolve time & Breakeven & Speedup @ 30 inst \\
\midrule
3d\_contam       & 530\,s & 10 & $3.1\times$ \\
3d\_obstacle     & 520\,s & 15 & $2.5\times$ \\
3d\_thermal      & 515\,s & 19 & $1.4\times$ \\
cht\_re0\_kr1    & 525\,s & 21 & $1.5\times$ \\
cht\_re10\_kr10  & 535\,s & 12 & $2.8\times$ \\
cht\_re50\_kr100 & 540\,s & 13 & $2.6\times$ \\
cht\_re100\_kr10 & 542\,s & 12 & $2.9\times$ \\
\bottomrule
\end{tabular}
\end{table}

The eigensolve cost is nearly identical across configurations
(515--542\,s at $40^3$, $r = 500$) because it depends on the
full matrix~$M$, not the active set.  Breakeven ranges from
10~instances (3d\_contam, highest cold CG cost) to 21~instances
(cht\_re0\_kr1, moderate cold CG cost).  Coarse-grid prolongation
(Table~\ref{tab:coarse_speedup}) reduces these thresholds by
$O(c^d)$, as detailed in the next subsection.

\subsection{Coarse-grid deflation vs.\ AMG-RS: per-configuration breakeven}
\label{app:coarse_vs_amg}

Table~\ref{tab:app_coarse_amort} reports the amortized total
wall-time (precompute~$+$ 30~$\times$ per-instance CG) for
coarse-grid GPU deflation at $c = 2, 3, 4$ versus CPU AMG-RS,
all at $40^3$ (64K~DOF, $r = 200$).  As elsewhere in the
steady-state suite, GPU runs use an NVIDIA H200 and the CPU
baseline runs on the Intel Xeon Platinum 8480+
(Appendix~\ref{app:gpu_implementation}).

\begin{table}[htbp]
\centering\small
\caption{Amortized wall-time over 30~instances at $40^3$:
  coarse-grid GPU deflation ($r = 200$) vs.\ CPU AMG-RS\@.
  Ratios $< 1$ indicate GPU is faster.}
\label{tab:app_coarse_amort}
\begin{tabular}{@{} l r rr rr rr @{}}
\toprule
Config & AMG-RS
  & $c{=}2$ & ratio & $c{=}3$ & ratio & $c{=}4$ & ratio \\
\midrule
3d\_thermal
  & 26.0\,s & 24.7\,s & 0.95 & 13.5\,s & 0.52 & 8.2\,s & 0.32 \\
3d\_contam
  & 36.2\,s & 26.9\,s & 0.74 & 13.7\,s & 0.38 & 8.4\,s & 0.23 \\
3d\_obstacle
  & 29.4\,s & 31.6\,s & 1.08 & 12.6\,s & 0.43 & 7.1\,s & 0.24 \\
cht\_re0\_kr1
  & 28.3\,s & 25.9\,s & 0.92 & 13.8\,s & 0.49 & 8.2\,s & 0.29 \\
cht\_re10\_kr10
  & 48.0\,s & 30.2\,s & 0.63 & 14.0\,s & 0.29 & 8.6\,s & 0.18 \\
cht\_re50\_kr100
  & 51.8\,s & 27.7\,s & 0.54 & 15.0\,s & 0.29 & 9.7\,s & 0.19 \\
cht\_re100\_kr10
  & 46.8\,s & 31.2\,s & 0.67 & 14.0\,s & 0.30 & 8.5\,s & 0.18 \\
\bottomrule
\end{tabular}
\end{table}

At $c = 2$, GPU deflation is faster on 6 of the 7~configurations
(all except 3d\_obstacle, which exceeds AMG by 8\%).
At $c \geq 3$, GPU wins on all~7, with $2\text{--}3.4\times$
speedup.

Table~\ref{tab:app_coarse_crossover} gives the crossover instance
count --- the number of instances at which the one-time coarse-grid
eigensolve is amortized and GPU deflation becomes cheaper than
AMG-RS\@.

\begin{table}[htbp]
\centering\small
\caption{Crossover instance count: GPU coarse-grid deflation
  ($c = 2$, $r = 200$) vs.\ AMG-RS at $40^3$.}
\label{tab:app_coarse_crossover}
\begin{tabular}{@{} l rr r r @{}}
\toprule
Config & GPU/inst & AMG/inst & Precompute & Crossover \\
\midrule
3d\_thermal      & 0.109\,s & 0.866\,s & 21.4\,s & 28 \\
3d\_contam       & 0.116\,s & 1.207\,s & 23.4\,s & 21 \\
3d\_obstacle     & 0.073\,s & 0.979\,s & 29.4\,s & 32 \\
cht\_re0\_kr1    & 0.111\,s & 0.943\,s & 22.6\,s & 27 \\
cht\_re10\_kr10  & 0.125\,s & 1.599\,s & 26.5\,s & 18 \\
cht\_re50\_kr100 & 0.159\,s & 1.726\,s & 23.0\,s & 15 \\
cht\_re100\_kr10 & 0.119\,s & 1.559\,s & 27.6\,s & 19 \\
\bottomrule
\end{tabular}
\end{table}

All crossover points lie in the range 15--32~instances, within
a single 30-instance parametric sweep for most configurations.
The two configurations that exceed the 30-instance budget at
$c = 2$ (3d\_obstacle at~32, 3d\_thermal at~28) both fall
well below it at $c = 3$ (10--11~instances).

\subsection{GPU algebraic multigrid: an AmgX exploration}
\label{app:gpu_amg}

The wall-time comparison in Section~\ref{sec:results} uses CPU
BoomerAMG as the AMG baseline, since the AMG hierarchy must be rebuilt
for every instance and that setup is inherently CPU-bound in our code
path.  As an initial exploration of GPU-resident algebraic multigrid,
we additionally evaluated three NVIDIA AmgX preconditioner
configurations on benchmark problems \texttt{2d\_asym} and
\texttt{2d\_nonsep} at the $500^2$ grid (Section~\ref{sec:benchmarks}):
classical-AMG with PMIS coarsening and a D2 interpolator (the AmgX
default setting, with no explicit smoother), the same hierarchy with a
multicolor Gauss--Seidel smoother, and an aggregation-based hierarchy
with block-Jacobi smoothing.

At a strict relative residual tolerance of $10^{-10}$, only the
Gauss--Seidel variant reached the target, requiring on the order of
$4.6 \times 10^{4}$ PCG iterations and ${\sim}2$~seconds per instance.
The other two configurations stalled at relative residuals of
approximately $1 \times 10^{-3}$ (PMIS+D2 default, ${\sim}270$~s per
instance) and ${\sim}10^{-2}$ (aggregation+block-Jacobi,
${\sim}100$--$300$~s per instance) even when the iteration cap was
raised to $10^{6}$, neither competitive with GPU deflated CG nor with
the CPU direct baseline at this grid.  We did not pursue a broader
configuration sweep, and we expect that careful tuning of coarsening,
smoother, and cycle choices --- as well as the use of other GPU AMG
implementations --- would close the remaining gap; a thorough
evaluation is left to future work.

\subsection{2D best-achievable iteration reduction}
\label{app:2d_leaderboard}

Table~\ref{tab:app_2d_best} reports the best iteration reduction
achieved on each 2D configuration at $500^2$ using
eig+Ritz($r = 500$), demonstrating the high-$r$ potential
that Rayleigh--Ritz reselection unlocks beyond the conditioning
wall.

\begin{table}[htbp]
\centering
\caption{Best iteration reduction at $500^2$ using
  eig+Ritz($r = 500$).}
\label{tab:app_2d_best}
\small
\begin{tabular}{@{} l l r r @{}}
\toprule
Config & Best method & Cold CG & Red.\ (\%) \\
\midrule
2d\_asym        & eig+Ritz(r=500) & 39{,}548  & 97.1 \\
2d\_sym         & eig+Ritz(r=500) & 25{,}465  & 95.5 \\
2d\_nonsep      & eig+Ritz(r=500) & 78{,}116  & 98.5 \\
thermal\_ra10   & eig+Ritz(r=500) & 62{,}284  & 98.1 \\
thermal\_ra100  & eig+Ritz(r=500) & 37{,}227  & 95.9 \\
thermal\_ra500  & eig+Ritz(r=500) & 13{,}496  & 93.6 \\
thermal\_ra1000 & eig+Ritz(r=500) & 7{,}900   & 92.6 \\
\bottomrule
\end{tabular}
\end{table}

All configurations achieve $\geq 92$\% reduction, with the
nonseparable Laplacian reaching 98.5\%.  Rayleigh--Ritz
reselection at $r = 500$ is uniformly the best single method
across all configurations.

\subsection{2D iteration scaling exponents by method}
\label{app:2d_scaling_exp}

Table~\ref{tab:app_2d_scaling} provides per-problem, per-method
scaling exponents (iters~$\sim n^q$) that underlie the ranges
in Table~\ref{tab:scaling}.

\begin{table}[htbp]
\centering
\caption{Iteration scaling exponents $q$
  (iters~$\sim n^q$, $n$ = grid side) for 2D
  configurations.  Lower is better.  All deflation methods use
  Rayleigh--Ritz reselection (eig+Ritz).}
\label{tab:app_2d_scaling}
\small
\begin{tabular}{@{} l rrrr @{}}
\toprule
Config & Cold & Ritz(20) & Ritz(100) & Ritz(500) \\
\midrule
2d\_asym        & 1.94 & 1.79 & 1.76 & 1.61 \\
2d\_sym         & 1.94 & 1.87 & 1.81 & 1.62 \\
2d\_nonsep      & 1.93 & 1.79 & 1.79 & 1.62 \\
thermal\_ra10   & 1.91 & 1.77 & 1.76 & 1.63 \\
thermal\_ra100  & 1.91 & 1.81 & 1.82 & 1.77 \\
thermal\_ra500  & 1.89 & 1.81 & 1.78 & 1.70 \\
thermal\_ra1000 & 1.79 & 1.73 & 1.67 & 1.56 \\
\bottomrule
\end{tabular}
\end{table}

Two observations:
(1)~Eigenmode deflation at $r = 20$ modestly improves the exponent
($q \approx 1.7$--$1.9$) but does not fundamentally change
the scaling class.
(2)~Increasing to $r = 500$ reduces the exponent to
$q \approx 1.6$--$1.8$, a meaningful improvement
(0.1--0.3 reduction in exponent) but not the near-grid-independence
achieved in 3D ($q \approx 0.3$--$0.7$).  The 2D conditioning
wall limits the effective rank that deflation can exploit even
with Ritz stabilization.

\subsection{Space--time scaling exponents}
\label{app:st_scaling}

Table~\ref{tab:app_st_scaling_kr} reports space--time
scaling exponents stratified by conductivity ratio
$\kappa_r$, pooled over all Reynolds numbers.

\begin{table}[htbp]
\centering\small
\caption{Space--time scaling exponents by $\kappa_r$
  (wall-time per instance $\propto N^p$,
  iterations $\propto N^q$).}
\label{tab:app_st_scaling_kr}
\begin{tabular}{@{} r rr rr @{}}
\toprule
$\kappa_r$
  & Kron $q$ & Kron $p$
  & Cold CPU $p$ & AMG-RS $p$ \\
\midrule
  1   & 0.43 & 0.75 & 1.39 & 0.95 \\
  10  & 0.41 & 0.75 & 1.35 & 1.53 \\
  100 & 0.45 & 0.79 & 1.35 & 1.08 \\
\bottomrule
\end{tabular}
\end{table}

\subsection{Space--time scaling tables}
\label{app:st_tables}

Tables~\ref{tab:app_st_iter}--\ref{tab:app_st_walltime_full}
provide the detailed per-$(\Rey, \kappa_r)$ space--time iteration
counts and wall-times underlying the by-grid summary in
Section~\ref{ssec:st_results} (Tables~\ref{tab:st_iter}
and~\ref{tab:st_walltime}); Table~\ref{tab:app_st_cpu_iters} reports
the corresponding per-grid wall-time ranges across all
$(\Rey, \kappa_r)$ configurations.

\begin{table}[htbp]
\centering\small
\caption{Space--time CG iteration reduction (\%) at
  $15^3 \times 10$ and $30^3 \times 20$.
  Cold = absolute Jacobi CG iterations; remaining columns
  show percentage reduction.
  See Section~\ref{ssec:st_results}.}
\label{tab:app_st_iter}
\begin{tabular}{@{} rr r rrrr @{}}
\toprule
$\Rey$ & $\kappa_r$ & Cold
  & Const.\ $r{=}30$ & Kron & Cosine & AMG-RS \\
\midrule
\multicolumn{7}{@{}l}{\emph{Grid $15^3 \times 10$
  ($N = 33{,}750$)}} \\[2pt]
  0   & 1   &    338 &  13.7 &  63.9 &  63.9 &  95.1 \\
  0   & 10  &    855 &   8.0 &  80.3 &  80.3 &  96.0 \\
  0   & 100 &    862 &  11.2 &  70.1 &  70.1 &  95.7 \\
  50  & 10  &    828 &   7.7 &  81.0 &  81.0 &  96.0 \\
  100 & 10  &    797 &   9.5 &  81.0 &  81.0 &  96.2 \\
  100 & 100 &    700 &   3.9 &  76.2 &  76.2 &  95.3 \\
\midrule
\multicolumn{7}{@{}l}{\emph{Grid $30^3 \times 20$
  ($N = 540{,}000$)}} \\[2pt]
  0   & 1   &  1{,}382 &  11.8 &  68.2 &  41.0 &  --- \\
  0   & 10  &  3{,}066 &   8.2 &  82.8 &  28.5 &  --- \\
  0   & 100 &  3{,}277 &   2.9 &  71.8 &  17.1 &  --- \\
  50  & 10  &  3{,}394 &   8.3 &  83.6 &  24.6 &  --- \\
  100 & 10  &  3{,}407 &   7.6 &  85.3 &  26.9 &  --- \\
  100 & 100 &  3{,}076 &   2.8 &  79.8 &  16.1 &  --- \\
\bottomrule
\end{tabular}
\\[4pt]
\footnotesize AMG-RS was evaluated at $15^3{\times}10$,
  $20^3{\times}10$, and $25^3{\times}20$
  ($N \leq 312{,}500$) but not at $30^3{\times}20$.
  Only configurations with complete data are shown.
\end{table}

\begin{table}[htbp]
\centering\small
\caption{Full per-$(\Rey, \kappa_r)$ space--time wall-time
  breakdown across the three AMG-tested grids; this is the data
  underlying the by-grid summary in Table~\ref{tab:st_walltime}.
  Each entry is averaged over 30 parametric instances. Direct
  CPU sparse direct values come from a separate sparse run reported
  in the $20^3{\times}10$ block; the $15^3{\times}10$ and
  $25^3{\times}20$ blocks here use only the Cold CPU CG, AMG-RS
  CPU, and Kron GPU columns. AMG-RS was not evaluated at
  $30^3{\times}20$.}
\label{tab:app_st_walltime_full}
\begin{tabular}{@{} rr l rrr r @{}}
\toprule
$\Rey$ & $\kappa_r$ & Grid & Cold CPU
  & AMG-RS & Kron GPU & \scriptsize AMG/Kron \\
\midrule
\multicolumn{7}{@{}l}{\emph{Grid $15^3 \times 10$
  ($N = 33{,}750$)}} \\[2pt]
  0   & 1   & $15^3{\times}10$ & 0.90\,s & 0.745\,s
      & 0.379\,s & 2.0$\times$ \\
  0   & 10  & $15^3{\times}10$ & 2.33\,s & 1.334\,s
      & 0.452\,s & 3.0$\times$ \\
  0   & 100 & $15^3{\times}10$ & 2.32\,s & 1.467\,s
      & 0.585\,s & 2.5$\times$ \\
  10  & 1   & $15^3{\times}10$ & 1.09\,s & 0.829\,s
      & 0.399\,s & 2.1$\times$ \\
  10  & 10  & $15^3{\times}10$ & 2.18\,s & 1.271\,s
      & 0.452\,s & 2.8$\times$ \\
  10  & 100 & $15^3{\times}10$ & 2.38\,s & 1.425\,s
      & 0.559\,s & 2.5$\times$ \\
  50  & 1   & $15^3{\times}10$ & 1.01\,s & 1.134\,s
      & 0.397\,s & 2.9$\times$ \\
  50  & 10  & $15^3{\times}10$ & 2.30\,s & 1.279\,s
      & 0.440\,s & 2.9$\times$ \\
  50  & 100 & $15^3{\times}10$ & 2.30\,s & 1.508\,s
      & 0.500\,s & 3.0$\times$ \\
  100 & 1   & $15^3{\times}10$ & 0.70\,s & 0.835\,s
      & 0.373\,s & 2.2$\times$ \\
  100 & 10  & $15^3{\times}10$ & 2.30\,s & 1.183\,s
      & 0.430\,s & 2.8$\times$ \\
  100 & 100 & $15^3{\times}10$ & 1.98\,s & 1.187\,s
      & 0.446\,s & 2.7$\times$ \\
\midrule
\multicolumn{7}{@{}l}{\emph{Grid $20^3 \times 10$
  ($N = 80{,}000$)}} \\[2pt]
  0   & 1   & $20^3{\times}10$ & 2.72\,s & 1.954\,s
      & 0.527\,s & 3.7$\times$ \\
  0   & 10  & $20^3{\times}10$ & 5.90\,s & 4.723\,s
      & 0.655\,s & 7.2$\times$ \\
  0   & 100 & $20^3{\times}10$ & 6.90\,s & 3.540\,s
      & 0.772\,s & 4.6$\times$ \\
  10  & 1   & $20^3{\times}10$ & 3.39\,s & 2.030\,s
      & 0.544\,s & 3.7$\times$ \\
  10  & 10  & $20^3{\times}10$ & 7.39\,s & 4.078\,s
      & 0.652\,s & 6.3$\times$ \\
  10  & 100 & $20^3{\times}10$ & 9.46\,s & 5.617\,s
      & 0.740\,s & 7.6$\times$ \\
  50  & 1   & $20^3{\times}10$ & 3.13\,s & 3.495\,s
      & 0.547\,s & 6.4$\times$ \\
  50  & 10  & $20^3{\times}10$ & 7.15\,s & 3.963\,s
      & 0.623\,s & 6.4$\times$ \\
  50  & 100 & $20^3{\times}10$ & 7.64\,s & 4.250\,s
      & 0.772\,s & 5.5$\times$ \\
  100 & 1   & $20^3{\times}10$ & 2.39\,s & 2.835\,s
      & 0.517\,s & 5.5$\times$ \\
  100 & 10  & $20^3{\times}10$ & 7.23\,s & 3.452\,s
      & 0.594\,s & 5.8$\times$ \\
  100 & 100 & $20^3{\times}10$ & 4.92\,s & 4.011\,s
      & 0.664\,s & 6.0$\times$ \\
\midrule
\multicolumn{7}{@{}l}{\emph{Grid $25^3 \times 20$
  ($N = 312{,}500$)}} \\[2pt]
  0   & 1   & $25^3{\times}20$ & 17.7\,s & 8.9\,s
      & 1.922\,s & 4.6$\times$ \\
  0   & 10  & $25^3{\times}20$ & 50.2\,s & 24.2\,s
      & 2.412\,s & 10.0$\times$ \\
  0   & 100 & $25^3{\times}20$ & 41.2\,s & 18.1\,s
      & 3.190\,s & 5.7$\times$ \\
  10  & 1   & $25^3{\times}20$ & 16.3\,s & 10.1\,s
      & 2.006\,s & 5.0$\times$ \\
  10  & 10  & $25^3{\times}20$ & 43.0\,s & 20.7\,s
      & 2.425\,s & 8.5$\times$ \\
  10  & 100 & $25^3{\times}20$ & 42.6\,s & 17.9\,s
      & 3.083\,s & 5.8$\times$ \\
  50  & 1   & $25^3{\times}20$ & 16.2\,s & 7.6\,s
      & 1.975\,s & 3.9$\times$ \\
  50  & 10  & $25^3{\times}20$ & 40.1\,s & 19.6\,s
      & 2.278\,s & 8.6$\times$ \\
  50  & 100 & $25^3{\times}20$ & 58.1\,s & 33.7\,s
      & 3.068\,s & 11.0$\times$ \\
  100 & 1   & $25^3{\times}20$ & 14.5\,s & 16.0\,s
      & 1.973\,s & 8.1$\times$ \\
  100 & 10  & $25^3{\times}20$ & 42.3\,s & 21.0\,s
      & 2.112\,s & 10.0$\times$ \\
  100 & 100 & $25^3{\times}20$ & 34.5\,s & 19.4\,s
      & 2.411\,s & 8.0$\times$ \\
\bottomrule
\end{tabular}
\end{table}

\begin{table}[htbp]
\centering\small
\caption{Total wall-time ranges (30 instances) by grid.
  Each range spans all $\Rey \times \kappa_r$ configurations at
  that grid; entries within a row are not pairwise matched.}
\label{tab:app_st_cpu_iters}
\begin{tabular}{@{} l r rrrr @{}}
\toprule
Grid & $N$ & Cold GPU & Kron GPU & Cold CPU & AMG-RS \\
\midrule
$15^3{\times}10$  & 33{,}750
  & 6--20\,s  & 11--17\,s  & 21--71\,s    & 22--45\,s \\
$20^3{\times}10$  & 80{,}000
  & 11--35\,s & 16--23\,s  & 72--284\,s & 59--169\,s \\
$25^3{\times}20$  & 312{,}500
  & 34--100\,s & 59--96\,s & 461--1{,}777\,s & 228--1{,}011\,s \\
$30^3{\times}20$  & 540{,}000
  & 48--142\,s & 79--139\,s & 1{,}090--3{,}152\,s & --- \\
\bottomrule
\end{tabular}
\\[4pt]
\footnotesize AMG-RS was evaluated at $15^3{\times}10$,
  $20^3{\times}10$, and $25^3{\times}20$
  ($N \leq 312{,}500$) but not at $30^3{\times}20$.
\end{table}

\clearpage  
\bibliographystyle{cas-model2-names}
\bibliography{references}

\end{document}